%% file: 0-main.tex
\documentclass[11pt]{article}
\usepackage[letterpaper,margin=1in]{geometry}

\input{packages}

\input{header}

\bibliography{./main}

\title{The Knapsack Secretary Problem is Strictly Harder  \\ Than the Secretary Problem}

\author{Anonymous Author(s)}

\author{%
    \begin{tabular}{c c}
        \begin{tabular}{c}
            Eric Balkanski\\
            Columbia University\\
            \texttt{eb3224@columbia.edu}
        \end{tabular}
        &
        \begin{tabular}{c}
            Jason Chatzitheodorou\\
            Columbia University\\
            \texttt{ic2621@columbia.edu}
        \end{tabular}
        \\ \\
        \begin{tabular}{c}
            Dimitris Fotakis\\
            NTUA and Archimedes, Athena RC\\
            \texttt{fotakis@cs.ntua.gr}
        \end{tabular}   
        &
        \begin{tabular}{c}
            Thanos Tolias\\
            NTUA and Archimedes, Athena RC\\
            \texttt{thanostolias@mail.ntua.gr}
        \end{tabular}     
    \end{tabular}
}

\date{}

\begin{document}
\maketitle

\begin{abstract}
The knapsack secretary problem is a generalization of  the classical secretary problem where the accepted items must satisfy a knapsack constraint. A line of work has developed constant-competitive algorithms for this problem, with successive improvements culminating in the current best-known competitive ratio of $0.153$. A natural open question was whether the optimal $1/e$ competitive ratio for the classical secretary problem is also achievable for the knapsack secretary problem.

We answer this question negatively by showing that no $(1/e - 0.0001)$-competitive algorithm exists for the knapsack secretary problem. The analysis of the family of hard instances we construct proceeds in three steps. First, we reduce the cardinal problem on these instances to an almost-ordinal problem. Second, we formulate a linear program that captures the performance of almost-ordinal algorithms on this instance family. Finally, we exhibit a feasible dual solution whose objective value is strictly below $1/e$. We also give an algorithm that improves the best-known competitive ratio from $0.153$ to $0.178$.

\end{abstract}

\thispagestyle{empty} 
\clearpage
\setcounter{page}{1}

\input{./1-Introduction}

\input{./2-Preliminaries}

\input{./3-UpperBound}

\input{./4-Algorithm}

\section*{Acknowledgements}

We gratefully acknowledge helpful discussions with Kevin Schewior.

\newpage

\printbibliography

\newpage

\appendix
\crefalias{section}{appendix}
\crefalias{subsection}{appendix}
\crefalias{subsubsection}{appendix}
\crefname{appendix}{Appendix}{Appendices}
\Crefname{appendix}{Appendix}{Appendices}

\input{./A-AbelsImpossibility}

\input{./B-MissingProofsUpperBound}
\input{./C-MissingProofsAlgorithm}

\end{document}

%% file: packages.tex
\usepackage{graphicx} 

\usepackage{amsmath, amsfonts, amssymb, amsthm, bbm}
\usepackage{easyReview}
\usepackage[
  backend=biber,
  natbib=true,
  maxbibnames=99,
  maxcitenames=99
]{biblatex}
\usepackage[autostyle, english = american]{csquotes}
\MakeOuterQuote{"}
\usepackage{mathpazo}
\usepackage{booktabs}
\usepackage[linesnumbered,ruled,vlined,noend]{algorithm2e}
\usepackage[dvipsnames]{xcolor}

\definecolor{myRed}{rgb}{0.82,0.13,0.56}
\definecolor{myBlue}{RGB}{13,55,174}
\usepackage[colorlinks=true,citecolor=myRed,linkcolor=myBlue,urlcolor=PineGreen]{hyperref}

\usepackage{thmtools}
\usepackage{thm-restate}
\usepackage[noabbrev,capitalise]{cleveref}

\usepackage{todonotes}
\usepackage{comment}

%% file: header.tex
\declaretheorem[name=Theorem]{theorem}
\declaretheorem[name=Lemma, sibling=theorem]{lemma}

\declaretheorem[name=Corollary, sibling=theorem]{corollary}
\declaretheorem[name=Remark, sibling=theorem]{remark}
\declaretheorem[name=Definition, sibling=theorem]{definition}

\SetAlFnt{\small}
\SetAlCapFnt{\small}
\SetAlCapNameFnt{\small}
\SetAlCapHSkip{0pt}
\IncMargin{-\parindent}

\newcommand{\opt}{\textnormal{\textsc{Opt}}}
\newcommand{\alg}{\textnormal{\textsc{Alg}}}
\newcommand{\ordalg}{\textnormal{\textsc{Alg}}_o}
\newcommand{\cardalg}{\textnormal{\textsc{Alg}}_c}
\newcommand{\cardfam}{\mathcal{A}_c}
\newcommand{\ordfam}{\mathcal{A}_o}
\newcommand{\simu}{\textnormal{\textsc{Sim}}}
\newcommand{\first}{\textnormal{\textsc{First}}}

\newcommand{\lp}{\left} 
\newcommand{\rp}{\right}

\newcommand{\ebcomment}[1]{\textcolor{blue}{Eric: #1}}

\newcommand{\cD}{\mathcal{D}}

\newcommand{\cF}{\mathcal{F}}

\newcommand{\cI}{\mathcal{I}}

\newcommand{\cR}{\mathcal{R}}

\newcommand{\cT}{\mathcal{T}}

\newcommand{\largebinom}[2]{\left( \genfrac{}{}{0pt}{}{#1}{#2} \right)}

\newcommand{\E}[2]{\underset{#1\!}{\mathbb{E}}\lp[#2\rp]}

\newcommand{\Prob}[1]{\Pr\lp[#1\rp]} 

%% file: 1-Introduction.tex
\section{Introduction}

The secretary problem is a fundamental problem in computer science with a long and rich history. Its central role in online and approximation algorithms is reflected in the many lines of work studying the problem and its variants, including extensions to matroid constraints~\cite{babaioff2007matroids}, knapsack constraints~\cite{Babaioff07}, submodular value functions~\cite{bateni2013submodular}, stable matchings~\cite{babichenko2019stable}, prophet inequalities~\cite{azar2014prophet,esfandiari2017prophet}, non-uniform arrivals~\cite{kesselheim2015secretary}, pricing~\cite{cohen2014pricing}, online linear programming~\cite{agrawal2014dynamic}, and online ad allocation~\cite{feldman2010online}. While optimal competitive ratios have been established for many variants, determining the optimal competitive ratio remains a central open problem for several others. One prominent example is the strong matroid secretary conjecture, a landmark open problem in online algorithms asserting that there exists a $1/e$-competitive algorithm for matroid secretary, matching the optimal guarantee for the classical secretary problem.

Another important family of feasibility constraints for  which it is unknown if a $1/e$-competitive algorithm exists is knapsack constraints. In the knapsack secretary problem, there are  $n$ items, each with a value and a size, that arrive in a random order, as well as a knapsack capacity. Upon arrival of an item, its value and size are revealed and the algorithm must make the irrevocable decision of either accepting this item or rejecting it. The goal is to maximize the total value of accepted items, under the constraint that their total size is at most the knapsack capacity.

The problem was introduced by~\citet{Babaioff07}, who gave the first constant-competitive algorithm, with competitive ratio $1/(10e)$. This guarantee was subsequently improved to $1/8.06$ by~\citet{Kesselheim14}, who obtained this ratio for the online generalized assignment problem (GAP) in the random-order model using their framework for online packing LPs; to $1/6.65$ by~\citet{albers2020}; and to the current best guarantee of $(1-\ln 2)/2 \approx 1/6.52$ by~\citet{klimm2025}, whose online GAP algorithm also applies to knapsack secretary as a special case. Moreover, \citet{klimm2025} proved a $1/e$-competitive ratio for the fractional knapsack secretary problem and that such a competitive ratio is the best possible for deterministic algorithms.

 Despite this progress, the best-known impossibility for knapsack secretary, and more generally online GAP in the random order model, remains the $1/e$ impossibility inherited from the classical secretary problem. For the $1$--$2$-knapsack problem, which is the special case where items have sizes $1$ or $2$ and the capacity is $2$, a $1/e$-competitive ratio has been achieved by~\citet{abels2022knapsack}, who noted that, for the general knapsack secretary problem, no impossibility beyond $1/e$ had been found and that their result left open the possibility that the same ratio might also be attainable.

\begin{center}
   \emph{Is there a $1/e$-competitive algorithm for the knapsack secretary problem?}
\end{center}

Our main result is a negative answer to this question.

\begin{restatable}{theorem}{TheoremOne}
\label{thm: Theorem 1}
    There is no $(1/e - 0.0001)$-competitive algorithm for the knapsack secretary problem. 
\end{restatable}

This information-theoretic impossibility holds for both deterministic and randomized algorithms. It is also, to the best of our knowledge, the first to rule out a $1/e$-competitive algorithm for online GAP in the random order model.  We  note that \citet{abels2022knapsack} show that value-ordinal algorithms, which  are algorithms that do not observe the cardinal values of the items but instead
only observe their relative order, cannot achieve a competitive ratio better than $1/(e+1) + o(1)$. We provide a further discussion of this result and how its analysis compares to ours in Appendix~\ref{app:previouswork}. The analysis of our family of hard instances proceeds in three steps. First, we reduce the cardinal problem on these instances to an ordinal problem where the algorithm only observes the relative order of high-value items instead of their cardinal values. Second, we formulate a linear program that captures the performance of ordinal algorithms on our   family of hard instances. Finally, we exhibit a dual solution with objective value strictly below $1/e$, and verify its feasibility computationally.

 In addition, we also improve the best-known competitive ratio for the problem.

\begin{theorem}
    There is a polynomial-time $1/5.59$-competitive algorithm for knapsack secretary. 
\end{theorem}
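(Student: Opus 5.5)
The plan is a modular algorithm that combines a few constant-competitive subroutines and runs one of them at random. I will split items into \emph{large} items, those of size more than $1/2$ of the capacity, and \emph{small} items, of size at most $1/2$. Correspondingly, $\opt$ splits into $\opt_L$, the value of large items in the optimal solution (at most one such item), and $\opt_S$, the value of its small items, so that $\opt \le \opt_L + \opt_S$. Any two large items are mutually infeasible, so the large-item subproblem is essentially a classical secretary problem. Running the classical $1/e$ rule on large items only therefore gives expected value at least $\opt_L/e$.

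For small items I would use a sampling-then-fractional approach.
\begin{itemize}
\item Observe the first fraction $c$ of arrivals.
\item Afterwards, at each arrival time $t$, solve the fractional knapsack LP on the items seen so far, with capacity scaled by $t/n$.
\item Accept the current small item with probability equal to its fractional LP value whenever it fits in the remaining capacity. This is in the spirit of the $1/e$ fractional algorithm of \citet{klimm2025} and Kesselheim et al.'s online packing framework.
\end{itemize}
The analysis has three steps. First, a random-order argument shows that the expected LP contribution of the item arriving at time $t$ is at least $\opt_S/n$. Second, since all these items have size at most $1/2$, the probability that an item fails to fit is bounded using the expected load, via a Markov-type bound on the consumed capacity. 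Third, integrating over $t\in[c,1]$ gives a guarantee of the form $\beta(c)\,\opt_S$.

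To combine the two, I would not simply randomize between the subroutines. Instead I would use one shared sampling phase and then process both item types simultaneously: a large item is accepted only if it beats the sample threshold and no small item has yet filled more than half the knapsack, and symmetrically for small items. This keeps both guarantees partially intact. I would then optimize the sampling fraction and the mixing probabilities so that $\min$ over the split $\opt_L+\opt_S$ gives $1/5.59$. Polynomial time is clear, since each step solves a fractional knapsack LP, which a greedy ratio sort handles.

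The main obstacle is the interaction term: bounding the probability that a large item is blocked by earlier accepted small items, and conversely. I would first bound the expected capacity used by small items by $t-c$ times a constant, and use that a large item needs at most half of the remaining capacity only if small load is at most $1/2$. If that bound is too lossy to reach $1/5.59$, the fallback is to set the constants numerically and verify the resulting one-dimensional optimization by computer.
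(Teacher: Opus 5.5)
\textbf{There is a genuine gap, and it is not a matter of lossy constants: the simultaneous-processing step fails outright.} A large item of size $s\in(1/2,1]$ fits only if the current load is at most $1-s$. For $s=1$ that load is $0$, so a condition like ``small load at most $1/2$'' protects nothing, and no Markov bound on the small load can help.

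Concretely, take one item of size $1$ and value $V\gg n$, plus $n-1$ items of size $1/n$ and value $1$.
\begin{itemize}
\item If the big item lands in your sample, it is lost.
\item If it lands after time $c$, the fractional LP on the small items seen so far puts (essentially) full weight on them. So small items are accepted as soon as Phase~$c$ ends, and with high probability the big item no longer fits when it arrives.
\end{itemize}
Your algorithm therefore earns at most $n$ against $\opt=V$.

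Any repair must reserve a window in which small items are \emph{not} accepted. The paper does this by separating the two goals in time. Phase II is a pure value-record rule over \emph{all} items on $(a,b]$, and it stops after one acceptance. The small-item phase starts at $b$ only if nothing was selected. The paper proves this happens with probability exactly $a/b$ for every realized prefix set; the coin flip when the prefix is empty exists for this purpose. Consequently the suffix processed by Phase III is unbiased.

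\textbf{Even after such a repair, the ingredients you list stall well short of $1/5.59$.}
\begin{itemize}
\item Randomizing between a $1/e$ secretary on large items and a Kesselheim-style small-item rule gives about $1/8$. The small-item rule's Markov analysis yields $\max_c[3(1-c)-2\ln(1/c)]\approx 0.189$, attained at $c=2/3$.
\item The sequential version, using your benchmark $\opt_L+\opt_S$ and the uniform ``size at most $1/2$'' Markov bound, gives $\min\{a\ln(b/a),\tfrac ab(3(1-b)-2\ln\tfrac1b)\}\approx 0.143$. This is exactly the paper's warm-up bound.
\end{itemize}
The binding case is $DG(1)$ consisting of items of size close to $1/2$, and nothing in your plan pays for their shortfall.

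The paper reaches $0.1789$ only through three further ideas:
\begin{itemize}
\item[(i)] The decomposition $\opt\le v(DG(1))+v_z$, where $z$ is the most valuable item outside $DG(1)$.
\item[(ii)] The size-sensitive Phase-III selectability $q_s(a,b)$. With $\rho=q_0(a,b)$, its deficit $\delta(s)$ is convex with $\delta(0)=0$, so the deficits over $DG(1)$ sum to at most $2\delta(1/2)$.
\item[(iii)] The observation that Phase II also selects the $j$-th ranked item with probability at least $P_j(a,b)$. A summation-by-parts (majorization) argument uses these probabilities to pay for $\rho v_z$ plus all the Phase-III deficits. This is why the secretary phase must run over all items, not only large ones.
\end{itemize}
Your fallback of tuning constants numerically cannot substitute for this charging argument.
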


This $1/5.59$ competitive ratio improves over the previous best competitive ratio of $1/6.52$ \cite{klimm2025}. Our positive result is obtained by a three-phase algorithm that first targets high-value items through a secretary-like value-record rule and, if no such item is selected, switches to a dynamic density-greedy rule for small items. Our dynamic density-greedy algorithm bears a similarity to the virtual algorithm of~\citet{Babaioff07} and the fractional algorithm of~\citet{klimm2025}, while our analysis is fundamentally different.  For the analysis, we  combine itemwise selectability with a carefully formulated charging argument. We expand on this approach in the technical overview below.

\paragraph{Concurrent work.} Independently, \citet{garbea2026} also show that no $1/e$-competitive algorithm exists for the knapsack secretary problem. More precisely, they prove that no $(1/e - 0.0035)$-competitive algorithm exists.

\subsection{Technical Overview} 

\noindent\textbf{The Impossibility Result.} 
We define a class of instances with a unit size knapsack and two kinds of items. The \emph{small items} have size $1/n$ and value $1$ and the \emph{large items} have size \(1\) and much larger values. The values of the large items are chosen to satisfy two properties: (i)~the value of every large item is  much larger than $n$ (the maximum total value collected by small items), so selecting the best large item is overwhelmingly more important than collecting small items; and (ii)~the values of different large items are separated by (extremely large) exponential gaps. 

For this family of instances, the competitive ratio of any algorithm takes a simple form and is  equal to the minimum of the following (up to negligible lower-order terms): for the all-small instance, the fraction of small items accepted by the algorithm; for any other instance, the probability that the highest-value large item is selected. Thus, our instances force an algorithm to balance two opposing requirements: it should start accepting small items early enough to be competitive on the all-small instance, but doing so may prevent it from accepting a large item that appears later.

A key observation is that for the family of instances above, the optimal algorithm should behave essentially as an ordinal algorithm. The first part of the proof, in Section~\ref{sec: reduction}, formalizes this intuition by upper bounding the competitive ratio for the \emph{cardinal objective} of maximizing the expected total value of items accepted by the algorithm by the \emph{ordinal benchmark} of maximizing the probability that the highest-value large item is accepted (with a small tweak concerning the all-small instance, plus lower-order terms). 

Bounding the original cardinal objective from above by the ordinal benchmark is based on the extreme separation of the large item values. By the value separation properties (i) and (ii) above, the precise magnitudes of the values of the large items matter only through their rank: in instances with large items, we have to select the best large item. So, we may replace the cardinal objective by an ordinal one, losing only negligible error terms. The more delicate step is to argue that cardinal algorithms do not gain any advantage over ordinal algorithms for this ordinal benchmark. A cardinal algorithm observes the actual values of previously revealed items, and these values could in principle leak information about the number of large items in the instance or about the rank of future large items. 
We build on the framework of~\citet{gravin2023onlineordinalproblemsoptimality}, using an adaptation of their order-statistics-indistinguishable (OSI) distributions. Specifically, we show (Lemma~\ref{lem: instances existence of osi distribution}) that there is a set of values satisfying our value separation properties, together with a distribution over these values, that is order-statistics-indistinguishable. Using this property, we show (Lemma~\ref{lem: instances reduction to ordinal algorithms}) that for every cardinal algorithm, there is an ordinal algorithm (Algorithm~\ref{alg: almost ordinal simulaiton}), with sample access to the distribution and black-box access to the cardinal algorithm, whose performance against the ordinal benchmark differs by a negligible term from the competitive ratio of the cardinal algorithm.

Having this result, we can focus on ordinal algorithms evaluated by the ordinal benchmark for the particular family of instances. The state of such an algorithm is described by the total number of items observed so far, the number of large items observed so far, and whether the current large item is best among the large items seen so far. 
Intuitively, a deterministic ordinal algorithm for this class of instances should behave roughly as follows. 

\begin{enumerate}
    \item If no large item is encountered among the first $q n$ items, the algorithm accepts the remaining small items for a competitive ratio of $1-q$, if the instance is all-small, and $0$ otherwise. Assuming that the instance contains $h \geq 1$ large items, the probability of the latter is $q^h$.  

    \item If a large item is encountered no later than the first $q n$ items are considered, the algorithm should switch to an ordinal algorithm for a secretary problem with an unknown number $h \geq 1$ of large items and uniform arrival times. 
\end{enumerate}

For any fixed value of $h \geq 1$, the strategy above results in a competitive ratio of at most $\min\{ 1-q, (1-q^h)\cdot \mathrm{ord}(h) \}$, where $\mathrm{ord}(h)$ is the best possible competitive ratio of the secretary problem with an unknown number $h$ of items (see \cite[Table~I]{stewart81}). Solving this optimally, we get an upper bound that tends to $1/e$ from above as $h$ increases and does not give any improvement.

However, this high-level intuition does not account for the extra pressure put to such an algorithm towards accepting the present large item, if it is best and appears relatively late in the sequence. For a precise analysis, we formulate a linear program (LP) that captures the behaviour of all ordinal algorithms on our family of instances. This follows an LP-based approach that has previously appeared in the study of secretary problems: the decisions available to an algorithm are encoded as LP variables, and the dual is used to certify upper bounds on the competitive ratio of any algorithm; see, e.g., \citet{BuchbinderLP14} and subsequent applications of factor-revealing LPs to secretary-type settings~\citep{chan2014revealing,correa2024sample,DuttingLP21,BalkanskiLP24,Abels_2025}. 

We use two types of LP variables to encode the decisions of any ordinal algorithm in our setting (see also Table~\ref{tab:primal_dual}). The variables $\alpha_{t,j}$ describe the probability of accepting a best-so-far large item after observing $t$ items in total and $j+1$ large items among them, conditional on the algorithm not having committed earlier. The variables $\beta_t$ describe the probability of switching to the small-item strategy after observing $t$ items and no large items among them. The constraints enforce feasibility of the induced strategy and require the same competitive ratio against every instance in the family. 

There exists an optimal ordinal algorithm that induces a feasible solution to this LP, so the LP optimum is an upper bound on the best competitive ratio achievable by any ordinal algorithm. 
We analyze this LP through its dual. For a fixed finite value of $n$, we construct an explicit feasible dual solution whose objective value is strictly below $1/e$, with enough slack to absorb the error terms introduced by the cardinal-to-ordinal difference. By weak duality, the primal optimum (and the best possible competitive ratio) is strictly below $1/e$, which holds also for cardinal algorithms.


\smallskip\noindent\textbf{Improving the Lower Bound.} 
Our positive result is motivated by the two different ways in which a knapsack optimum can obtain value. In some instances, the optimum is essentially determined by a single very valuable item and the problem resembles the classical secretary problem.  In other instances, the optimum is obtained by packing many smaller items, and density becomes the relevant criterion. We call
an item small if its size is at most \(1/2\). The analysis formalizes this
dichotomy through the following upper bound (Lemma~\ref{lem:decomp}):
\[
    \opt \le v(DG(1)) + v_z\,,
\]
where \(DG(1)\) is the final density-greedy solution on small items and \(z\) is the most valuable item not in \(DG(1)\). The algorithm obtains value from these two terms separately: an intermediate secretary-like phase targets the high-value item, while a final density-greedy phase targets the items in \(DG(1)\).
For the algorithm and its analysis we adopt the equivalent continuous-time formulation, where each item $i$ draws its arrival time \(t_i\) independently and uniformly from \([0,1]\). 

The algorithm (see also Algorithm~\ref{alg:positive}) has three phases, defined by the corresponding times $a, b \in [0,1]$, with $a < b$. The first phase is exploratory: the algorithm rejects all arriving items and uses them only to calibrate later decisions. The second phase uses the initial sample as a value threshold and accepts the first item with larger value, if such an item appears. If no item is accepted in this phase, the algorithm enters the final phase. In the final phase, when a small item \(i\) arrives at time \(t_i\), the algorithm recomputes the density-greedy solution \(DG(t_i)\) on the small items observed so far. If \(i\in DG(t_i)\), the algorithm attempts to accept it. Item $i \in DG(t_i)$ is accepted only if it fits in the remaining capacity. The algorithm's acceptance rule is tailored to the reference solution $DG(1)$: if \(i\in DG(1)\), then no matter when \(i\) arrives during the final phase, it also belongs to \(DG(t_i)\). Thus the algorithm attempts every reference item \(i\in DG(1)\) arriving in the final phase; the only obstruction is that previously accepted items may have consumed too much capacity.

The expected value collected in the second phase is analyzed by standard tools (Lemmas~\ref{lem:second-phase} and \ref{lem:phase-three-probability}). The improvement comes from a careful analysis of the expected value collected in the final phase. To illustrate the main ideas, we first provide (Section~\ref{sec:PhaseIII}) a simple uniform analysis of the probability that each $i \in DG(1)$ that arrives in the final phase is accepted (a.k.a. $i$'s selectability). 

Since every item $i \in DG(1)$ has size at most \(1/2\), a sufficient condition for accepting $i$ is that at least half of the knapsack capacity is available at $i$'s arrival time. Therefore, since the algorithm attempts to accept every item in \(DG(1)\) that arrives in the final phase, our selectability analysis reduces to upper bounding the expected total size of accepted items appearing before a fixed item \(i\in DG(1)\) arrives. 
For any time \(t\), the items having arrived by time $t$ form a random sample (Lemma~\ref{lem:phase-three-conditioning}). Using this and the fact that at any time $t$, the density-greedy solution is feasible, and thus its total size is at most $1$, we can show that for any fixed time $t$ and any item $i$, the probability that $i$ is included in the density-greedy solution $DG(t)$ is at most $1/t$. 
%
%
Hence, we obtain an upper bound of $1/t$ on the expected total size of accepted items by time \(t\). Integrating over the time interval from the beginning of the final phase $b$ up to the moment when item \(i\) arrives, we get a logarithmic (in $i$'s arrival time) upper bound on the total size of the items accepted before $i$ (Lemma~\ref{lem:expected-attempted-mass}). Combined with the expected value collected in the second phase, the resulting probability that such an item $i$ is included in the final solution gives an $1/7$-competitive algorithm.

Next, in Section~\ref{section:Breaking}, we refine the analysis above, using that the probability that an item fits in the current solution depends on its  size. An item of size \(s\) only needs \(1-s\) remaining capacity; hence a smaller item $i \in DG(1)$ that arrives in the final phase should be accepted with larger probability. After lower bounding the selection probability of such an item $i$ as a function of its size $s_i$ (Lemma~\ref{lem:size-sensitive-selectability}), we compare it with the target competitive ratio and view any shortfall as a \emph{deficit} (Definition~\ref{def:deficit}). Since smaller items have larger selectability, we conclude that only relatively large items can have a significant deficit. Because the density-greedy solution has total size at most $1$, the total deficit over all items of \(DG(1)\) is small (and upper bounded in Corollary~\ref{cor:worst-case-deficit}). For the items whose deficit is large, we show that the secretary-like second phase contributes the remaining amount of selectability: beyond selecting the maximum-value item with positive probability, the second phase also has positive probability of selecting other relatively highly ranked items. We charge these probabilities to the high-value item \(z\) and to the total deficit incurred by the density phase. This careful (and to the best of our knowledge, novel) coupling between size-dependent selectability in the final phase and rank-based selectability in the second phase results in the improved competitive ratio.

%% file: 2-Preliminaries.tex
\section{Preliminaries}



An instance of the knapsack secretary problem consists of $n$ items $[n] = \{1, \ldots, n\}$, each with a value $v_i \geq 0$ and a size $s_i \in (0, 1]$.  The values and  sizes of the items are initially  unknown to the algorithm. The items arrive online in a uniformly random order $\pi$, where $\pi(t)$ denotes the item in position $t$ and $\pi[t]$ the set of items in positions $[t]$. Upon arrival of an item $i$, the algorithm observes its size $s_i$ and  value $v_i$, and must immediately and irrevocably decide whether to accept or reject it. The knapsack constraint is that the total size $s(A) = \sum_{i \in A} s_i$ of the set $A$ of accepted items must satisfy $s(A)  \le 1.$ The objective is to maximize the total accepted value $v(A) = \sum_{i \in A} v_i.$

 An  algorithm $\alg$ is $c$-competitive if, for every instance $(s, v)$, $$\E{\pi, \alg}{v(\alg(s, v, \pi))} \ge c \cdot \operatorname{OPT}(s, v),$$
 where $\alg(s, v, \pi)$ is the set of items accepted by algorithm $\alg$ over instance $(s, v)$ and ordering $\pi$, and $\operatorname{OPT}(s, v)
    = v(O(s,v))$
   is the value of optimal solution $ O(s,v) = \arg\max_{S \subseteq [n] : s(S) \le 1} v(S)$.
For the remainder of the paper, when the algorithm is clear from context, we omit $\alg$ from the subscripts of expectations and probabilities.

%% file: 3-UpperBound.tex
\section{The Impossibility Result}

In this section, we show that no algorithm can achieve a $1/e - 0.0001$ competitive ratio for the knapsack secretary problem. First, in \cref{sec: instances}, we describe the family of hard instances used to establish this impossibility result. 
In \cref{sec: reduction}, we show that for these hard instances, observing the actual values of the items offers almost no advantage over ordinal algorithms that observe only their relative ranks, except for small items. 
In \cref{sec: properties}, we provide properties that must be satisfied by an optimal ordinal algorithm. We then use these properties to formulate the performance of optimal ordinal algorithms as a linear program and use weak duality in \cref{sec: linear program} to upper bound the performance of ordinal algorithms. Lastly, in \cref{sec: proof of theorem 1}, we complete the proof of \cref{thm: Theorem 1} by combining the main lemmas from \cref{sec: reduction}, \cref{sec: properties}, and \cref{sec: linear program}.

\subsection{The Family of Hard Instances}
\label{sec: instances}

The family of hard instances is:
\begin{align*}
\mathcal I = \{(s,v) : \ & n \geq 14, h \in \{0, 1, \ldots, 14\}, \\
& s_i = 1 \text{ and } v_i > 1 \text{ for } i \in \{1, \ldots, h\}, \\
& s_i = 1/n \text{ and } v_i = 1 \text{ for } i \in \{h+1, \ldots, n\}\} 
\end{align*}

There are two types of items: the large items of size $1$ and the small items of size $1/n$. Note that the subinstances of size $h$ that consist of only  large items are the family of instances with $h$ items of the classical (valued) secretary problem where a single item can be selected. The number of large items $h$ is unknown to the algorithm. In particular, there could be no large items, in which case the algorithm needs to start accepting small items at time $(1-1/e)n$ if it has not yet seen a large item and wants to achieve a $1/e$ competitive ratio on the instance with no large items. However, once the algorithm accepts a single small item, it cannot select a large item anymore.

Next, we define a mapping from a set of values $V$ to an instance $(s, v) \in \mathcal{I}$. Let $h$ be the number of values in $V$ strictly larger than $1$. For $i \in \{1, \ldots, h\}$, we let $s_i = 1$ and $1< v_1 < \cdots < v_h$ be these $h$ large values in $V$ strictly larger than $1$, in increasing order. For $i \in \{h+1, \ldots, n\}$, we let $s_i = 1/n$ and $v_i = 1$. For the remainder of this section, we thus abuse notation and write $\alg(V, \pi)$ and $O(V)$ for $\alg(s,v, \pi)$ and $O(s,v)$ where $(s,v)$ is the  corresponding instance obtained from $V$ with the above mapping.

\subsection{The Reduction to Almost-Ordinal Algorithms}
\label{sec: reduction}

We start by defining the family of almost-ordinal algorithms.
An almost-ordinal algorithm is an algorithm that does not observe the values $v_i > 1$ of large items $i \leq h$, but instead only observes the relative order of these values. These algorithms still get to observe the sizes of all the items and the values $v_i = 1$ of items $i > h$.
\begin{definition}
\label{def: almost ordinal algorithm}
    An algorithm $\ordalg$ is in the family of almost-ordinal algorithms $\ordfam$ if, for the items $i \in [h]$, upon arrival of $i$ at time $t$, $\ordalg$ does not observe $v_i$ but instead observes its value's relative rank $1 + |\{t' \in [t]: v_{\pi(t')} > v_i\}|$ among the item values that have arrived so far.
\end{definition}

Note that the instances in the family $\mathcal I$ that have the same number of large items $h$ are indistinguishable to an almost-ordinal algorithm. Thus, we let 
$\ordalg(h, \pi)$ denote the items selected by an almost-ordinal algorithm $\ordalg \in \ordfam$ over an instance in $\mathcal I$ with $h$ large items. Next, we define  the following auxiliary objective $\rho(\ordalg)$ for an algorithm $\ordalg \in \ordfam$: 
$$\rho(\ordalg) = \min\left( \min_{h \in \{1, \ldots, 14\}} \Pr_{\pi}\left[ h \in \ordalg(h, \pi) \right], \frac{1}{n} \cdot \E{\pi}{|\ordalg(0, \pi)|}  \right).$$

This auxiliary objective $\rho(\ordalg)$ is a lower bound on the competitive ratio achieved by $\ordalg$ over instances in $\mathcal I$ since
\begin{itemize}
    \item for instances in $\mathcal I$ with $h \geq 1$ large items, $\Pr_{\pi}\left[ h \in \ordalg(h, \pi) \right]$ is the probability that $\ordalg$ accepts the optimal solution $\{h\}$, which is the large item of largest value $v_h$, and 
    \item for instances in $\mathcal I$ with $h = 0$ large items, $|\ordalg(0, \pi)|$ is the number of small items accepted by $\ordalg$ and the optimal solution is to accept all $n$ small items.
\end{itemize}

We now formally state the reduction, which is the main lemma of this section. It shows that the optimal competitive ratio achievable by a cardinal algorithm $\cardalg \in \cardfam$, where we let $\cardfam$ denote all cardinal algorithms, over instances $\mathcal I$ is, up to lower order terms, at most the optimal auxiliary objective $\rho(\ordalg)$ achievable by an almost-ordinal algorithm $\ordalg \in \ordfam$.
\begin{lemma}
\label{lem: instances reduction to ordinal object ordinal algorithm}
    We have that:
    \begin{align*}
        \max_{\cardalg \in \cardfam} \min_{(s,v) \in \mathcal I} \frac{\E{\pi}{v\left( \cardalg(s, v, \pi) \right) }}{v( O(s,v))} \leq 
         \left(1 + \frac{1}{n^2}\right) \cdot \max_{\ordalg \in \ordfam} \rho(\ordalg) + \frac{2}{n^2}.
    \end{align*}
\end{lemma}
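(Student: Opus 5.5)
Fix any cardinal algorithm $\cardalg$ and let $c$ be its competitive ratio over $\mathcal I$. The goal is an almost-ordinal $\ordalg$ with $\rho(\ordalg) \ge c/(1+1/n^2) - 2/n^2$, which gives the lemma after taking maxima. I would do this in two steps. First, bound the cardinal objective on a suitable subfamily by an ordinal quantity. Second, simulate $\cardalg$ by an almost-ordinal algorithm using an order-statistics-indistinguishable (OSI) value distribution, in the style of \citet{gravin2023onlineordinalproblemsoptimality}.

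\emph{Step 1 (cardinal to ordinal objective).} Fix a very fast-growing set of candidate large values, for instance $v > n^3$ for all of them and each at least $n^3$ times the previous one. Take any instance with $h\ge 1$ whose large values come from this set. Then $v(O)=v_h$, since $v_h$ exceeds the total value $n$ of all small items. Any feasible set either contains a single large item, or contains only small items and has value at most $n \le v_h/n^2$. If $\cardalg$ accepts a large item $i<h$, its value is at most $v_h/n^3$. Hence
\[
\E{\pi}{v(\cardalg)}/v_h \le \Pr[h\in \cardalg] + 2/n^2 .
\]
For $h=0$ the ratio is exactly $\E{\pi}{|\cardalg(0,\pi)|}/n$. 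Consequently, if every instance in the subfamily has ratio at least $c$, then $\Pr[h\in\cardalg]\ge c-2/n^2$ for every $h\in\{1,\dots,14\}$.

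\emph{Step 2 (OSI values and simulation).} This step, Lemma~\ref{lem: instances existence of osi distribution}, is where I expect the main difficulty. Infinitely many candidate values are needed, so I would pick a countable, super-exponentially separated value set together with a distribution $D$ over it such that, for each $h\le 14$, drawing $h$ i.i.d.\ values and conditioning on the relative ranks and arrival positions, the values seen so far carry (up to a factor $1+1/n^2$) no information about future ranks or about $h$. The plan is a Ramsey-type or pigeonhole argument. Since $\cardalg$ may be arbitrary, we restrict to a subsequence of values on which its decisions, as functions of the observed prefix, are essentially rank-determined; alternatively, we choose $D$ close to scale-invariant on a geometric grid, so that conditioning on observed magnitudes barely shifts the distribution of ranks. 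The factor $(1+1/n^2)$ in the statement suggests that the approximation is a multiplicative change in conditional probabilities, of the kind produced by a near-uniform distribution over a huge number $N\gg n^2$ of grid levels. Boundary effects at the extreme levels then occur with probability $O(1/N)$.

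\emph{Step 3 (assembling).} Define $\ordalg$ as in Algorithm~\ref{alg: almost ordinal simulaiton}. Small items are passed to $\cardalg$ unchanged. When a large item arrives with an observed relative rank, $\ordalg$ samples fresh cardinal values from $D$ conditioned on being consistent with all ranks observed so far, feeds these values to $\cardalg$, and copies its decision. Feasibility is automatic. By OSI, the distribution of the simulated run is within factor $1+1/n^2$ of the run of $\cardalg$ on an instance with i.i.d.\ values from $D$. This gives
\[
\Pr[h\in\ordalg(h,\pi)] \ge \frac{\Pr[h\in \cardalg]}{1+1/n^2} \ge \frac{c-2/n^2}{1+1/n^2} \quad\text{for all } h\ge 1 .
\]
The $h=0$ instance is simulated exactly. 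Combining the two cases gives $\rho(\ordalg)\ge (c-2/n^2)/(1+1/n^2)$, and rearranging yields the claimed inequality.
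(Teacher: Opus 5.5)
Your Step 1 matches the paper's \cref{lem: instances reduction to ordinal objective}, and simulating the $h=0$ case exactly is also what the paper does. The gap is in Steps 2--3. The property you ask of $D$ cannot hold for i.i.d.\ values from any fixed distribution, because with i.i.d.\ draws a magnitude reveals rank information. Take $h=2$ and $F(x)=\Pr[X<x]$. Then $\Pr[X_2<X_1\mid X_1=x]=F(x)$, and $F(X_1)$ is spread over $[0,1]$. For a near-uniform law on $N$ grid levels it is essentially uniform. Near scale invariance does not help, because the rank information comes from the CDF, not the scale. The cardinal rule ``accept the first large item iff $F(x)>1/2$, otherwise accept the second if it is a record'' picks the top large item with probability about $3/4$ when $h=2$. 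No almost-ordinal algorithm exceeds $1/2$ on that instance. This is the full-information secretary phenomenon. It shows that ``the simulated run is within factor $1+1/n^2$ of $\cardalg$ on i.i.d.\ values'' is false, so $\Pr[h\in\ordalg(h,\pi)]\ge \Pr[h\in\cardalg]/(1+1/n^2)$ does not follow. Separately, resampling ``fresh'' consistent values at each arrival would change values already shown to $\cardalg$. The simulator must keep those fixed and sample only the new one.

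The missing idea is order-statistics indistinguishability in the sense of \citet{gravin2023onlineordinalproblemsoptimality}. This is a property of a non-i.i.d.\ distribution $\cF_n$ over $n$-subsets: $d_{TV}(W_I,W_J)\le 1/n^5$ whenever $|I|=|J|$. The paper does not construct it. It cites their Lemma~2.4 and maps the values through $x\mapsto n^{3x}$, which preserves the TV bounds and gives the separation Step 1 needs. Crucially, the instance with $h$ large items uses the $h$ \emph{smallest} order statistics $W_{[h]}$ of $W\sim\cF_n$. So after $k$ large arrivals the observed values are some $k$ order statistics of $W$, hence TV-close to $W_{[k]}$ whatever $h$ and the ranks are. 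The simulator (\cref{alg: almost ordinal simulaiton}) samples $\tilde W^k\sim\cF_n$ conditioned on the earlier values occupying positions $J^k$, and feeds $\tilde w_{\sigma^k(k)}$ to $\cardalg$. By induction using \cref{lem: instances total variation triangle,lem: instances total variation mapping}, it loses at most $(h-1)/n^5\le 1/n^4$ \emph{additively}. The factor $1+1/n^2$ is therefore not a property of any distribution. It comes from converting this additive error using $\Pr[h\in\cardalg^*]\ge 1/n^2$ for an optimal cardinal algorithm (\cref{lem: optimal cardinal algorithms lower bound prob}, proved by comparison with $\first$). Your Ramsey alternative could give a genuinely different proof: pass to an infinite subset of $\{n^{3k}\}$ on which $\cardalg$'s discretized acceptance probabilities depend only on ranks, for each of the finitely many histories. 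But you only name it, and Step 3 as written rests on the i.i.d.\ construction.
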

\vspace{30pt}
The remainder of Section~\ref{sec: reduction} is devoted  to proving \cref{lem: instances reduction to ordinal object ordinal algorithm}. There are two main steps:
\begin{enumerate}
    \item In \cref{lem: instances reduction to ordinal objective}, we show that on instances with $h \geq 1$ and exponential gaps between the large values, the competitive ratio of cardinal algorithms is, up to lower order terms, the probability that the highest value item is selected.
    \item In \cref{lem: instances reduction to ordinal algorithms}, we show that there is a distribution $\mathcal F$ over sets of $n$ values such that, for any cardinal algorithm $\cardalg$, one can construct an almost-ordinal algorithm $\ordalg$ that simulates $\cardalg$ and satisfies the following guarantee: the probability that $\ordalg$ selects the highest-value item on an instance drawn from $\mathcal F$ is, up to lower-order terms, at least the corresponding probability for $\cardalg$. 
    To prove this, we adapt techniques from \citet{gravin2023onlineordinalproblemsoptimality}, specifically their use of order-statistics-indistinguishable (OSI) distributions. There are two reasons why we cannot use their result in a black box fashion:  we require a distribution with sampled values that have exponential gaps, which is done in \Cref{lem: instances multiplicative gap osi distribution}, and almost-ordinal algorithms are algorithms that are not fully ordinal.
\end{enumerate}

We proceed to the first step. Let $W = \{w_1, \ldots, w_h\}$ with $w_1 < \cdots < w_{h}$, then we let  $W \cup \{1\}^{n-h}$ denote the multiset  $\{v_1, \ldots, v_n\}$ with $h$ large items with values  $v_i = w_i$ for $i \leq h$ and $n - h$ small items with values $v_i = 1$ for $i > h$.  The next lemma shows that, for instances $V \in \left\{ W \cup \{1\}^{n-h}: W \in \largebinom{\{n^{3k}: k \in \mathbbm{N}\}}{h}\right\}$ with exponential gaps between the large values, the competitive ratio of cardinal algorithms is, up to lower order terms, the probability that the highest-valued item $h$ is selected.
\begin{lemma}
\label{lem: instances reduction to ordinal objective}
    For all $h \in [14]$, $\cardalg \in \cardfam$, $V \in \left\{ W \cup \{1\}^{n-h}: W \in \largebinom{\{n^{3k}: k \in \mathbbm{N}\}}{h}\right\}$ we have:
    \begin{align*}
        \frac{\E{\pi}{v\left( \cardalg(V, \pi) \right)}}{v(O(V))}
        \leq \Pr_{\pi}\left[ h \in \cardalg(V, \pi) \right] + \frac{2}{n^2}.
    \end{align*}
\end{lemma}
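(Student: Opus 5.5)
The plan is a direct case analysis on the outcome of $\cardalg$, using only the knapsack constraint and the exponential gaps. Fix $h \in [14]$ and $V = W \cup \{1\}^{n-h}$ with $W = \{w_1 < \cdots < w_h\} \subseteq \{n^{3k} : k \in \mathbbm{N}\}$.

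First, identify the optimum. Every large item has size $1$, so a feasible set contains either exactly one large item and nothing else, or only small items. Hence $v(O(V)) = \max(w_h, n-h) = w_h$. Here we use $w_h \ge n^3 > n$, assuming $\mathbbm{N}$ starts at $1$; if $k=0$ were allowed, the value $1$ would not count as large anyway.

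Next, bound the value of any feasible accepted set $A = \cardalg(V,\pi)$ pointwise, for each fixed $\pi$ and each realization of the algorithm's randomness. There are three cases.
\begin{itemize}
\item If $h \in A$, then $A = \{h\}$ and $v(A) = w_h$.
\item If $A = \{i\}$ for some large $i < h$, then $w_i \le w_h / n^3$, because distinct elements of $\{n^{3k}\}$ differ by a factor of at least $n^3$.
\item If $A$ consists only of small items, then $v(A) \le n \le w_h / n^2$.
\end{itemize}
Combining the cases gives
\[
v(A) \le w_h \cdot \mathbbm{1}[h \in A] + \frac{w_h}{n^2}.
\]

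Finally, take expectations over $\pi$ and the algorithm's randomness, and divide by $v(O(V)) = w_h$. This gives the bound with an additive term of $1/n^2$, which is stronger than the stated $2/n^2$.

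I expect no real obstacle. The only care needed is to confirm that $v(O(V)) = w_h$, which requires $w_h > n$, and to handle the convention for $\mathbbm{N}$. The slack in the constant $2/n^2$ absorbs either convention. For example, if the smallest allowed power made $w_1 = n^3$ while $n - h$ were compared against it, the bound $n/w_h \le 1/n^2$ still holds.
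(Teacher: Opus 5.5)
Your proof is correct and follows the paper's argument: identify $O(V)=\{h\}$ from the size-$1$ large items and the factor-$n^3$ gaps, then split according to whether the accepted set contains $h$, another large item, or only small items. The only difference is that you bound $v(A)$ pointwise and use that these cases are mutually exclusive, which gives additive error $1/n^2$; the paper instead bounds the two error terms separately by $h\,v_h/n^3$ and $n$, and so ends with $2/n^2$.
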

\begin{proof} 
    First note that $O(V) = \{h\}$, since we can only accept at most $1$ of the large items due to their size, and accepting any number of small items can give total value at most $n$ while $v_h = \max{V}$ and $v_h > n^3 \cdot v_1 > n^3$ by construction, therefore $v(O(V)) = v_h$. Then we have:
    \begin{align*}
        &\E{\pi}{v\left( \cardalg(V, \pi) \right)} \\
        =\ &\Pr_{\pi}\left[ h \in \cardalg(V, \pi) \right] \cdot v_h + \sum_{i=1}^{h-1} \Pr_{\pi}\left[ i \in \cardalg(V, \pi) \right] \cdot v_i \\ &+ \Pr_{\pi}\left[ \{1, \dots, h\} \cap \cardalg(V, \pi) = \emptyset \right] \cdot \E{\pi}{ v(\cardalg(V,\pi)) \middle| \{1, \dots, h\} \cap \cardalg(V, \pi) = \emptyset } \\
        \leq\ &\Pr_{\pi}\left[ h \in \cardalg(V, \pi)  \right] \cdot v_h + h \cdot \frac{v_h}{n^3} + 1 \cdot n \\
        \leq\ &\Pr_{\pi}\left[ h \in \cardalg(V, \pi) \right] \cdot v_h + \frac{1}{n^2} \cdot v_h + \frac{1}{n^2} \cdot v_h \\
        =\ &\left( \Pr_{\pi}\left[ h \in \cardalg(V, \pi) \right] + \frac{2}{n^2} \right) \cdot v_h, \tag{1}
    \end{align*}
    where the first line is by law of total probability since any algorithm that respects the size constraint can either accept one large item or multiple small items, the second line is by upper bounding the probabilities with $1$, $v_i \leq v_h/n^3$ by construction and $v(\cardalg(V,\pi)) \leq n$ when $\{1, \dots, h\} \cap \cardalg(V, \pi) = \emptyset$, i.e. accepts small items, the third line is because $h \leq n$ and $v_h \geq n^3$ and the fourth line is by factoring.
    Dividing both sides of (1) by $v(O(V))$ we have:
    \begin{align*}
        \frac{\E{\pi}{v\left( \cardalg(V, \pi) \right)}}{v(O(V))} 
        \leq \left( \Pr_{\pi}\left[ h \in \cardalg(V, \pi) \right] + \frac{2}{n^2} \right) \cdot \frac{v_h}{v(O(V))} 
        = \Pr_{\pi}\left[ h \in \cardalg(V, \pi) \right] + \frac{2}{n^2},
    \end{align*}
    where the equality is since $v(O(V)) = v_h.$
\end{proof}

In order to define  order-statistics-indistinguishable  distributions, we give the definition of total variation distance and show two standard properties, whose proofs we defer to \cref{sec: appendix B} for completeness.
\begin{definition}
    Let $X, Y$ be random objects sampled from probability mass functions $p_X, p_Y$ over discrete domain $\cT$. Then the total variation distance between $X, Y$ is:
    \begin{align*}
        d_{TV}(X, Y) := \frac{1}{2} \sum_{t \in \cT}|p_X(t) - p_Y(t)|.
    \end{align*}
\end{definition}

\begin{restatable}{lemma}{TVTriangleInequality}
\label{lem: instances total variation triangle}
    Let $X, Y, Z$ be random objects over discrete domain $\cT$. Then $d_{TV}(X, Y) \leq d_{TV}(X, Z) + d_{TV}(Z, Y)$.
\end{restatable}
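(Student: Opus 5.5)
The plan is to prove the triangle inequality for total variation distance directly from the definition, by applying the ordinary triangle inequality for real numbers pointwise on the domain $\cT$ and then summing. No earlier result from the paper is needed. We only need that all three objects have probability mass functions on the same discrete domain, so that their pointwise differences make sense.

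First, fix $t \in \cT$ and write
\[
p_X(t) - p_Y(t) = \bigl(p_X(t) - p_Z(t)\bigr) + \bigl(p_Z(t) - p_Y(t)\bigr).
\]
Applying the triangle inequality for absolute values gives
\[
|p_X(t) - p_Y(t)| \le |p_X(t) - p_Z(t)| + |p_Z(t) - p_Y(t)|.
\]

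Second, sum this inequality over all $t \in \cT$ and multiply by $\frac{1}{2}$. Then the left-hand side is exactly $d_{TV}(X,Y)$, and the two sums on the right are $d_{TV}(X,Z)$ and $d_{TV}(Z,Y)$, which gives the claim.

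The only point needing care is that $\cT$ may be countably infinite. In that case, splitting the sum on the right into two separate sums requires each to converge. This holds because every term is nonnegative, and each sum is bounded by $\sum_t (p_X(t)+p_Z(t)) = 2$ and $\sum_t (p_Z(t)+p_Y(t)) = 2$ respectively. Comparing series of nonnegative terms termwise is then valid. I do not expect any genuine obstacle beyond this convergence remark.
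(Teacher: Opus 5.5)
Your proof is correct and follows essentially the same route as the paper: add and subtract $p_Z(t)$, apply the triangle inequality for absolute values pointwise, then sum over $\cT$ and multiply by $\frac{1}{2}$. Your remark on convergence when $\cT$ is countably infinite is a valid detail that the paper leaves implicit.
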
 
\begin{restatable}{lemma}{TVMapping}
\label{lem: instances total variation mapping}
    Let $X, Y$ be random objects over discrete domain $\cT$ and $f : \cT \rightarrow \cR$ be an arbitrary, potentially randomized, mapping  whose internal randomness is independent of $X, Y$.
    Then $d_{TV}(f(X), f(Y)) \leq d_{TV}(X, Y)$.
\end{restatable}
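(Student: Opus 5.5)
The plan is to prove the data-processing inequality for total variation distance directly from the definition, first for deterministic $f$ and then for randomized $f$ by conditioning on its internal randomness. Since $f(X)$ and $f(Y)$ take values in $\cR$, I will assume $\cR$ is countable, which is the setting in which total variation is defined above. If $\cR$ is more general, the same argument goes through with sums over $r$ replaced by integrals against a dominating measure.

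\emph{Deterministic case.} For each $r \in \cR$ we have $p_{f(X)}(r) = \sum_{t \in f^{-1}(r)} p_X(t)$, and similarly for $Y$. The preimages $\{f^{-1}(r)\}_{r\in\cR}$ partition $\cT$. Applying the triangle inequality inside each block gives
\begin{align*}
d_{TV}(f(X),f(Y)) = \frac12 \sum_{r\in\cR}\Big|\sum_{t\in f^{-1}(r)} \big(p_X(t)-p_Y(t)\big)\Big| \le \frac12 \sum_{r\in\cR}\sum_{t\in f^{-1}(r)} |p_X(t)-p_Y(t)| = d_{TV}(X,Y).
\end{align*}

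\emph{Randomized case.} I will model $f$ as a Markov kernel. Write $K(t,r) = \Pr[f(t)=r]$, where the probability is over the internal randomness of $f$. Because that randomness is independent of $X$, we get $p_{f(X)}(r) = \sum_t p_X(t) K(t,r)$, and likewise for $Y$. Then
\begin{align*}
\frac12\sum_r \Big|\sum_t K(t,r)\big(p_X(t)-p_Y(t)\big)\Big| \le \frac12 \sum_t |p_X(t)-p_Y(t)| \sum_r K(t,r) = d_{TV}(X,Y).
\end{align*}
The inequality uses the triangle inequality together with $K \ge 0$. Exchanging the order of summation is justified by Tonelli's theorem, since all terms are nonnegative. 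The final equality uses $\sum_r K(t,r)=1$.

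The kernel computation already covers the deterministic case, where $K(t,r)=\mathbbm{1}[f(t)=r]$, so in the write-up I would present only that one computation. The only step needing care is identifying $p_{f(X)}$ with $\sum_t p_X(t)K(t,\cdot)$, and this is exactly where independence of the internal randomness is used. I do not expect any real obstacle here.
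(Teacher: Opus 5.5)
Your proposal is correct and follows essentially the same route as the paper. The paper also uses independence to write $\Pr[f(X)=r]=\sum_{t}\Pr[X=t]\Pr[f(t)=r]$, then applies the triangle inequality, swaps the order of summation, and uses $\sum_{r}\Pr[f(t)=r]=1$; your separate deterministic case is subsumed by this, as you note.
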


We now define the class of order-statistics-indistinguishable (OSI) distributions. 
The definition is identical to that of \citet{gravin2023onlineordinalproblemsoptimality}, but with a $1/n^5$ bound that suffices for our proof.
Intuitively, in OSI distributions, observing some set of order statistics does not reveal any information about which order statistics they are.
In this way, OSI distributions effectively ``hide'' the future values from an algorithm that observes them in random order.
\begin{definition}
A distribution $\cF_n$ over $n$-subsets of $X$ is order-statistics indistinguishable (OSI) if, for $W=\{w_1<\cdots<w_n\}\sim \cF_n$,
$$d_{TV}(W_I,W_J)\le \frac{1}{n^5}
\qquad
\forall I,J\subseteq[n],\ |I|=|J|,$$
where $W_I=\{w_i :  i \in I\}$.
\end{definition}

\citet{gravin2023onlineordinalproblemsoptimality} construct one such OSI distribution.
\begin{lemma}[Lemma 2.4 of \cite{gravin2023onlineordinalproblemsoptimality}]
\label{lem: instances existence of osi distribution}
    There exists an OSI distribution over $\mathbbm{N}$.
\end{lemma}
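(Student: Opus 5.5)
This is Lemma~2.4 of \citet{gravin2023onlineordinalproblemsoptimality}, and in the paper it can simply be imported. If I had to reprove it, I would construct the distribution explicitly, with a recursive multi-scale design. The guiding idea is \emph{scale invariance}. If the values are so spread out that the law of any observed subset looks essentially the same wherever it sits in the full sample, then a $k$-subset $W_I$ cannot reveal which order statistics $I$ it contains.

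\textbf{Step 1: the construction.} I would fix a rapidly increasing sequence of range sizes $N_1 \ll N_2 \ll \cdots \ll N_n$, with each ratio $N_{j}/N_{j+1}$ at most $n^{-10}$. The set $W$ is then built from its largest element downwards through its integer exponents. Draw $a_n$ uniformly from $\{1,\dots,N_n\}$. Given $a_{j+1}$, draw the next exponent $a_j$ uniformly from an interval below $a_{j+1}$ whose length is of order $N_j$. Finally set $w_j = 2^{a_j}$, so that $W \subseteq \mathbbm{N}$.

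\textbf{Step 2: approximate self-similarity.} The goal here is to show that, conditioned on the index positions, the law of $(a_i)_{i\in I}$ is within TV distance $O(n^2 \max_j N_j/N_{j+1})$ of a canonical law $G_k$ that depends only on $k=|I|$. The tool is a coupling of near-uniform variables over nested ranges. A uniform variable on a range of size $N_{j+1}$, shifted by anything of size $O(N_j)$, changes in TV distance by only $O(N_j/N_{j+1})$. Applied step by step, this lets us replace each hidden intermediate level by the next observed one at small TV cost. Summing over at most $n$ levels gives the bound. Once this holds, the triangle inequality (\cref{lem: instances total variation triangle}) gives
\[
d_{TV}(W_I,W_J)\le d_{TV}(W_I,G_k)+d_{TV}(G_k,W_J)\le n^{-5}.
\]
Passing from exponents to the values $w_i$ is a deterministic map, which cannot increase TV distance by \cref{lem: instances total variation mapping}.

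\textbf{Main obstacle.} The real difficulty is choosing the recursive structure so that Step~2 is actually true. With naive choices, an observed set betrays its indices. With i.i.d.\ gaps, the difference between two consecutive observed exponents is a sum of a number of gaps equal to the number of hidden elements between them. With i.i.d.\ sorted samples, the minimum is biased low. Either way the rank leaks. My first attempt would be an induction on $n$: obtain $\cF_n$ from $\cF_{n-1}$ by inserting one new element at a uniformly random relative position, at a scale so separated that deleting any single element yields, up to TV $n^{-7}$, a sample of $\cF_{n-1}$. Indistinguishability for $k$-subsets then follows by deleting elements one at a time and summing the at most $n$ errors. If this insertion step cannot be made exact enough, I would fall back on the Ramsey-type argument of \citet{gravin2023onlineordinalproblemsoptimality}.
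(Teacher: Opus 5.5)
Citing Lemma~2.4 of \citet{gravin2023onlineordinalproblemsoptimality} is exactly what the paper does; it gives no proof of its own, so that part is fine. Your self-contained reproof, however, does not work. Step~2 is false for the construction of Step~1, because the spacing between observed elements reveals which order statistics they are.

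For $n\ge 3$, take $I=\{1,2\}$, $J=\{2,3\}$, and the deterministic map $X\mapsto\log_2\max X-\log_2\min X$. It sends $W_I$ to $a_2-a_1$, which is uniform on a set of size $\Theta(N_1)$. It sends $W_J$ to $a_3-a_2$, which is uniform on a set of size $\Theta(N_2)$. These two laws are at TV distance $1-O(N_1/N_2)$, so by \cref{lem: instances total variation mapping} we get $d_{TV}(W_I,W_J)\ge 1-O(n^{-10})$.

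Your coupling (a long uniform shifted by a comparatively small amount) only equalizes the \emph{location} of the observed block. It does nothing to the \emph{internal differences} $a_{i'}-a_i$, whose magnitude is typically of order $N_{i'-1}$ and so identifies $i'$. Hence no canonical law $G_k$ exists. The obstruction is already visible at $k=2$: since $w_{i'}-w_i$ is a function of $W_{\{i,i'\}}$, OSI forces a single gap to have nearly the same law as a sum of up to $n-1$ consecutive gaps. That requires the gaps themselves to come from a scale-free mechanism, in which adding further consecutive gaps perturbs the law only negligibly. The same requirement then reappears for the joint law of several observed differences, so the construction must recurse through further levels. This is the missing idea: a single separated sequence $N_1\ll\cdots\ll N_n$ cannot provide it.

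Your fallback induction correctly reduces OSI to a one-deletion property. Suppose that for every $m\le n$, deleting any fixed order statistic of a sample of $\cF_m$ gives $\cF_{m-1}$ up to TV $n^{-7}$. Then deleting the indices outside $I$ one at a time, using \cref{lem: instances total variation mapping,lem: instances total variation triangle}, yields OSI.

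But that property is the whole content of the lemma, and inserting the new element ``at a scale so separated'' reproduces the leak above. A separated scale marks the inserted element: its spacing to a neighbor occurs at a scale that $\cF_{n-1}$ never exhibits. For fixed $i$, this spacing survives the deletion of $w_i$ unless $w_i$ is one of those two elements. So $W\setminus\{w_i\}$ is at TV distance of order $1$ from $\cF_{n-1}$, not $n^{-7}$.

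As written, the proposal establishes the lemma only through the citation.
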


We construct a new OSI distribution with the added property of exponential gaps between values, by transforming the distribution of \cref{lem: instances existence of osi distribution}. 
\begin{lemma}
\label{lem: instances multiplicative gap osi distribution} 
    There exists an OSI distribution over $\largebinom{\{n^{3k}: k \in \mathbbm{N} \}}{n}$. 
\end{lemma}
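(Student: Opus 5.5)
Take the OSI distribution $\cF_n$ over $n$-subsets of $\mathbbm{N}$ given by \cref{lem: instances existence of osi distribution} and push it forward through the map $\phi:\mathbbm{N}\to\{n^{3k}: k\in\mathbbm{N}\}$, $\phi(x)=n^{3x}$. Define $\cF'_n$ as the law of $\phi(W)=\{\phi(w_1),\ldots,\phi(w_n)\}$ for $W\sim\cF_n$. This is well defined, and the order statistics are carried over exactly, because $\phi$ is injective and strictly increasing. Hence $\phi(W)$ is an $n$-subset of $\{n^{3k}: k\in\mathbbm{N}\}$, and its $i$-th smallest element is $\phi(w_i)$. Any two distinct elements of the image differ by a factor of at least $n^3$, which is the exponential gap that \cref{lem: instances reduction to ordinal objective} needs.

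Next we verify the OSI property for $\cF'_n$. Fix $I,J\subseteq[n]$ with $|I|=|J|$, and write $W'=\phi(W)$. Because $\phi$ preserves order, $W'_I=\{\phi(w_i): i\in I\}=\phi(W_I)$, where $\phi$ is applied elementwise to the set; the same holds for $J$. Let $f$ denote the deterministic map that sends a finite subset $S\subseteq\mathbbm{N}$ to $\phi(S)$. The domain of finite subsets of $\mathbbm{N}$ is countable, so it counts as discrete. Applying \cref{lem: instances total variation mapping} gives
\[
d_{TV}(W'_I,W'_J)=d_{TV}(f(W_I),f(W_J))\le d_{TV}(W_I,W_J)\le \frac{1}{n^5}.
\]
The last inequality is the OSI property of $\cF_n$. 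This establishes the lemma.

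There is essentially one point that needs care: the order-statistic indexing must commute with $\phi$, so that $W'_I$ really equals $\phi(W_I)$ and is not some other selection of elements. Strict monotonicity of $\phi$ handles this. If $\mathbbm{N}$ is taken to include $0$, then $\phi(0)=1$. Should the instance mapping require large values strictly greater than $1$, we would use $\phi(x)=n^{3(x+1)}$ instead, and the argument is unchanged. Since $\phi$ is injective, $f$ could even be inverted, which would give equality of total variation distances, but the inequality from \cref{lem: instances total variation mapping} is all we need.
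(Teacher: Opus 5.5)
Your proposal is correct and matches the paper's proof: push $\cF_n$ forward through the strictly increasing map $x \mapsto n^{3x}$, note that order statistics commute with it so that $W'_I = f(W_I)$, and then apply \cref{lem: instances total variation mapping} together with the OSI property of $\cF_n$. Your side remark about $\phi(0)=1$, and the shift to $n^{3(x+1)}$ if large values must exceed $1$, is a harmless refinement that the paper does not address.
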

\begin{proof}
    Let $f: 2^{\mathbbm{N}} \rightarrow 2^{\mathbbm{N}}$ such that for all $S \subseteq \mathbbm{N}$:
    \begin{align*}
        f(S) = \{n^{3 x}: x \in S\}
    \end{align*}
    and $I, J \subseteq [n]$ and $|I| = |J|$.
    Let $\cF_n$ be the OSI distribution of \cref{lem: instances existence of osi distribution} and $W \sim \cF_n$.
    Let $\cF'_n$ be the distribution of $W' = f(W)$, then we have:
    \begin{align*}
        d_{TV}(W'_I, W'_J) = d_{TV}(f(W_I), f(W_J)) \leq d_{TV}(W_I, W_J) \leq \frac{1}{n^5},
    \end{align*}
    where the equality is because $g(x) = n^{3 x}$ is increasing so the $i$-th order statistic of $W'$ is also the $i$-th order statistic of $W$, the first inequality is by \cref{lem: instances total variation mapping} and the second inequality is by the assumption that $\cF_n$ is OSI over $\mathbbm{N}$.
\end{proof}

Next, for any OSI distribution $\cF_n$ and  cardinal algorithm $\cardalg$, we construct  an almost-ordinal algorithm $\simu$ that simulates $\cardalg$ and satisfies the following guarantee: the probability that $\simu$ selects the highest-value item on  instance $W_{[h]} \cup \{1\}^{n-h}$, where $W \sim  \cF_n$   and $W_{[h]}$ denotes the $h$ smallest values of $W$, is, up to lower-order terms, at least the corresponding probability for $\cardalg$.  This result is similar to Theorem 1.1 in \cite{gravin2023onlineordinalproblemsoptimality} and we prove it by adapting their  construction of an ordinal algorithm  that also simulates the decisions of a cardinal algorithm (see the overview of the proof of \cref{lem: instances reduction to ordinal object ordinal algorithm} for why we cannot use that previous result directly). 

\begin{algorithm}[h]
\caption{Almost-Ordinal Simulation}\label{alg: almost ordinal simulaiton}

    \KwIn{Cardinal algorithm $\cardalg$, OSI distribution $\cF_n$ over $n$-subsets of $\mathbbm{N}$}
    \KwOut{A feasible set of accepted items}

    $\simu \gets \emptyset$ \tcp*{output}
    $\tilde{V} \gets \emptyset$ \tcp*{successfully sampled values} 
    $k \gets 0$ \tcp*{number of large items observed so far} 

    \For{timestep $t = 1$}{
        Item $\pi(1)$ arrives \\
        \If{$\pi(1)$ is a small item}{
            $\tilde{V} = \{1\}$
        }
        \Else{
            $k \gets 1$ \\
            $\pi'(1) \gets \pi(1)$ \\
            Sample $W = \{w_1 < \dots < w_n\} \sim \cF_n$ \\
            $\tilde{V} = \{w_1\}$ 
        }
        \If{$\cardalg^1(\tilde{V}, \pi[1])$ = Accept}{
            $\simu \gets \simu \cup \{\pi(1)\}$
        }
    }
    \For{timestep $t = 2, \dots, n$}{
        Item $\pi(t)$ arrives \\
        \If{$\pi(t)$ is a small item}{
            $\tilde{V} \gets \tilde{V} \cup \{1\}$
        }
        \Else{
            $k \gets k+1$ \\
            $\pi'(k) = \pi(t)$ \\
            Rank is updated to $\sigma^k$, where $\sigma^k(j) \in [k]$ is the rank of $\pi'(j)$ among $\pi'[k]$ for all $j \in [k]$ \\
            $J^k \gets \{\sigma^k(i): i \leq k - 1\}$ \\
            Sample $\tilde{W}^k = \{\tilde{w}_1 < \dots < \tilde{w}_n\} \sim \left( \cF_n \middle| \tilde{W}^k_{J^k} = \tilde{V}_{\pi'[k-1]} \right)$ \\
            \If{sampling $\tilde{W}^{k}$ fails}{
                \Return{$\emptyset$}
            }
            $\tilde{V} \gets \tilde{V} \cup \{ \tilde{w}_{\sigma^k(k)}\}$
        }

        \If{$\cardalg^t(\tilde{V}, \pi[t])$ = Accept}{
            $\simu \gets \simu \cup \{\pi(t)\}$
        }
    }

    \Return{$\simu$}

\end{algorithm}

\begin{restatable}{lemma}{ReductionToOrdinalAlgos}
    \label{lem: instances reduction to ordinal algorithms}
    Let $\cF_n$ be an OSI distribution over $\mathbbm{N}_{\geq 2}$ and let $W = \{w_1 < w_2 < \dots < w_n\} \sim \cF_n$.
    For any $\cardalg \in \cardfam$ such that $\Pr_{\pi} \left[ h \in \cardalg(W_{h} \cup \{1\}^{n-h}, \pi) \right] \geq \frac{1}{n^2}$ for all $h \in [n]$ and $W_h \in \largebinom{\{n^{3k}: k \in \mathbbm{N}\}}{h}$, there exists $\simu \in \ordfam$ such that for any $h \in [n]$:
    \begin{align*}
        \Pr_{\pi} \left[ h \in \cardalg(W_{[h]} \cup \{1\}^{n-h}, \pi) \right] 
        \leq \left(1 + \frac{1}{n^2} \right) \cdot \Pr_{\pi}\left[ h \in \simu(h, \pi) \right].
    \end{align*}
\end{restatable}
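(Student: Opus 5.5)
We take $\simu$ to be \cref{alg: almost ordinal simulaiton} run with the given $\cardalg$ and $\cF_n$, and compare two executions of $\cardalg$. In the \emph{real} execution, $\cardalg$ sees the true value vector of $W_{[h]} \cup \{1\}^{n-h}$ in random order $\pi$. In the \emph{simulated} execution, $\cardalg$ sees the fabricated values $\tilde V$. We first check that $\simu$ is almost-ordinal. It uses only the sizes (to tell small from large), the small values $1$, and the relative ranks $\sigma^k$ of the large items. It never uses the actual large values. The decision of $\simu$ at time $t$ is exactly $\cardalg$'s decision on the prefix $(\tilde V,\pi[t])$, so feasibility is inherited from $\cardalg$.

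\textbf{Coupling step.} We show by induction on the number $k$ of large items seen that the joint law of the fabricated large values $\tilde V_{\pi'[k]}$ is close in total variation to the law of the real large values $W_{\{\sigma^k\text{-positions}\}}$. The real values are the order statistics of $W$ at the index set given by the relative ranks of the first $k$ arrived large items among all $h$. That index set is a uniformly random $k$-subset $I_k\subseteq[h]$, arranged according to $\sigma^k$.

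The key tool is OSI. Because $d_{TV}(W_I,W_J)\le n^{-5}$ for all $|I|=|J|$, the conditional sampling in line ``sample $\tilde W^k$'' is a proper way to extend a $(k-1)$-tuple to a $k$-tuple. We argue by a hybrid. Suppose that at step $k-1$ the fabricated tuple is distributed as $W_{J}$ for the set $J=\{\sigma^k(i)\}$ of relative ranks. Then sampling $\tilde W^k$ conditionally and inserting $\tilde w_{\sigma^k(k)}$ produces exactly the law of $W_{[k]}$ arranged by $\sigma^k$, i.e. of $W_{[k]}$ indexed by position. The real process, by contrast, has law $W_{I_k}$. OSI gives $d_{TV}(W_{[k]},W_{I_k})\le n^{-5}$ at each stage, and \cref{lem: instances total variation mapping} lets us push the previous stage's discrepancy through the randomized extension map. \cref{lem: instances total variation triangle} then adds the per-step errors.

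Summing over at most $n$ large items and the failure event (conditioning on a measure-zero or inconsistent event), the full transcript $(\tilde V,\pi)$ is within total variation $O(n\cdot n^{-5})\le n^{-4}$ of the real transcript $(V,\pi)$. The small items contribute identical values $1$ in both executions, which is what makes the argument work despite $\simu$ not being fully ordinal. Applying \cref{lem: instances total variation mapping} to the map transcript $\mapsto$ [$\cardalg$ accepts the last-ranked large item], we get
\[
\Pr_\pi\bigl[h\in\cardalg(W_{[h]}\cup\{1\}^{n-h},\pi)\bigr]\le \Pr_\pi\bigl[h\in\simu(h,\pi)\bigr]+n^{-4}.
\]
Here the randomness includes the draw of $W$.

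\textbf{Converting to multiplicative form.} The hypothesis says the cardinal success probability is at least $n^{-2}$ for every gap-separated $W_h$, hence also in expectation over $W$. Therefore $n^{-4}\le n^{-2}\cdot\Pr[h\in\cardalg]$, which yields $\Pr[h\in\cardalg]\le \Pr[h\in\simu]+n^{-2}\Pr[h\in\cardalg]$. Rearranging gives $\Pr[h\in\cardalg]\le (1-n^{-2})^{-1}\Pr[h\in\simu]$. To reach the exact constant $1+n^{-2}$, we tighten the additive error to $n^{-4}/2$ or so (the OSI bound $n^{-5}$ times at most $h\le n$ steps leaves room). Then $(1-n^{-2}/2)^{-1}\le 1+n^{-2}$.

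\textbf{Main obstacle.} The hardest step is the inductive TV argument. We must show rigorously that the conditional resampling reproduces the law of $W_{[k]}$ (when indexed by $\sigma^k$) when fed an exact sample of $W_J$. We also need to control the case where the conditioning event has zero probability under $\cF_n$, which happens only when the previous tuple lies off the support. That event must be charged to the TV slack of the previous step. We would follow the hybrid argument of \citet{gravin2023onlineordinalproblemsoptimality}, inserting small items as deterministic coordinates equal to $1$.
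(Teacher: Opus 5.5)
Your proposal is correct and is essentially the paper's proof: the same simulation \cref{alg: almost ordinal simulaiton}, an inductive total-variation bound built from OSI together with \cref{lem: instances total variation mapping} and \cref{lem: instances total variation triangle}, and the $n^{-2}$ hypothesis to turn the additive error into the factor $1+n^{-2}$. Two of your local choices are sounder than the paper's write-up. First, you push the discrepancy through the randomized extension kernel, whereas the paper cites the element-removal map, which gives the data-processing inequality in the wrong direction. Second, you invoke the $n^{-2}$ hypothesis on the real instance, where $W$ is independent of $\pi$, rather than on the $\pi$-dependent $\tilde V$. However, keep the paper's inductive invariant $d_{TV}(\tilde V_{\pi'[k]},W_{[k]})$ rather than comparing with $W_{I_k}$ at every stage. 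Only the final stage matters (there $I_h=[h]$), so each step costs a single OSI application. The resulting error $(h-1)n^{-5}\le 1/(n^4+n^2)$ is exactly what the constant $1+n^{-2}$ needs for all $h\le n$, whereas your $n^{-4}/2$ tightening is not available once $h>n/2+1$.
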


We first describe \cref{alg: almost ordinal simulaiton}, which, given a cardinal algorithm $\cardalg$ and an OSI distribution $\cF_n$ as input, constructs the desired algorithm $\simu \in \ordfam$ for this lemma. At each time step $t$, if the arriving item is small, then its value is known to be $1$. If the arriving item is large, then $\simu$ observes only its relative rank among the large items that have arrived so far. It then samples synthetic values from $\cF_n$ conditioned on two requirements. First, the synthetic values assigned to the large items seen so far must be consistent with the observed relative ranks. Second, the previously sampled synthetic values must be preserved. Thus, after the $k$-th large item arrives, the $k$ synthetic large-item values are distributed as order statistics from $\cF_n$ conditioned on the ordinal information observed so far. Finally, $\simu$ runs $\cardalg$ on the synthetic partial instance consisting of the sampled large-item values and the observed small items, and mirrors $\cardalg$'s decision for the current item. 
For the purposes of the algorithm, we denote by $\cardalg^t(V, \pi[t]) \in \{ \text{Accept, Reject}\}$ the decision of algorithm $\cardalg$ to accept or reject on timestep $t$.

\begin{proof}[Proof of~\Cref{lem: instances reduction to ordinal algorithms}.]
    Let $\simu$ be the almost-ordinal algorithm returned by  \cref{alg: almost ordinal simulaiton} over input $\cardalg$ and $\cF_n$.
    Let $\pi'$ be the order of arrival of the large items, i.e. 
    \begin{align*}
        \pi'(k) = \pi(\min\{t: k \text{ large items in } \pi{[t]}\}).
    \end{align*}
    For any set $X = \{x_1 < \dots < x_h\}$ and set $J \subseteq [h]$, we denote $X_{J} = \{x_j: j \in J\}$.
    Consider the real instance with values $W_{[h]} \cup \{1\}^{n-h}$ and the instance constructed by $\simu$ with values $\tilde{V}$.
    We start by showing that for any $h \in [n]$, we have:
    \begin{align*}
        d_{TV}(\tilde{V}_{\pi'[h]}, W_{[h]}) \leq \frac{h-1}{n^5}. \tag{1}
    \end{align*}
    We do that by inductively showing that  $d_{TV}(\tilde{V}_{\pi'[k]}, W_{[k]}) \leq \frac{k-1}{n^5}$ for $k \in [h]$. Notice that $k = h$ gives exactly (1).
    For timestep $k=1$, $d_{TV}(\tilde{V}_{\pi'[1]}, W_{[1]}) = d_{TV}(\{w_1\}, W_{[1]}) = 0$, since $w_1$ is the lowest order statistic from $W \sim \cF_n$.
    For timestep $k \in \{2, \dots, h\}$, note that $\sigma^k(j)$ is the rank of $\tilde{v}_{\pi'(j)}$ among the first $k$ large items to arrive in $\tilde{V}$, so $J^k$ holds the ranks of items $\tilde{V}_{\pi'[k-1]}$ after the $k$-th item has arrived.
    We have that:
    \begin{align*}
        d_{TV}(\tilde{V}_{\pi'[k]}, W_{[k]}) 
        &= d_{TV}(\tilde{W}^k_{J^k} \cup \{\tilde{w}_{\sigma^k(k)}\}, W_{[k]}) \\
        &= d_{TV}(\tilde{W}^k_{J^k} \cup \{\tilde{w}_{\sigma^k(k)}\}, W_{J^k} \cup \{w_{\sigma^k(k)}\}) \\
        &\leq d_{TV}(\tilde{W}^k_{J^k}, W_{J^k}) \\
        &\leq d_{TV}(\tilde{W}^k_{J^k}, W_{[k-1]}) + d_{TV}(W_{[k-1]}, W_{J^k}) \\
        &= d_{TV}(\tilde{V}_{\pi'[k-1]}, W_{[k-1]}) + d_{TV}(W_{[k-1]}, W_{J^k}) \\
        &\leq \frac{k-2}{n^5} + \frac{1}{n^5}  \\
        &= \frac{k-1}{n^5},
    \end{align*}
    where the first line is because $\tilde{V}_{\pi'[k-1]} = \tilde{W}^k_{J^k}$ by definition of $\tilde{W}^k$ and $\tilde{v}_{\pi'(k)} = \tilde{w}_{\sigma^k(k)}$ since this value is added to $\tilde{V}$ when sampling is successful,
    the second line follows as 
    $J^k \subset [k]$ and $\sigma^k(k) \in [k]$ is the rank of $\pi'(k)$.
    The third line follows by \cref{lem: instances total variation mapping}. 
    To see that, let $f(X): \binom{\mathbbm{N}}{k} \rightarrow \binom{\mathbbm{N}}{k-1}$
    be the mapping that removes the rank $\sigma^k(k)$ number from $X$, which proves that $d_{TV}(\tilde{W}^k_{J^k} \cup \{\tilde{w}_{\sigma^k(k)}\}, W_{J^k} \cup \{w_{\sigma^k(k)}\})
    \leq d_{TV}(\tilde{W}^k_{J^k}, W_{J^k})$.
    The fourth line follows by \cref{lem: instances total variation triangle}, the fifth line follows from sampling conditionally on $\tilde{W}^k_{J^k} = \tilde{V}_{\pi'[k-1]}$, the sixth line is by induction hypothesis and the assumption that $\cF_n$ is OSI.
    We have that:
    \begin{align*}
        &\Pr_{W \sim \cF_n, \pi} \left[ h \in \cardalg(W_{[h]} \cup \{1\}^{n-h}) \right] \\
        \leq\ &\Pr_{\tilde{V}, \pi} \left[ h \in \cardalg(\tilde V, \pi) \right] + d_{TV}(\cardalg(\tilde V, \pi), \cardalg(W_{[h]} \cup \{1\}^{n-h}, \pi)) \\
        =\ &\Pr_{\pi} \left[ h \in \simu(h, \pi) \right] + d_{TV}(\tilde{V}_{\pi'[h]}, W_{[h]}) \\
        \leq\ &\Pr_{\pi} \left[ h \in \simu(h, \pi) \right] + \frac{h-1}{n^5} \\
        \leq\ &\Pr_{\pi} \left[ h \in \simu(h, \pi) \right] + \frac{1}{n^4}, \tag{2}
    \end{align*}
    where the first line is by definition of total variation distance, the second line is because $\simu(h, \pi)$ makes the same decisions as $\cardalg(\tilde{V}, \pi)$ and the total variation distance is not affected by removing $n-h$ small items from each set, the third line is by (1) and the fourth line is because $h \leq n$.
    We upper bound the right term of (2) as follows:
    \begin{align*}
        \frac{1}{n^2 \cdot n^2} 
        \leq \frac{1}{n^2} \cdot \Pr_{\tilde{V}, \pi} \left[ h \in \cardalg(\tilde{V}, \pi) \right]
        = \frac{1}{n^2} \cdot \Pr_{\pi} \left[ h \in \simu(h, \pi) \right], \tag{3}
    \end{align*}
    where the inequality is by assumption $\Pr_{\pi} \left[ h \in \cardalg(W_{h} \cup \{1\}^{n-h}, \pi) \right] \geq \frac{1}{n^2}$  for all $W_{h} \in \largebinom{\{n^{3k}: k \in \mathbbm{N} \}}{h}$ and the equality is by definition of algorithm $\simu$.
    Substituting (3) into (2) gives the statement.
\end{proof}

We also show that any optimal cardinal algorithm accepts the highest value item with probability at least $1/n^2$. We do this by lower bounding with a trivial algorithm $\first$, that accepts $\pi(1)$ and rejects all other items.
We defer the proof to \cref{sec: appendix B}.
\begin{restatable}{lemma}{LemCardLowerBoundProb}
\label{lem: optimal cardinal algorithms lower bound prob}
    Let $\cardalg^* \in \cardfam$ be an optimal algorithm, then $\Pr_{\pi}\left[ h \in \cardalg^*(W_h \cup \{1\}^{n-h}, \pi) \right] \geq \frac{1}{n^2}$ for all $h \in [14]$, $W_h = \{w_1 < \dots < w_h\} \in \largebinom{\{n^{3k}: k \in \mathbbm{N} \}}{h}$.
\end{restatable}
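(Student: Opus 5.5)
We compare $\cardalg^*$ with the trivial algorithm $\first$, which accepts $\pi(1)$ and rejects everything else. Since $\cardalg^*$ maximizes the worst-case ratio over $\mathcal I$, its worst-case ratio is at least that of $\first$. We then show that this forces $\cardalg^*$ to select the top large item with probability at least $1/n^2$ on every instance of the stated form.

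\textbf{Step 1: the worst-case ratio of $\first$.} On an instance with $h\ge 1$ large items, $\first$ accepts item $h$ whenever $\pi(1)=h$. This happens with probability $1/n$, so its ratio there is at least $1/n$. On the all-small instance ($h=0$), $\first$ accepts exactly one small item, of value $1$, while $v(O)=n$; its ratio is again $1/n$. Every accepted set is feasible. Hence
\[
\min_{(s,v)\in\mathcal I}\frac{\E{\pi}{v(\first(s,v,\pi))}}{v(O(s,v))}\ \ge\ \frac1n ,
\]
and by optimality of $\cardalg^*$,
\[
\frac{\E{\pi}{v(\cardalg^*(V,\pi))}}{v(O(V))}\ \ge\ \frac1n \quad\text{for every } V \text{ from } \mathcal I .
\]

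\textbf{Step 2: convert to a probability bound.} Fix $h\in[14]$ and $W_h$ with exponential gaps, and set $V=W_h\cup\{1\}^{n-h}$. Lemma~\ref{lem: instances reduction to ordinal objective} gives
\[
\frac1n\ \le\ \Pr_\pi\bigl[h\in\cardalg^*(V,\pi)\bigr]+\frac{2}{n^2},
\]
so
\[
\Pr_\pi\bigl[h\in\cardalg^*(V,\pi)\bigr]\ \ge\ \frac1n-\frac2{n^2}\ =\ \frac{n-2}{n^2}.
\]
Since $n\ge 14$, we have $n-2\ge 1$, and therefore $\Pr_\pi[h\in\cardalg^*(V,\pi)]\ge 1/n^2$.

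\textbf{Main obstacle.} The one delicate point is the meaning of ``optimal.'' It refers to maximizing the min-ratio over $\mathcal I$, which is exactly the left-hand side of Lemma~\ref{lem: instances reduction to ordinal object ordinal algorithm}; the instances with $W_h$ from the exponential-gap set are members of $\mathcal I$. If the maximum over $\cardfam$ is not attained, we can instead work with any algorithm whose worst-case ratio is at least that of $\first$. The argument is unchanged, because it uses only that $\cardalg^*$ does at least as well as $\first$, not exact optimality.
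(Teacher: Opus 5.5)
Your proof is the paper's argument: compare $\cardalg^*$ with $\first$ to get a worst-case ratio of at least $1/n$, then apply \cref{lem: instances reduction to ordinal objective} to get $\Pr_\pi[h\in\cardalg^*(V,\pi)]\ge (n-2)/n^2\ge 1/n^2$. Your version is slightly more careful than the paper's in two places. First, you route the comparison through the minimum over all of $\mathcal I$, whereas the paper compares the two algorithms only on the sub-family $\{W_h\cup\{1\}^{n-h}\}$. Second, you note that attainment of the maximum is not needed.

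One justification in Step 1 needs a fix. For a general instance of $\mathcal I$ with $h\ge 1$, the optimum need not be $\{h\}$: if $v_h<n-h$, the small items are optimal and $v(O)=n-h$. In that case, the fact that $\first$ takes item $h$ with probability $1/n$ does not give ratio $1/n$. Use instead that $\pi(1)$ is uniform, so $\mathbb{E}_\pi[v(\first(s,v,\pi))]=\frac1n\sum_i v_i\ge v(O(s,v))/n$ on every instance.
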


We are now ready to prove \cref{lem: instances reduction to ordinal object ordinal algorithm}.
\begin{proof}[Proof of \cref{lem: instances reduction to ordinal object ordinal algorithm}]
    We first note that: 
    \begin{align*}
        \max_{\cardalg \in \cardfam} \min_{(s,v) \in \mathcal I} \frac{\E{\pi}{v\left( \cardalg(s, v, \pi) \right) }}{v( O(s,v))}  
        \leq \max_{\cardalg \in \cardfam} \min_{h \in \{0, \dots, 14\}} \E{V \sim \cD_h}{\frac{\E{\pi}{v\left( \cardalg(V, \pi) \right) }}{v( O(V))}}, \tag{0}
    \end{align*}
    since the corresponding instance $(s,v)$ to any $V$ in the support of $\cD_h$ is in the family of instances $\mathcal I$.
    Let $\cF_n$ be the OSI distribution over $\largebinom{\{n^{3k}: k \in \mathbbm{N} \}}{n}$ of \cref{lem: instances multiplicative gap osi distribution} and $W \sim \cF_n$.
    Also let $\cardalg^* \in \cardfam$ be the algorithm that maximizes the left-hand side of the statement and $\simu \in \ordfam$ be the algorithm given by \cref{lem: instances reduction to ordinal algorithms} that corresponds to $\cardalg^*$.
    For $h = 0$ we have:
    \begin{align*}
        \E{W \sim \cF_n}{\frac{\E{\pi}{v(\cardalg^*(W_{[0]} \cup \{1\}^{n}, \pi))}}{v(O( W_{[0]} \cup \{1\}^{n}))}}
        &= \frac{\E{\pi}{v(\cardalg^*(\{1\}^{n}, \pi))}}{v(O(\{1\}^{n}))} \\
        &= \frac{\E{\pi}{v(\cardalg^*(\{1\}^{n}, \pi))}}{n} \\
        &= \frac{\E{\pi}{v(\simu(0, \pi))}}{n}, \tag{1}
    \end{align*}
    where the first line is because $W{[0]} = \emptyset$, the second line is because $O(\{1\}^n) = [n]$ since the sizes in $I_0$ are $s_i = 1/n$ for all $i \in [n]$ and the third line is because $\simu$ does not sample any values in $I_0$, therefore its decisions are identical to those of $\cardalg^*$.
    For $h \in [14]$ we have:
    \begin{align*}
        \E{W \sim \cF_n}{\frac{\E{\pi}{v(\cardalg^*(W_{[h]} \cup \{1\}^{n-h}, \pi))}}{v(O( W_{[h]} \cup \{1\}^{n-h}))}}
        &\leq \E{W \sim \cF_n}{\Pr_{\pi}\left[ h \in \cardalg^*(W_{[h]} \cup \{1\}^{n-h}, \pi) \right] + \frac{2}{n^2}} \\
        &= \Pr_{W \sim \cF_n, \pi}\left[ h \in \cardalg^*(W_{[h]} \cup \{1\}^{n-h}, \pi) \right] + \frac{2}{n^2} \\
        &\leq \left( 1 + \frac{1}{n^2} \right) \cdot \Pr_{\pi}\left[ h \in \simu(h, \pi) \right] + \frac{2}{n^2}, \tag{2}
    \end{align*}
    where the first line follows by \cref{lem: instances reduction to ordinal objective} since $W_{[h]} \in \largebinom{\{n^{3k}: k \in \mathbbm{N}\}}{h}$, the second line is by linearity of expectation and the third line is by \cref{lem: instances reduction to ordinal algorithms} since $\Pr_{\pi}\left[ h \in \cardalg^*(W_{[h]} \cup \{1\}^{n-h}, \pi) \right] \geq \frac{1}{n^2}$ by \cref{lem: optimal cardinal algorithms lower bound prob}.
    Then we have:
    \begin{align*}
        &\min_{(s,v) \in \mathcal I} \frac{\E{\pi}{v\left( \cardalg^*(s, v, \pi) \right) }}{v( O(s,v))} \\
        \leq\ &\min_{h \in \{0, \dots, 14\}} \E{W \sim \cF_n}{\frac{\E{\pi}{v(\cardalg^*(W_{[h]} \cup \{1\}^{n-h}, \pi))}}{v(O(W_{[h]} \cup \{1\}^{n-h}))}} \\
        =\ &\min\left( \min_{h \in [14]} \E{W \sim \cF_n}{\frac{\E{\pi}{v(\cardalg^*(W_{[h]} \cup \{1\}^{n-h}, \pi))}}{v(O(W_{[h]} \cup \{1\}^{n-h}))}}, \E{W \sim \cF_n}{\frac{\E{\pi}{v(\cardalg^*(W_{[0]} \cup \{1\}^{n}, \pi))}}{v(O(W_{[0]} \cup \{1\}^{n}))}} \right)\\
        \leq\ &\left( 1 + \frac{1}{n^2} \right) \cdot \min\left( \min_{h \in [14]} \Pr_{\pi}\left[ h \in \simu(h, \pi) \right], \frac{\E{\pi}{|\simu(0, \pi)|}}{n} \right) + \frac{2}{n^2},
    \end{align*}
    where the first line follows by (0), the second line follows by splitting the min into $h \in [14]$ and $h = 0$, the third line follows by (1) and (2).
    The statement follows as $\simu \in \ordalg$, so its auxiliary objective is upper bounded by that of the almost-ordinal algorithm that maximizes it.
\end{proof}

\subsection{Properties of Almost-Ordinal Algorithms}
\label{sec: properties}

\cref{lem: instances properties of optimal algorithms} gives  properties that are satisfied by at least one  almost-ordinal algorithm that is optimal with respect to the auxiliary objective.
Restricting our search to algorithms that satisfy these properties will allow us to obtain the linear programming formulation in \cref{sec: linear program}.
\begin{lemma}
\label{lem: instances properties of optimal algorithms}
    There is an almost-ordinal algorithm $\ordalg \in \ordfam$ that maximizes the auxiliary objective $\rho(\ordalg)$ and satisfies the following conditions:
    \begin{enumerate}
        \item If $\ordalg$ accepts a small item, then no large items arrived before it,
        \item If $\ordalg$ accepts a small item, it also accepts every subsequent small item,
        \item If $\ordalg$ accepts a large item, then its value is the best so far.
    \end{enumerate}
\end{lemma}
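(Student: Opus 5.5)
Start from an almost-ordinal algorithm $\ordalg$ that maximizes $\rho$; existence of a maximizer follows because, for fixed $n$, an almost-ordinal algorithm is a randomized decision rule over a finite set of observation histories, so the set of achievable $\rho$ values is compact. Then apply three successive transformations, each of which does not decrease any term in the definition of $\rho$. Each term is either $\Pr_\pi[h\in\ordalg(h,\pi)]$ for $h\in[14]$ or $\frac1n\E{\pi}{|\ordalg(0,\pi)|}$. It therefore suffices to show, instance by instance (indexed by $h$), that the best-item acceptance probability does not drop for $h\ge 1$ and the number of accepted small items does not drop for $h=0$.

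\emph{Property 3.} Modify $\ordalg$ so that whenever it would accept a large item that is not best-so-far, it rejects it instead, and then simulates the continuation as if it had accepted. That is, from that point on it accepts nothing. This is implementable because the algorithm still observes all ranks and sizes. A large item that is not best-so-far is never item $h$, so $\Pr[h\in\cdot]$ is unchanged: in those runs the original algorithm could not accept $h$ either, having used its capacity. On $h=0$ no large item exists, so nothing changes.

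\emph{Property 1.} Modify the algorithm so that once it has seen a large item, it never accepts a small item. On instances with $h\ge1$, accepting a small item only blocks later large items and contributes nothing to $\Pr[h\in\cdot]$. We must make sure forbidding small-item acceptance doesn't change the later decisions on large items. So we let the modified algorithm continue to simulate the original, pretending the small item was accepted, and accept a large item only if the original did. The original can never accept a large item after a small one, since $1+1/n>1$. The modified algorithm can only gain capacity, so feasibility holds, and $h=0$ instances are unaffected because no large item ever appears.

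\emph{Property 2.} After Property 1, any small item accepted is accepted before any large item has been seen. Modify the algorithm so that from the first accepted small item onward, it accepts every subsequent small item and rejects every large item. Feasibility holds because the total size is at most $n\cdot\frac1n=1$ once no large item is taken. For $h=0$ the number of accepted small items only increases. For $h\ge1$, the original algorithm, once it had accepted a small item, could never accept a large item, so $\Pr[h\in\cdot]$ is unchanged.

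The main obstacle is the bookkeeping in the Property 1 step. The modification changes the algorithm's internal state, namely whether the knapsack still has room. We must argue, via the simulation coupling above, that the modified run accepts item $h$ on exactly the same orderings $\pi$ and internal coin flips as the original. Finally, the three modifications must be applied in the order 3, 1, 2 and each must preserve the properties established earlier. For example, the Property 2 step accepts only small items and rejects large ones, so it does not break Properties 1 or 3. Since no term of $\rho$ decreased, the resulting algorithm is still optimal.
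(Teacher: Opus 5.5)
Your three modifications are the same local improvements the paper's proof uses: drop non-best-so-far large items, drop small items once a large item has been seen, and take every small item after the first accepted one. Your simulation coupling and compactness argument make rigorous what the paper leaves informal. The one step that fails is your closing claim that the Property-2 step does not break Property 1. Suppose the first small item is accepted at time $t$, a large item arrives at $t+1$, and a small item arrives at $t+2$. Your final algorithm then accepts the item at $t+2$, even though a large item preceded it.

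This is a defect of the statement as literally written rather than of your construction; the paper's proof of part 1, which argues about an arbitrary small item, has the same tension. Properties 1 and 2 cannot both hold for any algorithm with $\rho>2/n$. On the all-small instance, such an algorithm must accept its first small item at some $t\le n-2$ with positive probability, since otherwise it accepts at most two small items. An $h=1$ instance whose first $t$ arrivals are small produces the same history up to time $t$. With positive probability its large item is at position $t+1$ and a small item is at $t+2$, and that small item is forced to be accepted by Property 2 and rejected by Property 1.

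The fix is to read Property 1 as constraining only the \emph{first} accepted small item, i.e., the algorithm only starts accepting small items if no large item has arrived. This is all the LP uses. The event $B^{j+1}_{t'}$ in the proof of the fourth LP constraint refers to the first accepted small item, and Property 2 is used only on the $h=0$ instance in the second constraint. Under this reading your proof is complete, because the Property-2 step does not change when the first small item is accepted. Alternatively, keep Property 1 literal and let the Property-2 step accept subsequent small items only until a large item is seen. On $h=0$ this behaves exactly like your version, and it still leaves every $h\ge 1$ term unchanged.
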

\begin{proof}
    We start with part 1.
    Let $t$ be the arrival time of an arbitrary small item. 
    If, at time $t' < t$, a large item arrived, then the adversary has chosen an instance with $h \neq 0$ large items, so accepting at $t$ gives probability $0$ of accepting the best.
    The small item at time $t$ can therefore be rejected without loss in the auxiliary objective.

    We proceed to show part 2.
    Notice that for any optimal algorithm on these instances, once it accepts a small item, it can no longer fit large items as its remaining capacity is $1 - \frac{1}{n}$. Also, from that point on it can only gain value by accepting more small items.
    
    We proceed to show part 3. An algorithm can accept at most one large item, as that fills the capacity. Also, for instances with $h \in [14]$ large items, the objective is equal to the probability of accepting the highest value item, so  any large item that is not the best so far can therefore be rejected without loss in the auxiliary objective.
\end{proof}

\subsection{Linear Program Formulation}
\label{sec: linear program}

In this section, we use linear programming to bound, over the almost-ordinal algorithms that satisfy the properties of \Cref{sec: properties}, the optimal auxiliary objective value.
In \cref{tab:primal_dual} we give the Primal and Dual formulations and in \cref{lem: upper bound any algorithm is a feasible solution} we show that algorithms that satisfy the properties of \cref{lem: instances properties of optimal algorithms} correspond to a feasible solution of the Primal LP.
In \cref{lem: feasible dual solution} we find a feasible solution to the Dual LP, whose objective is below $1/e - 0.0003$, thus upper bounding the auxiliary objective of any almost-ordinal algorithm by weak duality.

\paragraph{Linear Program Variables.}
Let $\ordalg$ be an almost-ordinal algorithm that satisfies the properties of \cref{lem: instances properties of optimal algorithms}.
For all $t \in [n]$, $j \in \{0, \dots, 13\}$, let $\alpha^o_{t, j}$ be the probability of $\ordalg$ accepting item $\pi(t)$, conditional on $\pi(t)$ being a large item, its value being best so far and $j$ large items having arrived before $\pi(t)$:
\begin{align*}
    \alpha^o_{t, j} = \Pr_{\pi}\left[\pi(t) \in \ordalg(j+1, \pi) \middle| S_{t, j}^{j+1} \right].
\end{align*}
For all $t \in [n]$, let $\beta^o_t$ be the probability of $\ordalg$ accepting a small item for the first time on timestep $t$:
\begin{align*}
    \beta^o_t = \Pr_{\pi}\left[ \{\pi(k): k \leq t, v_{\pi(k)} = 1\} \cap \ordalg(0, \pi) = \{\pi(t)\} \right].
\end{align*}
Also let $r^o$ be the auxiliary objective of algorithm $\ordalg$:
\begin{align*}
    r^o = \rho(\ordalg).
\end{align*}
Motivated by this mapping, we use variables $r, \alpha_{t, j}, \beta_t$ for the Primal LP.

\paragraph{Linear Program Coefficients.}
For $h \in \{0, \dots 14\}, t \in [n], j \in \{0, 1, \dots, 14\}$, we define two useful events for our analysis, $S_{t, j}^h$ and $B_t^h$. 
$S_{t, j}^h$ is the event that in an instance with $h$ large items, $\pi(t)$ is a large item, is the best so far and $j$ large items arrived before $\pi(t)$:
\begin{align*}
    S_{t,j}^h
    = \{v_{\pi(t)} \neq 1, \pi(t) > \max ([h] \cap \pi[t-1]), |[h] \cap \pi[t-1]| = j\}.
\end{align*}
By \cref{lem: instances properties of optimal algorithms}, there exists an almost-ordinal algorithm with the highest auxiliary objective that only accepts large items if they are the best so far, so this event captures the relevant state for the acceptance decision of such an algorithm.
Then $\Pr_{\pi}[S_{t,j}^h] = \frac{\binom{t-1}{j} \binom{n-t}{h-j-1}}{\binom{n}{h}} \cdot \frac{1}{j+1}$.
$B_t^h$ is the event that in an instance with $h$ large items, no large items have arrived in the first $t$ arrivals:
\begin{align*}
    B_t^h
    = \{ [h] \cap \pi[t] = \emptyset \}.
\end{align*}
By \cref{lem: instances properties of optimal algorithms}, there exists an almost-ordinal algorithm with the highest auxiliary objective that only accepts small items if no large items have arrived earlier, so this event captures the relevant state for the acceptance decision of such an algorithm.
We also define four coefficients to make the linear programming formulations more concise.

Let $A_{t, j \to t', j'}$, for all $t \in [n]$, $j \in \{0, \dots, 14\}$, $t' > t$, $j' > j$, 
be the probability that in an instance with $j'+1$ large items, $\pi(t)$ is a large item, the best so far and $j$ large items arrived before $\pi(t)$, given that an item $\pi(t')$ arriving later is also large, best so far and $j'$ large items arrived before $\pi(t')$:
\begin{align*}
    A_{t,j\to t',j'} := \Pr_{\pi}[S_{t,j}^{j'+1}\mid S_{t',j'}^{j'+1}] = \frac{\binom{t-1}{j}\binom{t'-t-1}{j'-j-1}}{\binom{t'-1}{j'}} \cdot\frac{1}{j+1}.
\end{align*}

Let $D_{t \to t', j'}$, for all $t \in [n]$, $t' > t$, $j' \in \{0, \dots, 14\}$, 
be the probability that in an instance with $j'+1$ large items, none of them have arrived in the first $t$ timesteps, given that the later arriving item $\pi(t')$ is a large item, best so far and $j'$ large items have arrived before $t'$:
\begin{align*}
    D_{t \to t',j'} := \Pr_{\pi}[B_{t}^{j'+1}\mid S_{t',j'}^{j'+1}] = \frac{\binom{t'-1-t}{j'}}{\binom{t'-1}{j'}} 
\end{align*}

Let $w^h_{t, j}$, for all $t \in [n]$, $j \in \{0, \dots, 14\}$, $h \in [14]$,
be the probability that in an instance with $h$ large items, item $\pi(t)$ is the best overall and $j$ large items arrived before $\pi(t)$:
\begin{align*}
    w_{t,j}^h := \Pr_{\pi}\left[ \pi(t) = h, S^h_{t,j} \right] = \frac{1}{h} \cdot \frac{\binom{t-1}{j}\binom{n-t}{h-j-1}}{\binom{n}{h}}.
\end{align*}

Let $z_t$, for all $t \in [n]$, be the auxiliary objective we obtain by accepting small items starting with $\pi(t)$ in an instance with no large items:
\begin{align*}
    z_t := \frac{n-t+1}{n}.
\end{align*}

\begin{table}[htbp]
\centering
\caption{Primal and Dual Linear Programming Formulations}
\label{tab:primal_dual}
    \small 
    \begin{tabular}{p{0.45\textwidth} p{0.45\textwidth}}
        \toprule
        \textbf{Primal LP} & \textbf{Dual LP} \\ 
        \midrule
        $\begin{aligned}[t]
            \max\quad & r \quad & \\
            \text{s.t.} \quad & \sum_{t=1}^n \beta_t \le 1 \\
            \quad & \sum_{t=1}^n z_t \beta_t \ge r \\
            & \sum_{t=1}^n\sum_{j=0}^{h-1}w_{t,j}^h \alpha_{t,j} \ge r \quad \forall h \in [14] \\
            & \alpha_{t,j} + \sum_{t'<t}\sum_{j'<j}A_{t',j'\to t,j} \alpha_{t',j'} \\ &\qquad \qquad + \sum_{t'<t} D_{t'\to t,j} \beta_{t'}\le 1 \quad \forall t,j \\
            & r, \alpha_{t,j},\beta_t \ge 0 \quad &
        \end{aligned}$
        &
        $\begin{aligned}[t]
            \min\quad &\gamma+\sum_{t,j}\mu_{t,j}\\
            \text{s.t.} \quad & \lambda_0 + \sum_{h \in [14]} \lambda_h \ge 1 \\ 
            &\mu_{t,j} + \sum_{t'>t}\sum_{j'>j} A_{t,j\to t',j'}\mu_{t',j'} \\ &\qquad \qquad \ge \sum_{h \in [14]} w_{t,j}^h \lambda_h \quad \forall t,j \\ \\
            &\gamma + \sum_{t' > t}\sum_{j'} D_{t \to t',j'}\mu_{t',j'} \ge z_{t} \lambda_0 \quad \forall t \\ \\
            &\lambda_0,\lambda_h,\gamma,\mu_{t,j} \ge 0.
        \end{aligned}$ \\ 
        \bottomrule
    \end{tabular}
\end{table}

We now focus on showing that every almost-ordinal algorithm that satisfies the conditions of \cref{sec: properties}, corresponds to a feasible solution to the Primal LP, with its auxiliary objective being equal to the objective value of this solution.
We do this in \cref{lem: upper bound any algorithm is a feasible solution}.
Note that by \cref{lem: instances properties of optimal algorithms}, this is without loss of generality and there exists one such algorithm that achieves the highest possible auxiliary objective.
\begin{restatable}{lemma}{FeasibleSolutionAlgorithm}
\label{lem: upper bound any algorithm is a feasible solution}
    $r^o, \alpha^{o}_{t,j}, \beta^{o}_t$, is a feasible solution to the Primal LP and $r^o$ is the auxiliary objective of $\ordalg$.
\end{restatable}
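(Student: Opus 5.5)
We check each Primal LP constraint in turn for the quantities $r^o,\alpha^o_{t,j},\beta^o_t$ induced by an algorithm $\ordalg$ with the three properties of \cref{lem: instances properties of optimal algorithms}. Nonnegativity is immediate, since all of these are probabilities or minima of probabilities.

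\textbf{Small-item constraints.} The events defining $\beta^o_t$ (first small item accepted at step $t$ on the all-small instance) are disjoint across $t$. Hence $\sum_t \beta^o_t \le 1$.

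By property 2 of \cref{lem: instances properties of optimal algorithms}, once the first small item is accepted at step $t$, every later item is accepted. So the algorithm collects exactly $n-t+1$ items. This gives $\frac1n \E{\pi}{|\ordalg(0,\pi)|} = \sum_t z_t\beta^o_t \ge \rho(\ordalg) = r^o$.

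\textbf{Large-item constraints.} For $h\in[14]$, property 3 says only best-so-far large items are accepted. Therefore the event $h\in\ordalg(h,\pi)$ splits over the time $t$ and the count $j$ of earlier large items: $\Pr[h\in \ordalg(h,\pi)] = \sum_{t,j}\Pr[\pi(t)=h, S^h_{t,j}]\cdot \Pr[\text{accept}\mid \pi(t)=h,S^h_{t,j}]$.

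The key observation is that an almost-ordinal algorithm cannot distinguish the instance with $h$ large items from the one with $j+1$ large items on the prefix $\pi[t]$. On the event $S^h_{t,j}$ the observed history (sizes, small values, relative ranks) has the same conditional law as under $S^{j+1}_{t,j}$: a uniformly random placement of $j$ earlier large items with random relative order, and the current item on top. Moreover, conditioning further on $\pi(t)=h$ does not change the law of the prefix, since the event that the current item is overall best depends only on the future. Hence the acceptance probability equals $\alpha^o_{t,j}$. The sum is then $\sum_{t,j} w^h_{t,j}\alpha^o_{t,j} \ge r^o$.

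\textbf{Per-state constraint.} Fix $(t,j)$ and work in the instance with $j+1$ large items, conditioned on $S^{j+1}_{t,j}$. The algorithm can accept $\pi(t)$ only if it has not committed earlier. Committing earlier means one of two things, by properties 1 and 3:
\begin{itemize}
\item it accepted an earlier best-so-far large item at some state $(t',j')$ with $t'<t$, $j'<j$; or
\item it accepted a small item at some $t'<t$ before any large item arrived.
\end{itemize}
These events and the event of accepting $\pi(t)$ are disjoint, so their probabilities sum to at most $1$.

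By the same indistinguishability argument, the probability of the first kind of event is $A_{t',j'\to t,j}\alpha^o_{t',j'}$. For the second kind, before any large item arrives the prefix is identical to the all-small instance, so the behaviour matches the one recorded by $\beta^o_{t'}$. The probability that no large item appears by $t'$ is $D_{t'\to t,j}$, and the history up to $t'$ is then identically distributed to the all-small run. This yields the term $D_{t'\to t,j}\beta^o_{t'}$.

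Finally, $\alpha^o_{t,j}$ is itself the conditional probability of accepting at $(t,j)$, so the per-state constraint holds. Also $r^o=\rho(\ordalg)$ by definition, so $r^o$ is the auxiliary objective.

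\textbf{Main obstacle.} The delicate step is making the indistinguishability and conditional-independence claims precise. Specifically, the conditional law of the algorithm's view on $S^{j'+1}_{t',j'}\cap S^{j+1}_{t,j}$ (respectively $B^{j+1}_{t'}$) must factor as the coefficient $A$ (respectively $D$) times the unconditional acceptance probability $\alpha^o_{t',j'}$ (respectively $\beta^o_{t'}$). One also has to check that randomization in $\ordalg$ is handled by conditioning on the prefix alone. I would prove this with an explicit coupling of permutations: condition on the set of large-item positions and their relative order restricted to the prefix, and note that these are uniform within each event.
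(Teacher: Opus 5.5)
Your proposal is correct and follows the paper's proof essentially step for step. You use disjointness of the first-acceptance events for $\sum_t\beta^o_t\le 1$, property 2 for the $z_t$ constraint, the $(t,j)$ decomposition with indistinguishability of the $h$- and $(j+1)$-large-item instances on the prefix for the $w^h_{t,j}$ constraints, and the ``knapsack must still be empty'' bound split via the $A$ and $D$ coefficients for the per-state constraint; your extra justification (that conditioning on $\pi(t)=h$ leaves the prefix view unchanged, plus the sketched coupling for the factorization) makes explicit what the paper asserts in one line.
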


The proof follows by showing that this solution satisfies the four constraints of the Primal LP and is deferred to \cref{sec: appendix B}.

We also explicitly construct a solution to the Dual LP, whose objective is lower than $1/e - 0.0003$ and verify computationally that it is feasible. 
\begin{lemma}
\label{lem: feasible dual solution}
    For $n = 500$, there exists a feasible solution $\gamma^*, \mu^*_{t,j}, \lambda^*_h$ for the Dual LP, with objective value $\gamma^* + \sum_{t, j} \mu^*_{t, j} < 1/e - 0.0003$.
\end{lemma}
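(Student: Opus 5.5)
The plan is to exhibit an explicit dual certificate and verify it by computer with exact arithmetic. Weak duality then gives the bound. Fix $n=500$. The dual has unknowns $\gamma$, $\lambda_0,\lambda_1,\dots,\lambda_{14}$, and $\mu_{t,j}$ for $t\in[n]$ and $j\in\{0,\dots,13\}$. That is about $7000$ variables and $n\cdot 14 + n + 1$ constraints, all with explicit rational coefficients $A_{t,j\to t',j'}$, $D_{t\to t',j'}$, $w^h_{t,j}$ and $z_t$ given by the binomial formulas above. The first step is to solve the Dual LP numerically with a standard LP solver, or equivalently to solve the Primal LP and read off its dual multipliers. This yields approximate values $\gamma^*,\lambda^*_h,\mu^*_{t,j}$. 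The heuristic analysis in the technical overview suggests the optimum sits noticeably below $1/e$, roughly $1/e-0.0005$ or so, because of the extra pressure to accept late best-so-far large items. I expect the numerical optimum to leave slack relative to the target $1/e-0.0003$.

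The second step is to make the certificate rigorous. Floating-point LP output is only approximately feasible, so I would round each $\mu^*_{t,j}$ and $\lambda^*_h$ to a nearby rational and then repair feasibility in closed form. To do this, first rescale $\lambda$ so that $\lambda_0+\sum_h\lambda_h=1$ holds exactly. Then process the $\mu$-constraints in decreasing order of $t$. The constraint for $(t,j)$ involves only $\mu_{t',j'}$ with $t'>t$, so one can set
\[
\mu_{t,j}=\max\Bigl(\mu^*_{t,j},\ \sum_{h\in[14]}w^h_{t,j}\lambda_h-\sum_{t'>t}\sum_{j'>j}A_{t,j\to t',j'}\mu_{t',j'}\Bigr),
\]
which makes every such constraint hold exactly in rational arithmetic. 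Finally, set
\[
\gamma=\max_t\Bigl(z_t\lambda_0-\sum_{t'>t}\sum_{j'}D_{t\to t',j'}\mu_{t',j'}\Bigr)^+ .
\]
This triangular structure is what makes exact verification cheap. Nonnegativity of all variables is immediate from the construction.

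The third step is to evaluate $\gamma+\sum_{t,j}\mu_{t,j}$ exactly in rational arithmetic, or with interval arithmetic using outward rounding, and to confirm that it is below $1/e-0.0003$. Comparing against $1/e$ requires a certified rational upper bound on $1/e-0.0003$, which is easy to obtain from the alternating series for $e^{-1}$.

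The main obstacle I expect is numerical. The binomial ratios for $t$ near $n$ and $j$ up to $13$ span many orders of magnitude, and summing roughly $n^2\cdot 14^2/2$ terms in the back-substitution could accumulate error. There is also the risk that the repair step inflates the objective beyond the available slack. If that happens, I would try two fixes. First, re-solve the LP with a slightly strengthened target, for example requiring each constraint to hold with a margin of $10^{-9}$, so that rounding never activates the max. Second, run all coefficient computations in exact big-integer rationals (Python \texttt{fractions}), which is feasible at this size.
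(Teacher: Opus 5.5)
Your proposal is correct and follows the same route as the paper. The paper also proves this lemma purely computationally: it exhibits numerically obtained values $\gamma^*,\mu^*_{t,j},\lambda^*_h$ and has a script check every Dual LP constraint and the objective bound. Your exact-arithmetic repair by back-substitution in decreasing $t$ (valid because each $\mu_{t,j}$-constraint involves only $\mu_{t',j'}$ with $t'>t$, and the $A$ and $D$ coefficients are nonnegative) is a reasonable way to make that check fully rigorous.

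One small slip: to certify $\gamma+\sum_{t,j}\mu_{t,j}<1/e-0.0003$, you need a certified rational \emph{lower} bound on $1/e-0.0003$, not an upper bound. An odd-indexed partial sum of the alternating series for $e^{-1}$ gives one.
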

\begin{proof}
    We prove the statement computationally.
    Specifically, for $n = 500$ we find values for the variables $\gamma^*, \mu^*_{t, j}, \lambda^*_h$ such that: 
    (i) all the constraints of the Dual LP are satisfied by this solution and (ii) its objective value is:
    \begin{align*}
        \gamma^* + \sum_{t, j} \mu^*_{t, j} < \frac{1}{e} - 0.0003.
    \end{align*}
    The code can be found in the anonymized github repository: \url{https://anonymous.4open.science/r/soda2027-knapsack-secretary-C558/check_feasibility.py}.
    The solution is given in three pickle files \texttt{gamma.pkl, mu.pkl, lambda.pkl}. 
    The python script \texttt{check\_feasibility.py} loads them and checks the constraints one by one. 
    The output is True only if all the constraints are satisfied.
\end{proof}

\subsection{Putting Everything Together}
\label{sec: proof of theorem 1}

We now have all the ingredients necessary for the proof of \cref{thm: Theorem 1}. 

\TheoremOne*
\begin{proof}
    Let $n = 500$ and $\gamma^*, \mu^*_{t, j}, \lambda^*_{h}$ the feasible solution for the Dual LP from \cref{lem: feasible dual solution}.
    Also let $\ordalg^*$ be an almost-ordinal algorithm that maximizes $\rho(\ordalg^*)$ and $\alpha^{o}_{t,j}, \beta^{o}_{t}, r^{o}$ be the Primal LP solution that corresponds to it.
    Then we have:
    \begin{align*}
        \rho(\ordalg^*)
        = r^{o}
        \leq \gamma^* + \sum_{t,j}\mu^*_{t,j} 
        \leq \frac{1}{e} - 0.0003, \tag{1}
    \end{align*}
    where the first inequality is because $\ordalg^*$ has auxiliary objective $r^{o}$ by \cref{lem: upper bound any algorithm is a feasible solution}, the second inequality is due to weak duality since $\alpha^{o}_{t,j}, \beta^{o}_{t}, r^{o}$ is a feasible solution for the Primal LP and $\gamma^*, \mu^*_{t, j}, \lambda^*_{h}$ is a feasible solution for the Dual LP by \cref{lem: feasible dual solution} and the third inequality is due to \cref{lem: feasible dual solution}.
    We have that:
    \begin{align*}
        &\max_{\cardalg \in \cardfam} \min_{(s, v) \in \cI} \frac{\E{\pi}{v\left( \cardalg(s, v, \pi) \right) }}{v( O((s,v)))} \\
        &\leq \left(1 + \frac{1}{n^2}\right) \cdot \max_{\ordalg \in \ordfam} \rho(\ordalg) + \frac{2}{n^2} \\
        &= \left(1 + \frac{1}{n^2}\right) \cdot \rho(\ordalg^*) + \frac{2}{n^2} \\
        &\leq \left( 1 + \frac{1}{n^2} \right) \cdot \left( \frac{1}{e} - 0.0003 \right) + \frac{2}{n^2} \\
        &< \frac{1}{e} - 0.0001,
    \end{align*}
    where the first line is due to \cref{lem: instances reduction to ordinal object ordinal algorithm}, the second line is by definition of $\ordalg^*$ as the optimal almost-ordinal algorithm, the third line is due to (1) and the fourth line follows by substituting $n = 500$.
\end{proof}

%% file: 4-Algorithm.tex
\section{The Algorithm}

In this section we present a \(0.178\)-competitive algorithm in expectation for the knapsack secretary problem. We first define a density-greedy reference solution on small items and show that, together with one additional high-value item outside this solution, it
upper-bounds \(v(\opt)\). We then describe the three-phase algorithm. In~\cref{sec:PhaseII}, we analyze the value-record phase: it gives a lower bound on the probability of selecting each of the top value ranks and shows that the event of reaching the final phase leaves the suffix unbiased. In~\cref{sec:PhaseIII}, we analyze the density-greedy phase and derive an itemwise selectability guarantee for the items of the reference solution. This already gives a $0.143$-competitive algorithm. Finally, in~\cref{section:Breaking}, we refine the analysis by exploiting the size-dependence of the Phase-III guarantee.

\begin{theorem}
Algorithm~\ref{alg:positive} is \(0.178\)-competitive in expectation for the knapsack secretary problem.
\end{theorem}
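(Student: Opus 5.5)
The plan is to prove the guarantee by splitting $\opt$ through the decomposition $\opt \le v(DG(1)) + v_z$ of Lemma~\ref{lem:decomp}, and then lower-bounding separately the value the algorithm collects against each of the two terms. Here \(DG(1)\) is the density-greedy solution on the small items (size at most $1/2$), and \(z\) is the most valuable item outside \(DG(1)\).

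We work in continuous time, with arrival times $t_i\sim U[0,1]$ i.i.d., and with phase boundaries $0<a<b<1$.

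\textbf{Phase II.} The algorithm samples during $[0,a)$ and, during $[a,b)$, accepts the first item whose value beats every item seen so far. The standard secretary computation (Lemma~\ref{lem:second-phase}) says the item of rank $k$ in value is accepted in Phase~II with probability at least some explicit $p_k(a,b)$. For the top item, $p_1(a,b)=a\ln(b/a)$. In particular $z$, which has rank at most one more than the maximum outside $DG(1)$, gets a guaranteed probability that can be charged against $v_z$.

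By Lemma~\ref{lem:phase-three-probability}, the event "no item is accepted in Phase~II" is independent enough of the suffix: conditioned on it, the arrivals after $b$ remain a uniform random sample (Lemma~\ref{lem:phase-three-conditioning}). Its probability is at least $a/b$.

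\textbf{Phase III.} Fix $i\in DG(1)$ that arrives at time $t\ge b$. Since $DG(1)$ restricted to the items seen by time $t$ is contained in $DG(t)$, item $i$ is attempted. It is accepted whenever the capacity used so far is at most $1-s_i$. By Lemma~\ref{lem:expected-attempted-mass}, the expected mass accepted before $t$ is at most $\int_b^t \frac{1}{\tau}\,d\tau=\ln(t/b)$. Markov's inequality then gives the size-sensitive selectability of Lemma~\ref{lem:size-sensitive-selectability}: item $i$ is accepted with probability at least roughly
\[
\frac{a}{b}\int_b^1\Bigl(1-\frac{\ln(t/b)}{1-s_i}\Bigr)^{+}dt .
\]
With $s_i\le 1/2$ uniformly, this is the $1/7$ analysis.

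\textbf{Combining the bounds.} We write each item's Phase-III selectability as $c$ minus a deficit $d(s_i)$ (Definition~\ref{def:deficit}), where $c=0.178$ is the target ratio. Since $d$ is increasing in $s$ and $\sum_{i\in DG(1)}s_i\le 1$, Corollary~\ref{cor:worst-case-deficit} bounds the worst-case total weighted deficit $\sum_{i} d(s_i)v_i$. The key step is to show that Phase~II covers this shortfall. Items of $DG(1)$ carrying a large deficit have size close to $1/2$, so there are at most two of them, and each is highly ranked relative to its density-ordered peers. Phase~II selects high-rank items with probability $p_k(a,b)$, which contributes at least $\sum_k p_k v_{(k)}$. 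We charge this contribution both to $v_z$ (needing probability at least $c$ on $z$) and to the residual deficit of the large items. This reduces the problem to a finite optimization over the ranks and sizes of at most a few large items together with $z$. We then fix numerical $a,b$ (for example $a\approx 0.2$, $b\approx 0.45$) and verify that every case yields at least $0.178\,(v(DG(1))+v_z)$.

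The main obstacle is this charging step. A single item of rank $1$ may be both $z$ and the most valuable item of $DG(1)$, or it may be neither, so the Phase-II probability on each rank must not be double-counted. My first attempt is to order all items by value, assign $p_1$ entirely to whichever item is largest, and then show case by case (largest item is $z$; largest item is in $DG(1)$ with size above some threshold; largest item is in $DG(1)$ and small) that $p_1+p_2$ suffices to cover the worst-case deficit mass of at most two near-half-size items plus $c\,v_z$. The numerical inequalities at the chosen $a,b$ are checked last.
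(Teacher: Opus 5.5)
\textbf{There is a genuine gap in the charging step.} Your outline follows the paper's architecture: the decomposition $\opt\le v(DG(1))+v_z$, the Phase-II rank probabilities $P_j(a,b)$, the Markov-based size-sensitive bound $q_s(a,b)$ in Phase III, and deficits. But the charging step, which you correctly flag as the crux, is never actually supplied, and the plan you sketch for it fails.

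\emph{Ranks $1$ and $2$ do not suffice.} Stay within your own framework, i.e.\ certify $c\,(v(DG(1))+v_z)$ from these bounds. Let $z$ and two items $i_1,i_2\in DG(1)$ of size $1/2$ have (nearly) equal value $V$ and occupy the top three value ranks, with all other values negligible. Phase III certifies only $q_{1/2}(a,b)V$ for each $i_k$, so Phase II must supply $(\rho+2\delta_{1/2})V$. Ranks $1,2$ certify only $(P_1+P_2)V$. At the paper's parameters $P_1+P_2\approx 0.269<\rho+2\delta_{1/2}\approx 0.304$, and the shortfall persists over the narrow range of $(a,b)$ compatible with $P_1\ge 0.178$ and $q_0\ge 0.178$. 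This is exactly why the paper needs the third condition $P_1+P_2+P_3\ge\rho+2\delta_{1/2}$.

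\emph{The reduction to at most two large items is unjustified.} With $\rho$ near $q_0(a,b)$, essentially every item of $DG(1)$ carries a positive deficit, roughly proportional to $s_i$. So arbitrarily many deficit-carrying items can interleave in value with $z$ and with each other, and no finite case analysis covers $\sum_i\delta_i v_i$. The bound $\sum_i\delta(s_i)\le 2\delta(1/2)$ needs convexity of $\delta$ together with $\delta(0)=0$ (i.e.\ $\rho\le q_0(a,b)$), not just monotonicity. It also controls the unweighted sum, not the weighted one as you state.

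\emph{Paying for $z$.} $z\notin DG(1)$ by definition, and its overall rank can be as large as $|DG(1)|+1$. So $z$ does not get a guaranteed Phase-II probability; $\rho v_z$ must be paid by selecting whichever items rank above it.

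\textbf{The missing idea} is the prefix-sum (majorization) argument of Lemma~\ref{lem:charging-total-size-deficits}. Sort the required coefficients ($\rho$ for $z$, $\delta_i$ for each $i\in DG(1)$) in non-increasing order as $\kappa_1\ge\kappa_2\ge\cdots$. Since $\delta_i\le\delta_{1/2}\le\rho$ and $\sum_i\delta_i\le 2\delta_{1/2}$, the prefix sums are at most $\rho$, then $\rho+\delta_{1/2}$, then $\rho+2\delta_{1/2}$ from the third on. Suppose the prefix sums of the $P_j$ dominate these three bounds. Then summation by parts against the sorted values $V_1\ge V_2\ge\cdots\ge 0$ gives $\sum_jP_jV_j\ge\sum_j\kappa_jV_j$. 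Rearrangement, plus coordinatewise domination of the sorted values of $\{z\}\cup DG(1)$ by $(V_j)$, gives $\sum_j\kappa_jV_j\ge\rho v_z+\sum_i\delta_iv_i$. This holds uniformly over all instances, and Phase II and Phase III contributions add because they occur on disjoint events. It also removes your double-counting worry, since each rank's probability is used exactly once.

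\textbf{Your example parameters cannot work.}
\begin{itemize}
\item For $b=0.45$, $P_1=a\ln(b/a)\le b/e\approx 0.166<0.178$ for every $a$. One dominant size-$1$ item among many negligible ones shows the algorithm itself is then below $0.178$.
\item $q_0(0.2,0.45)\approx 0.134$, which violates $\rho\le q_0$.
\end{itemize}
The requirements $P_1\ge 0.178$ and $q_0\ge 0.178$ confine $b$ to roughly $[0.59,0.65]$. The paper uses $b=0.61867$, $a=0.39189$, with $\rho=q_0(a,b)\approx 0.1789$.
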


\paragraph{Continuous-time model.}
We use the following equivalent continuous-time formulation of the random-order model. The adversary fixes the set of items, together with their sizes and values. Each item \(i\) is then assigned an arrival time \(t_i\), independently and uniformly from \([0,1]\), and the algorithm observes the items in increasing order of arrival time. Since the arrival times are exchangeable and ties occur with probability zero, the induced ordering of the items is a uniformly random permutation. Conversely, any random-order instance can be viewed in this continuous-time form by independently drawing \(n\) uniform times, sorting them, and assigning the \(k\)-th smallest time to the \(k\)-th arriving item. Thus the continuous-time and random-order formulations are equivalent. This continuous-time viewpoint is standard in the secretary literature~\cite{Bruss84}.

\begin{remark}
Throughout the analysis we assume that all items have positive size. This is without loss of generality. Any item of size \(0\) and nonnegative value can be accepted immediately upon arrival without consuming capacity, and doing so can only increase the value obtained by the algorithm.  After removing such items, all remaining items have positive size, so the density \(d_i=v_i/s_i\) is well-defined.
\end{remark}

We begin by introducing some notation and terminology.  For an item \(i\), define its density by $d_i = \frac{v_i}{s_i}$. We write \(t_i\) for the arrival time of item \(i\), and $v_{\max}=\max_{i\in[n]}v_i$ for the highest value item.

\begin{definition}
An item \(i\in[n]\) is called small if \(s_i\le 1/2\), and large otherwise.
\end{definition}

Without loss of generality, we assume that no two items have the same value or density. If there are ties, we add infinitesimal noise to the value of the items to break the tie. All density orders below are with respect to decreasing density. For a set \(T\) of small items, define \(DG(T)\) to be the set of items fully selected by the density-greedy solution on \(T\):
\[
    DG(T)
    =
    \left\{
        i\in T:
        \sum_{\substack{j\in T\\ d_j>d_i}}s_j \le 1-s_i
    \right\}.
\]
Equivalently, \(i\in DG(T)\) if all higher-density items in \(T\), together with \(i\), fit into the knapsack. For each time \(t\), let $S(t)=\{i\in[n]:s_i\le1/2,\ t_i\le t\}$ be the set of small items observed by time \(t\), and write $DG(t)=DG(S(t))$. The following lemma states a key observation. Its proof is deferred to~\cref{app:lem:decomp}. 

\begin{lemma}\label{lem:decomp}
Let \(z\in\arg\max_{i\notin DG(1)}v_i\), with the convention that \(v_z=0\) if \(DG(1)=[n]\).  Then
\[
    v(\opt)\le v(DG(1))+v_z.
\]
\end{lemma}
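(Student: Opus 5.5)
Fix an optimal solution $O=O(s,v)$ and split it into its large items $O_L=\{i\in O: s_i>1/2\}$ and its small items $O_S=O\setminus O_L$. Since $s(O)\le 1$, $O$ contains at most one large item, so $|O_L|\le 1$. The plan is to bound $v(O_S)$ by the density-greedy value plus at most one item of $O_S$ outside $DG(1)$, and then merge that leftover item with the large item of $O_L$ into a single $v_z$ term. Note that $DG(1)=DG(S(1))$ is built from all small items.

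The core step is a fractional-knapsack exchange bound on the small items. Sort the small items by decreasing density and let $k$ be the first item in this order that is not in $DG(1)$. Every item ahead of $k$ fits, so the prefix before $k$ is exactly $DG(1)$ restricted to that prefix. Let $P$ denote this prefix. Then $s(P)+s_k>1$ by the definition of $DG(T)$. Later items may still belong to $DG(1)$, since the definition only asks that the higher-density items fit, but each of them is a bonus of nonnegative value.

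We compare $O_S$ with $P\cup\{k\}$. Because $s(O_S)\le 1<s(P)+s_k$ and every item of $P\cup\{k\}$ has density at least that of every small item outside $P\cup\{k\}$, the standard exchange argument for fractional knapsack gives
\[
v(O_S)\le v(P)+v_k\le v(DG(1))+v_k.
\]
Concretely, $v(O_S\setminus(P\cup\{k\}))\le d_k\, s(O_S\setminus(P\cup\{k\}))$, the size budget satisfies $s(O_S\setminus(P\cup\{k\}))\le s((P\cup\{k\})\setminus O_S)$, and each item in $(P\cup\{k\})\setminus O_S$ has density at least $d_k$. If no such $k$ exists, then $DG(1)$ contains all small items and $v(O_S)\le v(DG(1))$ trivially.

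It remains to combine the two parts. If $O_L=\emptyset$, then $v(O)\le v(DG(1))+v_k\le v(DG(1))+v_z$, since $k\notin DG(1)$. If $O_L=\{\ell\}$, then $s_\ell>1/2$ forces $s(O_S)<1/2$. In that case the small part of $O$ should be covered by $DG(1)$ alone, with no extra item needed. The reason is that $s(P)+s_k>1$ and $s_k\le 1/2$ give $s(P)>1/2>s(O_S)$. The same exchange argument, now applied with the budget $s(O_S)\le s(P)$ against the prefix $P$ alone, bounds every item of $O_S\setminus P$ by density at most that of the lowest-density item of $P$. This yields $v(O_S)\le v(P)\le v(DG(1))$. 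Then $v(O)\le v(DG(1))+v_\ell\le v(DG(1))+v_z$, because $\ell$ is large and hence $\ell\notin DG(1)$.

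The main care point is this exchange step with budget $s(O_S)\le s(P)$. We need a density threshold $d^*$ that every item of $P$ meets or exceeds and every item of $O_S\setminus P$ falls below. This holds because $P$ is a density prefix of the small items. We then write $v(O_S\cap P)+v(O_S\setminus P)\le v(O_S\cap P)+d^*\,s(P\setminus O_S)\le v(P)$. The convention $v_z=0$ covers the degenerate case in which every item lies in $DG(1)$: then there are no large items, all small items fit, and $v(O)\le v(DG(1))$.
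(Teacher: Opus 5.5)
Your proof is correct and follows the same route as the paper. You split $O$ into $O_S$ and $O_L$. When $O_L=\emptyset$, you bound $O_S$ by the density-greedy prefix plus the single boundary item. When $O_L=\{\ell\}$, you use $s(O_S)<1/2$ and the fact that the greedy prefix has size above $1/2$ to get $v(O_S)\le v(DG(1))$; the paper phrases this step as a case split on whether $s(DG(1))<1/2$. In both cases the boundary item or $\ell$ is charged to $v_z$.

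One harmless aside concerns your remark that later items may still belong to $DG(1)$. Under the paper's definition, $i\in DG(T)$ requires all higher-density items of $T$ (not just the selected ones) plus $i$ to fit. So $DG(1)$ is exactly the prefix $P$, and no item after $k$ can belong to it. Your argument only uses $P\subseteq DG(1)$, so nothing changes.
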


\begin{corollary}\label{cor:decomp_vmax}
$v(\opt)\le v(DG(1))+v_{\max}.$
\end{corollary}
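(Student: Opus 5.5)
The corollary follows directly from \cref{lem:decomp}. The only point to verify is that the extra item term in that lemma never exceeds $v_{\max}$.

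Fix $z\in\arg\max_{i\notin DG(1)}v_i$ as in \cref{lem:decomp}. There are two cases.

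\emph{Case $DG(1)\neq[n]$.} Then $z$ is a genuine item of $[n]$. Since $v_{\max}=\max_{i\in[n]}v_i$ is a maximum over all items, we get $v_z\le v_{\max}$.

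\emph{Case $DG(1)=[n]$.} By the convention of \cref{lem:decomp}, $v_z=0$. All values are nonnegative, so $v_z=0\le v_{\max}$. If $n=0$, the bound is trivial.

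In both cases $v_z\le v_{\max}$. Substituting this into \cref{lem:decomp} gives
\[
v(\opt)\le v(DG(1))+v_z\le v(DG(1))+v_{\max}.
\]

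There is no real obstacle here. The one point needing care is the degenerate convention $v_z=0$, which is handled by the nonnegativity of values.
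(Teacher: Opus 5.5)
Your proof is correct and is essentially the paper's argument. The paper states the corollary without proof, as an immediate consequence of Lemma~\ref{lem:decomp} via $v_z\le v_{\max}$, and you make that step explicit, including the convention $v_z=0$, which nonnegativity of values handles.
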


Our algorithm is designed to obtain value from the two terms in Lemma~\ref{lem:decomp}. It has three phases. Phase I is an exploration phase: all items are rejected, and the algorithm records the maximum value seen. Phase II is a value-record phase: an item whose value exceeds the Phase-I maximum is accepted, and the algorithm terminates. This phase is used to pay for the high-value item \(z\), and later also for the small deficits left by the final phase. If no item is selected in Phase II, then the algorithm enters Phase III. In Phase III, when a small item \(i\) arrives, the algorithm computes the current density-greedy solution \(DG(t_i)\). If \(i\in DG(t_i)\), the algorithm attempts to accept \(i\), and accepts it exactly when it fits in the remaining capacity; otherwise it rejects \(i\). The formal description is given in~\cref{alg:positive}.

\begin{algorithm}[ht]
\caption{Secretary then Small Density Greedy}\label{alg:positive}

\KwIn{Parameters \(0<a<b<1\); knapsack capacity normalized to \(1\)}
\KwOut{A feasible set of accepted items}

\(A\gets \emptyset\);

\(M\gets -\infty\)\tcp*{maximum value seen so far}

\BlankLine
\textbf{Phase I: sample.}

\For{each item \(i\) arriving at time \(t_i\le a\)}{
Reject \(i\); \(M\gets \max\{M,v_i\}\);
}

\BlankLine
\textbf{Phase II: value-record phase.}

\For{each item \(i\) arriving at time \(t_i\in(a,b]\)}{
\If{\(v_i>M\)}{
Accept \(i\);
\Return{\(\{i\}\)};
}
}

\BlankLine
\textbf{Phase III: small-item density phase.}

\If{no item arrived by time \(b\)}{
    With probability \(1-a/b\), \Return{\(\emptyset\)};
}

\(W\gets 0\)\tcp*{current accepted size in Phase III}

\For{each item \(i\) arriving at time \(t_i>b\)}{
\If{\(s_i>1/2\)}{
Reject \(i\);
}
\Else{
Let \(S(t_i)=\{j\in[n]:s_j\le1/2,\ t_j\le t_i\}\);

Compute \(DG(t_i)\), the density-greedy solution on \(S(t_i)\);

\eIf{\(i\in DG(t_i)\)}{
\eIf{\(W+s_i\le1\)}{
Accept \(i\);
\(A\gets A\cup\{i\}\);
\(W\gets W+s_i\);
}{
Reject \(i\);
}
}{
Reject \(i\);
}
}
}

\Return{\(A\)};

\end{algorithm}

\begin{remark}
The small randomization in the case where no item arrives before time \(b\) is included only for analytical cleanliness. It ensures that the event of proceeding to Phase III has probability \(a/b\) for every realized prefix set, including the empty prefix, and therefore does not bias the distribution of the suffix. We explain this decision in greater detail in \cref{sec:PhaseIII}.
\end{remark}

\subsection{Contribution of Phase II}\label{sec:PhaseII}

In this section we study two quantities: the probability that Phase II selects the maximum-value item, and the probability that the algorithm reaches Phase III. The proofs of the following can be found in~\cref{app:lem:second-phase}~and~\cref{app:lem:phase-three-probability}, respectively.

\begin{lemma}\label{lem:second-phase}
Phase II selects the maximum-value item with probability at least \(a\ln(b/a)\).
\end{lemma}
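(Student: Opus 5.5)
Let $m$ denote the maximum-value item and condition on its arrival time $t_m = t$. Phase II can accept $m$ only if $t \in (a,b]$, so we integrate over this interval. Fix $t\in(a,b]$. If every item arriving in $(a,t)$ has value at most $M$, the Phase-I maximum, then Phase II has not stopped before time $t$. At time $t$ we have $v_m > M$, since $m$ is the overall maximum and did not arrive in Phase I. So $m$ is accepted. Strictly, Phase II might also stop earlier on an item of value above $M$; the sufficient condition just stated rules that out.

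To make the condition tractable, consider the highest-value item among those other than $m$ that arrived before time $t$. If this item arrived in $[0,a]$, then every item arriving in $(a,t)$ has value at most $M$, because $M$ equals that item's value. The condition therefore holds. Since the arrival times are independent and uniform, conditioned on $t_m=t$ the other items' arrival times remain i.i.d.\ uniform on $[0,1]$. Given that some item other than $m$ arrived before $t$, the maximum-value item among those arrivals is uniformly distributed on $[0,t]$. It therefore lies in $[0,a]$ with probability $a/t$.

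The degenerate case is that no item other than $m$ arrives before $t$. Then $M=-\infty$ and $m$ is accepted at time $t$ (with $t>a$) anyway, so the conditional success probability is $1 \ge a/t$. In every case the conditional probability of selecting $m$ is at least $a/t$. Integrating over $t_m$, uniform on $[0,1]$, gives
\[
\Pr[\text{Phase II selects } m] \;\ge\; \int_a^b \frac{a}{t}\,dt \;=\; a\ln(b/a).
\]

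The only step that needs care is justifying the uniform position of the prior maximum under the conditioning, which follows from exchangeability of the i.i.d.\ arrival times. Tie-breaking is already assumed away, so nothing else is needed.
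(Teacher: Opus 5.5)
Your proposal is correct and follows essentially the same argument as the paper: you condition on the arrival time $\tau \in (a,b]$ of the maximum-value item, use the sufficient condition that the best earlier arrival lies in $[0,a]$ (probability $a/\tau$, with the empty-prefix case only helping), and integrate to get $a\ln(b/a)$. Your added justification of the uniform arrival time via independence of the uniform arrival times is the same step the paper states more briefly.
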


\begin{lemma}\label{lem:phase-three-probability}
The algorithm reaches Phase III with probability exactly \(a/b\).
\end{lemma}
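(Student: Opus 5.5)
We must prove that the algorithm reaches Phase III with probability exactly \(a/b\). The plan is to condition on the set \(P\) of items arriving by time \(b\) and show that, for every realized nonempty \(P\), the conditional probability of entering Phase III is exactly \(a/b\). The empty-prefix case is handled directly by the randomization step.

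\textbf{Nonempty prefix.} Fix a nonempty \(P\). Conditional on \(P\), the arrival times of the items of \(P\) are i.i.d.\ uniform on \([0,b]\). The algorithm fails to stop in Phase II exactly when no item arriving in \((a,b]\) has value exceeding the running maximum \(M\) of Phase I. Let \(m=\arg\max_{i\in P} v_i\); this is unique because ties have been broken.

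We claim that reaching Phase III is equivalent to \(t_m\le a\).
\begin{itemize}
\item If \(t_m\le a\), then \(M=v_m\) at the end of Phase I. No item of \(P\) exceeds \(v_m\), so nothing is accepted in Phase II.
\item If \(t_m\in(a,b]\), then every earlier arrival has value below \(v_m\). Either some earlier item in \((a,b]\) already exceeded \(M\) and was accepted, or \(m\) itself exceeds \(M\), which holds since \(v_m>M\), including the case \(M=-\infty\) when Phase I is empty. Either way the algorithm stops in Phase II.
\end{itemize}
Since \(t_m\) is uniform on \([0,b]\) given \(P\), we get \(\Pr[\text{Phase III}\mid P]=a/b\).

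\textbf{Empty prefix.} If \(P=\emptyset\), Phase II accepts nothing. The explicit coin then continues with probability \(a/b\), so \(\Pr[\text{Phase III}\mid P=\emptyset]=a/b\).

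\textbf{Conclusion.} Averaging over \(P\) by the law of total probability gives exactly \(a/b\). The same argument shows the event is independent of \(P\), which is the unbiasedness property mentioned in the remark.

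The main subtlety is the equivalence in the nonempty case. One must check that the uniformity of \(t_m\) on \([0,b]\) survives conditioning on the set \(P\), which holds because the arrival times are i.i.d. One must also handle the boundary cases of the equivalence, namely an empty Phase I (\(M=-\infty\)) and the possibility that an earlier record in \((a,b]\) is accepted before \(m\) arrives.
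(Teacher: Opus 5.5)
Your proposal is correct and follows the same argument as the paper: condition on the set of items arriving by time $b$, show that for a nonempty prefix the algorithm reaches Phase III exactly when the prefix's maximum-value item arrives by time $a$ (probability $a/b$, since its arrival time is uniform on $[0,b]$), and handle the empty prefix with the randomization step. Your explicit checks of the $M=-\infty$ boundary case and of the need for distinct values (with ties the probability would no longer be $a/b$) make the same proof slightly more careful.
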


\subsection{Contribution of Phase III}\label{sec:PhaseIII}

Throughout this section, fix an item \(i\in DG(1)\). We examine the probability that this item is selected by the algorithm. Conditional on reaching Phase III, the suffix remains unbiased. We record this formally next.

\begin{definition}
Let \(\mathcal F\) be the event that the algorithm proceeds to the suffix-processing part of Phase III.
\end{definition}

\begin{lemma}
\label{lem:phase-three-conditioning}
Conditioned on \(\mathcal F\), for every fixed item \(i\),
\[
    \Prob{t_i>b\mid \mathcal F}=1-b.
\]
Moreover, conditioned on \(t_i>b\) and \(\mathcal F\), the relative arrival time of \(i\) inside \((b,1]\) is uniform.
\end{lemma}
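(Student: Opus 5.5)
The key claim is that the event $\mathcal F$ is a function of the prefix (the items arriving by time $b$ and their values) together with independent coin flips, and that its conditional probability given the realized prefix set is the constant $a/b$. I will condition on the set $P=\{j: t_j\le b\}$ of items arriving by time $b$. Given $P$, the arrival times of items in $P$ are i.i.d.\ uniform on $[0,b]$. The arrival times of items outside $P$ are i.i.d.\ uniform on $(b,1]$, independent of everything in the prefix. Since $\mathcal F$ is determined by the prefix arrival times, the values, and the extra coin, conditioning on $\mathcal F$ and $P$ leaves the suffix times i.i.d.\ uniform on $(b,1]$. This already gives the second claim, once we also condition on $i\notin P$.

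The main step is to show $\Prob{\mathcal F\mid P}=a/b$ for every nonempty $P$. Given $P$, let $m$ be the maximum-value item of $P$. I claim that, given $P\neq\emptyset$, the algorithm accepts nothing in Phase II exactly when $t_m\le a$. If $t_m\le a$, then $M\ge v_m$, so no item of $(a,b]$ beats $M$. If $t_m\in(a,b]$, then some item in $(a,b]$ is accepted: the first arrival in $(a,b]$ exceeding the running maximum of Phase I exists, because $m$ itself exceeds all Phase-I values. This holds even when no item arrives in $[0,a]$, since then $M=-\infty$. Conditionally on $P$, $t_m$ is uniform on $[0,b]$, so this event has probability $a/b$. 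For $P=\emptyset$, the algorithm reaches the suffix part with probability exactly $a/b$ by the explicit randomization. Hence $\Prob{\mathcal F\mid P}=a/b$ for every $P$, and $\mathcal F$ is independent of the random set $P$.

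It follows that $\Prob{t_i>b\mid\mathcal F}=\Prob{i\notin P\mid \mathcal F}=\Prob{i\notin P}=1-b$. For the uniformity claim, $\Prob{t_i\in(b,x]\mid \mathcal F}=\sum_{P\not\ni i}\Prob{P\mid\mathcal F}\,\Prob{t_i\le x\mid P,\mathcal F}$. Each inner term equals $(x-b)/(1-b)$ by the conditional independence of the suffix established in the first step, so the conditional law of $t_i$ on $(b,1]$ is uniform. The only delicate point is the Phase-II characterization via the prefix maximum, in particular the edge cases of an empty $[0,a]$ window and an empty prefix. The randomization in Algorithm~\ref{alg:positive} is designed precisely to handle the empty-prefix case.
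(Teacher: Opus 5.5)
Your proposal is correct and follows essentially the same route as the paper: condition on the prefix set, show $\Pr[\mathcal F\mid P]=a/b$ for every realization via the prefix maximum (with the explicit coin covering $P=\emptyset$), deduce that $\mathcal F$ is independent of $P$, and use that the suffix arrival times are independent of the prefix given $P$. Your handling of the uniformity claim, which mixes over $P$ using conditionally i.i.d.\ uniform suffix times, is a bit more explicit than the paper's appeal to the relative order of the suffix items, but the argument is the same.
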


\begin{proof}
Let $B=\{j:t_j\le b\}$ be the prefix set. We claim that $\Prob{\mathcal F\mid B}=\frac ab$ for every realization of \(B\).

First suppose \(B\neq\emptyset\), and let \(y\) be the maximum-value item in \(B\).  Conditional on \(B\), the algorithm proceeds to Phase III if and only if Phase II selects no item. We claim that this happens if and only if \(y\) arrives in Phase I, i.e., at time at most \(a\). Indeed, if \(y\) arrives by time \(a\), then no item arriving in \((a,b]\) can beat the sample maximum, and so Phase II selects no item. Conversely, suppose that \(y\) arrives in \((a,b]\). If Phase II has not selected any item before time \(t_y\), then \(y\) beats the Phase-I sample maximum and is selected when it arrives. If Phase II has already selected an item before time \(t_y\), then Phase II has selected an item anyway. Thus whenever \(t_y\in(a,b]\), some item is selected in Phase II, and the algorithm does not proceed to Phase III.

Conditioned on the set \(B\), the arrival time of \(y\) is uniform on \([0,b]\).
Hence
\[
    \Prob{\mathcal F\mid B}
    =
    \Prob{t_y\le a\mid B}
    =
    \frac ab .
\]

If \(B=\emptyset\), then by the additional randomization step the algorithm proceeds to the suffix-processing part of Phase III with probability \(a/b\). Thus, for every realization of \(B\),
\[
    \Prob{\mathcal F\mid B}=\frac ab .
\]

Therefore conditioning on \(\mathcal F\) does not bias the identity of the prefix set \(B\), and hence does not bias the identity of the suffix set \([n]\setminus B\). Since each item belongs to the suffix independently with probability \(1-b\), it follows that, for every fixed item \(i\), $\Prob{t_i>b\mid \mathcal F}=1-b.$

Finally, conditional on the prefix set \(B\), the event \(\mathcal F\) depends only on the relative order of the items in \(B\), together with the independent coin flip in the case \(B=\emptyset\). It is independent of the arrival times and relative order of the items in the suffix. Consequently, after conditioning on \(\mathcal F\) and on the suffix set, the relative order of the suffix items remains uniformly random. In particular, conditioned on \(t_i>b\) and \(\mathcal F\), the relative arrival time of \(i\) inside \((b,1]\) is uniform.
\end{proof}

We next show that every item of \(DG(1)\) is attempted by the Phase-III rule if it arrives after Phase III has begun.

\begin{lemma}
\label{lem:dg-monotonicity}
For every item \(i\in DG(1)\), we have $i\in DG(t_i)$.
\end{lemma}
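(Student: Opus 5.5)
The plan is a direct monotonicity argument on the prefix-sum condition in the definition of \(DG(T)\). First observe that \(i\in DG(1)\) forces \(i\) to be small: \(DG(1)=DG(S(1))\) and \(S(1)\) contains only items of size at most \(1/2\). Since \(t_i\le t_i\), we also have \(i\in S(t_i)\). So \(i\) is a legitimate candidate for membership in \(DG(t_i)\), and only the capacity condition remains to be checked.

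The key step is the inclusion \(S(t_i)\subseteq S(1)\), which holds because every item with arrival time at most \(t_i\) has arrival time at most \(1\). Hence the set of higher-density items in \(S(t_i)\) is a subset of the set of higher-density items in \(S(1)\):
\[
\{j\in S(t_i): d_j>d_i\}\subseteq\{j\in S(1): d_j>d_i\}.
\]
All sizes are positive (by the standing remark), so summing sizes over the smaller set gives
\[
\sum_{\substack{j\in S(t_i)\\ d_j>d_i}} s_j\;\le\;\sum_{\substack{j\in S(1)\\ d_j>d_i}} s_j\;\le\;1-s_i .
\]
The last inequality is exactly the membership condition \(i\in DG(1)\). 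Therefore the defining condition of \(DG(S(t_i))\) holds for \(i\), and \(i\in DG(t_i)\).

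There is no genuine obstacle here. The only points needing care are that \(i\) itself belongs to \(S(t_i)\), which uses smallness and \(t_i\le t_i\), and that the density order is strict. Strictness follows from the no-ties assumption, so the comparison set is determined unambiguously and does not change between \(S(t_i)\) and \(S(1)\) apart from removing items.
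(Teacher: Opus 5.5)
Your proposal is correct and takes essentially the same approach as the paper. Both arguments observe that the higher-density small items present by time \(t_i\) form a subset of all higher-density small items. The prefix-size condition defining \(DG(1)\) therefore carries over to \(DG(t_i)\), and \(i\in S(t_i)\) completes the argument.
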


\begin{proof}
Fix \(i\in DG(1)\). Then \(i\) is small, and by the definition of \(DG(1)\),
\[
    \sum_{\substack{h:\ s_h\le1/2\\ d_h>d_i}} s_h
    \le
    1-s_i .
\]
When \(i\) arrives at time \(t_i\), the higher-density small items observed so far form a subset of all higher-density small items in the instance. Therefore
\[
    \sum_{\substack{h\in S(t_i)\\ d_h>d_i}}s_h
    \le
    1-s_i .
\]
Since \(i\in S(t_i)\), this implies \(i\in DG(t_i)\).
\end{proof}

\begin{corollary}
\label{cor:e-eligible}
Condition on \(t_i>b\) and on \(\mathcal F\). When \(i\) arrives in Phase III, the algorithm attempts to accept \(i\).
\end{corollary}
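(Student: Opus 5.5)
This corollary follows by combining \cref{lem:dg-monotonicity} with the description of the Phase-III rule in \cref{alg:positive}. Fix $i\in DG(1)$ and condition on $\mathcal F$ and on $t_i>b$.

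The first step is to check that the algorithm actually processes $i$ through the Phase-III loop. On $\mathcal F$, the algorithm did not return during Phase II, since Phase II returns immediately upon accepting. It also did not return through the randomized step for an empty prefix, because $\mathcal F$ is by definition the event that the algorithm proceeds to the suffix-processing part. Hence it enters the loop over items with $t_i>b$, and $i$ is one of these items.

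Next I trace the branches of the loop for $i$. Since $i\in DG(1)$ and $DG(\cdot)$ is defined only over small items, $i$ is small, i.e.\ $s_i\le 1/2$. So the rejection branch for $s_i>1/2$ is not taken. The algorithm then computes $S(t_i)$ and $DG(t_i)$. By \cref{lem:dg-monotonicity}, $i\in DG(t_i)$ holds for every realization of the arrival times, and in particular on the conditioned event. Therefore the algorithm reaches the inner test $W+s_i\le 1$, which is by definition an attempt to accept $i$.

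There is no substantive obstacle here. The only point needing care is that the conditioning on $\mathcal F$ and $t_i>b$ does not interfere with \cref{lem:dg-monotonicity}. It does not, because that lemma is a pointwise statement, true for every realization of the arrival times.
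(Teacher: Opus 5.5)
Your proposal is correct and matches the paper's proof: both apply Lemma~\ref{lem:dg-monotonicity} to get $i\in DG(t_i)$ and then read the conclusion off the Phase-III rule. Your extra checks (that on $\mathcal F$ the suffix loop is actually entered, and that $i$ is small so the $s_i>1/2$ branch is skipped) make explicit details the paper leaves implicit.
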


\begin{proof}
By Lemma~\ref{lem:dg-monotonicity}, we have \(i\in DG(t_i)\). Since \(t_i>b\), the item \(i\) arrives during Phase III. By the Phase-III rule, the algorithm attempts to accept every arriving small item that belongs to \(DG(t_i)\). Therefore the algorithm attempts to accept \(i\).
\end{proof}

We now bound the amount of mass that the algorithm attempts before the target item \(i\) arrives.

\begin{lemma}
\label{lem:expected-attempted-mass}
Fix \(i\in DG(1)\), and condition on \(\mathcal F\) and on \(t_i=\tau>b\). Let \(Z_\tau\) be the total size of items other than \(i\) that are attempted in Phase III before time \(\tau\). Then
\[
    \E{}{Z_\tau\mid \mathcal F,\ t_i=\tau}
    \le
    \ln\frac{\tau}{b}
\]
\end{lemma}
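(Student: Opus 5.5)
We bound the expected attempted mass by linearity of expectation: an item $j \neq i$ contributes $s_j$ to $Z_\tau$ exactly when $t_j \in (b,\tau)$, $j$ is small, and $j \in DG(t_j)$. So
\[
\E{}{Z_\tau \mid \mathcal F, t_i=\tau} = \sum_{j\neq i,\ s_j\le 1/2} s_j \Prob{t_j\in(b,\tau),\ j\in DG(t_j) \mid \mathcal F, t_i=\tau}.
\]
First we remove the conditioning. By Lemma~\ref{lem:phase-three-conditioning} and its proof, $\Prob{\mathcal F\mid B}=a/b$ for every prefix set $B$, and given $B$ the event $\mathcal F$ depends only on the internal order of the prefix and an independent coin. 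Hence, conditioned on $\mathcal F$, the arrival times of the other items are still i.i.d.\ uniform on $[0,1]$, independent of $t_i$. Fixing $t_i=\tau$ also leaves the other times i.i.d.\ uniform. We may therefore treat the times $t_j$, $j\neq i$, as independent uniforms.

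Next we write the probability for a fixed $j$ as an integral over its arrival time:
\[
\int_b^\tau \Prob{j\in DG(S(u)) \mid t_j=u}\,du,
\]
where $S(u)$ consists of $j$, possibly $i$ (which arrives at $\tau>u$, so it is absent), and every other small item independently with probability $u$. Summing over $j$ and exchanging sum and integral, it suffices to prove the following pointwise bound for each $u\in(b,\tau)$:
\[
\sum_{j} s_j\, \Prob{j \in DG(S(u)) \mid t_j = u} \le \frac1u.
\]
Integrating this gives $\int_b^\tau du/u=\ln(\tau/b)$, which is the claim.

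The pointwise bound is the main step, and it comes from a resampling symmetry. Let $R$ be the random set in which each small item other than $i$ is included independently with probability $u$. Conditioning on $j\in R$, the set $R$ has the same law as $S(u)$ given $t_j=u$, since only membership matters for $DG$. Therefore
\[
u\,\Prob{j\in DG(S(u))\mid t_j=u}=\Prob{j\in R,\ j\in DG(R)}.
\]
Multiplying by $s_j$ and summing over $j$ gives
\[
u\sum_j s_j\Prob{j\in DG(S(u))\mid t_j=u} = \E{}{s(DG(R))}\le 1,
\]
because density-greedy always outputs a feasible set. Dividing by $u$ yields the bound.

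The delicate point is the first step: verifying that conditioning on $\mathcal F$ does not distort the joint law of the arrival times inside $(b,\tau)$. We handle it by conditioning on $B$, noting that $\mathcal F$ is independent of the suffix times given $B$ and that $\Prob{\mathcal F\mid B}$ is constant in $B$, so Bayes' rule leaves the joint time distribution unchanged. This also shows that the extra randomization in Algorithm~\ref{alg:positive} when $B=\emptyset$ is needed. Dropping the capacity check (we count attempted rather than accepted items) only makes $Z_\tau$ larger, so it costs nothing.
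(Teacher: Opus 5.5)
Your proposal is correct and follows the paper's proof essentially step for step: linearity over $j$, integrating over $t_j=u\in(b,\tau)$, and the resampling identity $u\sum_j s_j\Pr[j\in DG(S(u))\mid t_j=u]=\mathbb{E}[s(DG(R))]\le 1$. One wording fix: conditioning on $\mathcal F$ does not leave all arrival times i.i.d.\ uniform, since the highest-value prefix item is forced into $[0,a]$. It does preserve the joint law of the prefix set $B$ and the suffix arrival times, and that is all you actually use, because for $u>b$ both $S(u)$ and the event $t_j\in(b,\tau)$ are determined by these.
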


\begin{proof}
Let \(H\) be the set of small items other than \(i\). By Lemma~\ref{lem:phase-three-conditioning}, conditioning on \(\mathcal F\) does not bias the suffix set or the relative order of the suffix items. Therefore, after additionally conditioning on \(t_i=\tau>b\), for every \(u\in[b,\tau)\), the set
\[
    T(u)=\{j\in H:t_j\le u\}
\]
is distributed as an unbiased \(u\)-sample of \(H\).

For \(j\in H\), let \(H_{-j}=H\setminus\{j\}\), and let \(T_{-j}(u)\) be an unbiased \(u\)-sample of \(H_{-j}\).  If \(j\) arrives at time \(u\), then every other small item \(h\neq i,j\) has independently arrived before \(u\) with probability \(u\). Thus the set of small items other than \(i\) and \(j\) observed before \(j\) is distributed as \(T_{-j}(u)\). Since the algorithm attempts \(j\) precisely when \(j\) belongs to the density-greedy solution on the small items observed up to and including \(j\), we have
\[
    \Prob{j\textnormal{ is attempted at time }u}
    =
    \Prob{j\in DG(T_{-j}(u)\cup\{j\})}.
\]

Using the density of \(t_j\) on the interval \((b,\tau)\), we can write
\[
\begin{aligned}
    \E{}{Z_\tau\mid \mathcal F,\ t_i=\tau}
    &=
    \sum_{j\in H}
    s_j
    \int_b^\tau
    \Prob{j\in DG(T_{-j}(u)\cup\{j\})}\,du .
\end{aligned}
\]
It remains to bound the integrand.

Fix \(u\in[b,\tau)\), and let \(T(u)\) be an unbiased \(u\)-sample of \(H\). Since \(DG(T(u))\) is feasible,
\[
    s(DG(T(u)))\le 1
\]
always. On the other hand, by conditioning on whether \(j\in T(u)\),
\[
\begin{aligned}
    \E{}{s(DG(T(u)))}
    &=
    \sum_{j\in H}
    s_j \cdot \Prob{j\in T(u)\textnormal{ and }j\in DG(T(u))} \\
    &= \sum_{j\in H}
    s_j \cdot \Prob{j\in DG(T(u)) \mid j \in T(u)}\Prob{j \in T(u)} \\
    &=
    \sum_{j\in H}
    s_j\cdot u\cdot
    \Prob{j\in DG(T_{-j}(u)\cup\{j\})}.
\end{aligned}
\]
Therefore
\[
    \sum_{j\in H}
    s_j
    \Prob{j\in DG(T_{-j}(u)\cup\{j\})}
    =
    \frac{1}{u}\E{}{s(DG(T(u)))}
    \le
    \frac1u.
\]
Substituting this bound into the expression for \(\E{}{Z_\tau\mid \mathcal F,\ t_i=\tau}\), we obtain
\[
    \E{}{Z_\tau\mid \mathcal F,\ t_i=\tau} \le \int_b^\tau \frac{du}{u}= \ln\frac{\tau}{b}.
\]
\end{proof}

\begin{lemma}
\label{lem:conditional-acceptance}
Fix \(i\in DG(1)\), and condition on \(\mathcal F\) and on \(t_i=\tau>b\).
Then
\[
    \Prob{i\textnormal{ is accepted in Phase III}\mid \mathcal F,\ t_i=\tau}
    \ge
    1-\frac{\ln(\tau/b)}{1-s_i}.
\]
\end{lemma}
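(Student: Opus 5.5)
Fix \(i\in DG(1)\) and condition on \(\mathcal F\) and \(t_i=\tau>b\). By Corollary~\ref{cor:e-eligible}, the algorithm attempts to accept \(i\) when it arrives. By the Phase-III rule, \(i\) is then accepted exactly when \(W+s_i\le 1\), where \(W\) is the total size accepted in Phase III before time \(\tau\). Every accepted item was first attempted, so \(W\le Z_\tau\) pointwise, with \(Z_\tau\) as in Lemma~\ref{lem:expected-attempted-mass}. Items arriving before \(\tau\) are items other than \(i\), and since arrival times are continuous there are no ties at \(\tau\). Hence the event \(\{Z_\tau\le 1-s_i\}\) implies \(\{W+s_i\le 1\}\), which in turn implies that \(i\) is accepted.

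It therefore suffices to show that
\[
\Prob{Z_\tau>1-s_i\mid \mathcal F,\ t_i=\tau}\le \frac{\ln(\tau/b)}{1-s_i}.
\]
Since \(Z_\tau\ge 0\) and \(1-s_i\ge 1/2>0\) (because \(i\) is small), Markov's inequality gives
\[
\Prob{Z_\tau>1-s_i\mid \mathcal F,\ t_i=\tau}\le\frac{\E{}{Z_\tau\mid \mathcal F,\ t_i=\tau}}{1-s_i}.
\]
Bounding the numerator by \(\ln(\tau/b)\) via Lemma~\ref{lem:expected-attempted-mass} and taking complements yields the claim. When \(\ln(\tau/b)\ge 1-s_i\), the bound is trivially true, so no case distinction is needed.

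The only step requiring care is justifying \(W\le Z_\tau\) under the conditioning. Both quantities are determined by the same realization of the suffix, and acceptance in Phase III is a sub-event of attempting. One point to check is that Lemma~\ref{lem:expected-attempted-mass} counts attempts of small items other than \(i\) over the interval \((b,\tau)\). This matches exactly the set of items whose acceptance can consume capacity before \(i\) arrives: large items are always rejected in Phase III, and Phase II returns immediately upon accepting an item, so on \(\mathcal F\) no capacity is used before Phase III. Everything else is a direct application of Markov's inequality.
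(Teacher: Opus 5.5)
Your proof is correct and follows essentially the same route as the paper. Both use Corollary~\ref{cor:e-eligible} to get that $i$ is attempted, then the pointwise bound that the mass accepted before $\tau$ is at most $Z_\tau$, then Markov's inequality combined with Lemma~\ref{lem:expected-attempted-mass}. Your extra remarks (no capacity is used before Phase III on $\mathcal F$, large items are rejected, and there are no ties at $\tau$) only make explicit details that the paper leaves implicit.
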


\begin{proof}
By \cref{cor:e-eligible}, the algorithm attempts to accept \(i\) when it arrives in Phase III.  Thus \(i\) can be rejected only because insufficient capacity remains.

Let \(Z_\tau\) be the total size of items other than \(i\) that are attempted in Phase III before time \(\tau\). The total accepted mass before \(i\) arrives is at most \(Z_\tau\). Hence, if $Z_\tau\le 1-s_i$, then \(i\) fits in the remaining capacity and is accepted.  Therefore
\[
    \{i\textnormal{ is rejected}\}
    \subseteq
    \{Z_\tau>1-s_i\}.
\]
By Markov's inequality and Lemma~\ref{lem:expected-attempted-mass},
\[
\begin{aligned}
    \Prob{i\textnormal{ is rejected}\mid \mathcal F,\ t_i=\tau}
    &\le
    \Prob{Z_\tau>1-s_i\mid \mathcal F,\ t_i=\tau} \\
    &\le
    \frac{\E{}{Z_\tau\mid \mathcal F,\ t_i=\tau}}{1-s_i} \\
    &\le
    \frac{\ln(\tau/b)}{1-s_i}.
\end{aligned}
\]
\end{proof}

\begin{lemma}
\label{lem:phase-three-selectability-markov}
Assume \(b\ge e^{-1/2}\).  For every \(i\in DG(1)\),
\[
    \Prob{i\textnormal{ is selected in Phase III}\mid \mathcal F}
    \ge
    3(1-b)-2\ln\frac1b .
\]
\end{lemma}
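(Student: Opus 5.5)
We average the conditional bound of \cref{lem:conditional-acceptance} over the arrival time of $i$. Condition on $\mathcal F$. By \cref{lem:phase-three-conditioning}, $t_i>b$ holds with probability $1-b$, and given this, $t_i$ is uniform on $(b,1]$, so its conditional density on $(b,1]$ is $1$. If $t_i\le b$, then $i$ cannot be selected in Phase III. Hence
\[
    \Prob{i\textnormal{ is selected in Phase III}\mid \mathcal F}
    \ge
    \int_b^1 \Prob{i\textnormal{ is accepted in Phase III}\mid \mathcal F,\ t_i=\tau}\,d\tau .
\]

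Next we remove the dependence on $s_i$. Since $i\in DG(1)$ is small, $s_i\le 1/2$, so $1-s_i\ge 1/2$. \Cref{lem:conditional-acceptance} then gives, for each $\tau\in(b,1]$, a conditional acceptance probability of at least $1-2\ln(\tau/b)$. The assumption $b\ge e^{-1/2}$ gives $\ln(\tau/b)\le \ln(1/b)\le 1/2$ for every $\tau\in(b,1]$. So this lower bound is nonnegative on the whole interval and it is legitimate to integrate it. Without the assumption we would have to truncate the bound at $0$, and the clean closed form would be lost.

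It remains to evaluate the integral:
\[
    \int_b^1 \bigl(1-2\ln\tau+2\ln b\bigr)\,d\tau
    = (1-b) - 2\bigl[\tau\ln\tau-\tau\bigr]_b^1 + 2(1-b)\ln b .
\]
The middle term equals $-2(-1-b\ln b+b)=2(1-b)+2b\ln b$. Adding the pieces gives
\[
    3(1-b)+2b\ln b+2(1-b)\ln b = 3(1-b)+2\ln b = 3(1-b)-2\ln\frac1b ,
\]
which is the claimed bound.

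There is no real obstacle here. The only points needing care are that the uniformity of $t_i$ on $(b,1]$ survives conditioning on $\mathcal F$, which \cref{lem:phase-three-conditioning} provides, and the nonnegativity check, which is exactly where $b\ge e^{-1/2}$ is used.
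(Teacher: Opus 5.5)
Your proposal is correct and follows the paper's proof essentially step for step: you average the bound $1-2\ln(\tau/b)$ from \cref{lem:conditional-acceptance} against the sub-density $(1-b)\cdot\frac{1}{1-b}=1$ of $t_i$ on $(b,1]$ given $\mathcal F$, and then evaluate the integral to $3(1-b)-2\ln\frac1b$. One small correction to your remark: integrating a pointwise lower bound is valid even where that bound is negative, so $b\ge e^{-1/2}$ is what keeps the bound $1-2\ln(\tau/b)$ from being vacuous (negative) for some $\tau$, not a condition for the integration step to be legitimate.
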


\begin{proof}
By Lemma~\ref{lem:phase-three-conditioning}, $\Prob{t_i>b\mid \mathcal F}=1-b.$ Moreover, conditioned on \(t_i>b\) and on \(\mathcal F\), the relative arrival
time of \(i\) inside \((b,1]\) is uniform. Since \(s_i\le1/2\), Lemma~\ref{lem:conditional-acceptance}
gives
\[
    \Prob{i\textnormal{ is accepted in Phase III}\mid \mathcal F,\ t_i=\tau}
    \ge
    1-2\ln\frac{\tau}{b}.
\]
The assumption \(b\ge e^{-1/2}\) ensures that this lower bound is nonnegative for all \(\tau\in[b,1]\). Therefore
\[
\begin{aligned}
    \Prob{i\textnormal{ is selected in Phase III}\mid \mathcal F}
    &\ge
    \Prob{t_i>b\mid \mathcal F}
    \cdot
    \frac{1}{1-b}
    \int_b^1
        \left(1-2\ln\frac{\tau}{b}\right)
    d\tau  \\
    &=
    \int_b^1
        \left(1-2\ln\frac{\tau}{b}\right)
    d\tau .
\end{aligned}
\]
Finally,
\[
    \int_b^1 \ln\frac{\tau}{b}\,d\tau
    =
    \ln\frac1b+b-1,
\]
and hence
\[
\begin{aligned}
    \int_b^1
        \left(1-2\ln\frac{\tau}{b}\right)
    d\tau =
    (1-b)-2\left(\ln\frac1b+b-1\right) =
    3(1-b)-2\ln\frac1b .
\end{aligned}
\]
This proves the lemma.
\end{proof}

\begin{corollary}
\label{cor:phase-three-selectability}
Assume \(b\ge e^{-1/2}\).  For every \(i\in DG(1)\),
\[
    \Prob{i\textnormal{ is selected in Phase III}}
    \ge
    \frac ab\left(3(1-b)-2\ln\frac1b\right).
\]
\end{corollary}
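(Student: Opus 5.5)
The corollary follows by removing the conditioning on \(\mathcal F\) in \cref{lem:phase-three-selectability-markov}. I will use the law of total probability. The event that \(i\) is selected in Phase III is contained in \(\mathcal F\): the algorithm can only select items in Phase III after it has proceeded to the suffix-processing part of that phase. Hence
\[
    \Prob{i\textnormal{ is selected in Phase III}}
    =
    \Prob{\mathcal F}\cdot\Prob{i\textnormal{ is selected in Phase III}\mid \mathcal F}.
\]

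The first factor comes from \cref{lem:phase-three-probability}. That lemma gives the probability of reaching Phase III as exactly \(a/b\). One point needs care: \(\mathcal F\) is the event of proceeding to the suffix-processing part of Phase III, which differs from merely entering Phase III when the extra coin flip for an empty prefix is used. So I will check that \(\Prob{\mathcal F}=a/b\) directly. The proof of \cref{lem:phase-three-conditioning} already shows \(\Prob{\mathcal F\mid B}=a/b\) for every realization of the prefix set \(B\), including \(B=\emptyset\). Averaging over \(B\) gives \(\Prob{\mathcal F}=a/b\) unconditionally, which settles the first factor either way.

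The second factor is at least \(3(1-b)-2\ln(1/b)\) by \cref{lem:phase-three-selectability-markov}, using the standing assumption \(b\ge e^{-1/2}\) and \(i\in DG(1)\). Multiplying the two factors gives the claimed bound \(\frac ab\bigl(3(1-b)-2\ln\frac1b\bigr)\). The only step needing any attention is matching \(\mathcal F\) with the event whose probability is \(a/b\), and that is handled as above.
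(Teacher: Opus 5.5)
Your proposal is correct and follows the paper's proof: both multiply $\Pr[\mathcal F]=a/b$ from \cref{lem:phase-three-probability} by the conditional bound of \cref{lem:phase-three-selectability-markov}. The paper writes an inequality where you write an equality (justified because selection in Phase III is contained in $\mathcal F$). Your direct check that $\Pr[\mathcal F]=a/b$, obtained by averaging $\Pr[\mathcal F\mid B]$ over $B$ including $B=\emptyset$, adds rigor but does not change the argument.
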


\begin{proof}
By Lemma~\ref{lem:phase-three-selectability-markov},
\[
    \Prob{i\textnormal{ is selected in Phase III}\mid \mathcal F}
    \ge
    3(1-b)-2\ln\frac1b .
\]
Moreover, by Lemma~\ref{lem:phase-three-probability}, $\Prob{\mathcal F}=a/b$.
Therefore
\[
\begin{aligned}
    \Prob{i\textnormal{ is selected in Phase III}}
    &\ge
    \Prob{\mathcal F}\cdot
    \Prob{i\textnormal{ is selected in Phase III}\mid \mathcal F} \\
    &\ge
    \frac ab\left(3(1-b)-2\ln\frac1b\right).
\end{aligned}
\]
\end{proof}

\begin{corollary}
\label{cor:phase-three-contribution-markov}
Assume \(b\ge e^{-1/2}\).  Then
\[
    \E{}{\textnormal{value obtained in Phase III}}
    \ge
    \frac ab\left(3(1-b)-2\ln\frac1b\right)v(DG(1)).
\]
\end{corollary}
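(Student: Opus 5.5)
The plan is to write the Phase-III value as a sum of per-item contributions over the reference solution and apply Corollary~\ref{cor:phase-three-selectability} to each term. First, note that every item accepted in Phase III contributes its value exactly once, and values are nonnegative. We may therefore discard all accepted items outside \(DG(1)\) and lower bound
\[
    \E{}{\textnormal{value obtained in Phase III}}
    \ge
    \E{}{\sum_{i\in DG(1)} v_i\,\mathbbm{1}\{i\textnormal{ is selected in Phase III}\}}.
\]
Here \(DG(1)\) is a deterministic set determined by the instance alone. It does not depend on the arrival times, because \(S(1)\) is the set of all small items.

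Next, use linearity of expectation to rewrite the right-hand side as \(\sum_{i\in DG(1)} v_i\,\Prob{i\textnormal{ is selected in Phase III}}\). Under the standing assumption \(b\ge e^{-1/2}\), Corollary~\ref{cor:phase-three-selectability} gives
\[
    \Prob{i\textnormal{ is selected in Phase III}}\ge \frac ab\left(3(1-b)-2\ln\frac1b\right)
\]
for each \(i\in DG(1)\). This bound is uniform in \(i\), so the common factor comes out of the sum, and \(\sum_{i\in DG(1)}v_i=v(DG(1))\) finishes the argument.

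There is no real obstacle. The only point to check is that the per-item bound is nonnegative, so multiplying by \(v_i\ge 0\) preserves the inequality. Lemma~\ref{lem:phase-three-selectability-markov} already relies on \(b\ge e^{-1/2}\) to make the integrand nonnegative, and that same assumption gives the nonnegativity needed here.
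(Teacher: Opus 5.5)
Your proof is correct and is essentially the paper's argument: restrict to the items of $DG(1)$, apply linearity of expectation, and use the uniform per-item bound of Corollary~\ref{cor:phase-three-selectability}. One small remark is that your final nonnegativity check is not needed, since multiplying $\Pr[i \textnormal{ selected}] \ge c$ by $v_i \ge 0$ preserves the inequality whatever the sign of $c$ (and $c \ge 0$ does hold for $b \ge e^{-1/2}$ anyway).
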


\begin{proof}
By Corollary~\ref{cor:phase-three-selectability}, each item \(i\in DG(1)\) is
selected in Phase III with probability at least
\[
    \frac ab\left(3(1-b)-2\ln\frac1b\right).
\]
Therefore, by linearity of expectation,
\[
\begin{aligned}
    \E{}{\textnormal{value obtained in Phase III}}
    &\ge
    \sum_{i\in DG(1)}
    \frac ab\left(3(1-b)-2\ln\frac1b\right)v_i  \\
    &=
    \frac ab\left(3(1-b)-2\ln\frac1b\right)v(DG(1)).
\end{aligned}
\]
\end{proof}

This already gives a simple benchmark guarantee. Phase II selects the maximum-value item with probability at least $a\ln (b/a)$, while Phase III contributes at least $\frac ab\left(3(1-b)-2\ln\frac1b\right)$ times the value of \(DG(1)\). Combining this with Corollary~\ref{cor:decomp_vmax}, the guarantee is
\[
    \min\left\{
        a\ln\frac ba,\,
        \frac ab\left(3(1-b)-2\ln\frac1b\right)
    \right\}.
\]

Optimizing this simple benchmark over \(0<a<b<1\), with \(b\ge e^{-1/2}\), gives $b\approx 0.69009, a\approx 0.52563,$ and value approximately $0.14309.$

In the next section we refine the analysis using the size-dependence in Lemma~\ref{lem:conditional-acceptance} and combine it with a charging argument
for Phase II.

\subsection{Breaking the 1/6 Barrier}\label{section:Breaking}

The Markov-based analysis above uses only the crude fact that every item in \(DG(1)\) has size at most \(1/2\).  This loses information in two ways. First, Phase II does more than select the maximum-value item: it also selects the second, third, and later value ranks with positive probability. This creates additional ``leftover'' selectability that can be used to pay for items whose Phase III guarantee falls short of the target ratio.

Second, the Phase III guarantee itself is size-sensitive. An item of smaller size is easier to accept, because it can be blocked only after a larger amount of capacity has already been filled. Thus smaller items have better selectability in Phase III. Moreover, the density-greedy solution \(DG(1)\) has total size at most \(1\), so it cannot contain many large items. The refined analysis combines these two observations: Phase III already gives the desired guarantee for most small items, while the surplus probability from Phase II compensates for the few large items that may remain.

We first make the size-sensitivity of Phase III explicit. By Lemma~\ref{lem:conditional-acceptance}, if an item \(i\in DG(1)\) of size \(s_i\) arrives at time \(\tau>b\), then its rejection probability is controlled by $\frac{\ln(\tau/b)}{1-s_i}$. Thus an item of size at most \(s\) has a better guarantee when \(s\) is smaller.

\begin{definition}
For \(s\in[0,1/2]\), define
\[
    q_s(a,b)
    =
    \frac ab
    \left[
        (1-b)
        -
        \frac{\ln(1/b)+b-1}{1-s}
    \right].
\]
Equivalently,
\[
    q_s(a,b)
    =
    \frac ab
    \int_b^1
    \left(
        1-\frac{\ln(\tau/b)}{1-s}
    \right)d\tau .
\]
\end{definition}

\begin{lemma}\label{lem:size-sensitive-selectability}
Assume \(b\ge e^{-1/2}\).  Let \(i\in DG(1)\) have size \(s_i\le s\le 1/2\).
Then
\[
    \Prob{i\textnormal{ is selected in Phase III}}
    \ge
    q_s(a,b).
\]
\end{lemma}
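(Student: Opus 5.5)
The plan is to repeat the argument of Lemma~\ref{lem:phase-three-selectability-markov} and Corollary~\ref{cor:phase-three-selectability}, but to apply the size-dependent bound of Lemma~\ref{lem:conditional-acceptance} directly. The crude substitution $s_i\le 1/2$ is what we avoid.

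First, condition on $\mathcal F$ and on $t_i=\tau>b$. Lemma~\ref{lem:conditional-acceptance} gives
\[
\Prob{i\textnormal{ is accepted in Phase III}\mid \mathcal F,\ t_i=\tau}\ \ge\ 1-\frac{\ln(\tau/b)}{1-s_i}.
\]
Since $s_i\le s$, we have $1/(1-s_i)\le 1/(1-s)$. Hence the right-hand side is at least $1-\ln(\tau/b)/(1-s)$.

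We also need this lower bound to be nonnegative on $[b,1]$; otherwise replacing a probability by a negative number inside the integral would be harmless but the identification with $q_s$ would only be a weaker inequality. Since $\tau\le 1$ and $b\ge e^{-1/2}$, we get $\ln(\tau/b)\le \ln(1/b)\le 1/2\le 1-s$ for $s\le 1/2$. So the bound lies in $[0,1]$. Strictly speaking nonnegativity is not even needed, because a lower bound by a possibly negative quantity still integrates correctly.

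Next, average over $\tau$. By Lemma~\ref{lem:phase-three-conditioning}, $\Prob{t_i>b\mid\mathcal F}=1-b$, and given $t_i>b$ and $\mathcal F$ the arrival time is uniform on $(b,1]$. Therefore
\[
\Prob{i\textnormal{ is selected in Phase III}\mid\mathcal F}\ \ge\ \int_b^1\Big(1-\frac{\ln(\tau/b)}{1-s}\Big)\,d\tau.
\]
Using $\int_b^1\ln(\tau/b)\,d\tau=\ln(1/b)+b-1$, this integral equals $(1-b)-\frac{\ln(1/b)+b-1}{1-s}$.

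Finally, multiply by $\Prob{\mathcal F}=a/b$ from Lemma~\ref{lem:phase-three-probability}, exactly as in Corollary~\ref{cor:phase-three-selectability}. This yields $q_s(a,b)$ in the integral form of the definition.

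There is no real obstacle here. The only point to check is the monotonicity in $s_i$ together with the nonnegativity check above, which is what allows the size-$s$ bound to replace the size-$s_i$ bound inside the integral.
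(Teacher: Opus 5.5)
Your proposal is correct and follows the paper's proof essentially step for step: the size-dependent bound of Lemma~\ref{lem:conditional-acceptance}, monotonicity in $s_i$, averaging over the uniform arrival time in $(b,1]$ via Lemma~\ref{lem:phase-three-conditioning}, and multiplication by $\Prob{\mathcal F}=a/b$. Your side remark is also right: the nonnegativity check, which the paper also makes, is not logically needed for this lemma, because a pointwise lower bound integrates to a lower bound regardless of its sign.
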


\begin{proof}
Fix \(i\in DG(1)\), and condition on \(\mathcal F\) and on \(t_i=\tau>b\).
By Lemma~\ref{lem:conditional-acceptance},
\[
    \Prob{i\textnormal{ is accepted in Phase III}\mid \mathcal F,\ t_i=\tau}
    \ge
    1-\frac{\ln(\tau/b)}{1-s_i}.
\]
Since \(s_i\le s\), we have \(1-s_i\ge 1-s\), and hence
\[
    1-\frac{\ln(\tau/b)}{1-s_i}
    \ge
    1-\frac{\ln(\tau/b)}{1-s}.
\]
The assumption \(b\ge e^{-1/2}\) ensures that this lower bound is nonnegative
for all \(\tau\in[b,1]\) and all \(s\le1/2\).

By Lemma~\ref{lem:phase-three-conditioning}, conditioned on \(\mathcal F\), the
item \(i\) lies in the suffix with probability \(1-b\), and conditioned on
\(t_i>b\), its relative arrival time inside \((b,1]\) is uniform.  Therefore,
using also \(\Prob{\mathcal F}=a/b\),
\[
\begin{aligned}
    \Prob{i\textnormal{ is selected in Phase III}}
    &\ge
    \Prob{\mathcal F}\cdot \Prob{t_i>b\mid \mathcal F}
    \cdot
    \frac{1}{1-b}
    \int_b^1
    \left(
        1-\frac{\ln(\tau/b)}{1-s}
    \right)d\tau  \\
    &=
    \frac ab
    \int_b^1
    \left(
        1-\frac{\ln(\tau/b)}{1-s}
    \right)d\tau  \\
    &=
    q_s(a,b).
\end{aligned}
\]
This proves the lemma.
\end{proof}

We now explain how the refined accounting works. Fix a target competitive ratio \(\rho\). Phase III gives an item of size at most \(s\) a selection probability of at least \(q_s(a,b)\).  If \(q_s(a,b)\ge \rho\), then Phase III already gives this item the target selectability.  Otherwise, the item has a deficit: it is missing \(\rho-q_s(a,b)\) selection probability.

The purpose of the next few lemmas is to show that the total deficit over all items of \(DG(1)\) is small.  Since smaller items have better Phase III selectability, only relatively large items can create significant deficit; and since \(DG(1)\) has total size at most \(1\), there cannot be many such items. We will then use the extra value obtained by Phase II from selecting the second, third, and later value ranks to pay for these deficits.

\begin{definition}\label{def:deficit}
Fix a target value \(\rho>0\).  For \(s\in[0,1/2]\), define
\[
    \delta(s)
    =
    \left(\rho-q_s(a,b)\right)_+ .
\]
Thus \(\delta(s)\) is the amount of additional selection probability needed to raise an item of size at most \(s\) to the target \(\rho\).
\end{definition}

\begin{lemma}
\label{lem:deficit-convex}
Fix \(a,b\) and a target value \(\rho>0\). Then \(\delta\) is convex and non-decreasing on \([0,1/2]\).
\end{lemma}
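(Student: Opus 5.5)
The plan is to analyze $q_s(a,b)$ as a function of $s$ on $[0,1/2]$ and then pass the resulting properties through the map $x\mapsto(\rho-x)_+$. By the definition,
\[
    q_s(a,b)=\frac ab\left[(1-b)-\frac{C}{1-s}\right],\qquad C:=\ln\frac1b+b-1 .
\]
The first step is to check that $C\ge 0$. This follows from $\ln(1/b)\ge 1-b$ for $b\in(0,1)$, which is the standard inequality $\ln x\le x-1$ applied at $x=b$. Equivalently, $C=\int_b^1\ln(\tau/b)\,d\tau$ is the integral of a nonnegative function, as already computed in the proof of Lemma~\ref{lem:phase-three-selectability-markov}. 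Since also $a/b>0$, the function $s\mapsto q_s(a,b)$ is affine in $g(s):=1/(1-s)$ with a nonpositive coefficient $-\tfrac ab C$.

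Next, I would note that $g(s)=1/(1-s)$ is increasing and convex on $[0,1/2]$, since $g'(s)=(1-s)^{-2}>0$ and $g''(s)=2(1-s)^{-3}>0$. Multiplying by the nonpositive constant $-\tfrac ab C$ and adding a constant shows that $s\mapsto q_s(a,b)$ is non-increasing and concave. Hence $s\mapsto \rho-q_s(a,b)$ is non-decreasing and convex.

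Finally, I would compose with $\phi(x)=x_+=\max\{x,0\}$, which is convex and non-decreasing. A convex non-decreasing function of a convex function is convex, and a non-decreasing function of a non-decreasing function is non-decreasing. Alternatively, $\delta(s)=\max\{\rho-q_s(a,b),\,0\}$ is a pointwise maximum of two convex non-decreasing functions, so it inherits both properties. This gives that $\delta$ is convex and non-decreasing on $[0,1/2]$.

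There is no real obstacle here. The only point needing care is the sign of $C$, since the direction of both monotonicity and convexity depends on it. Note that the assumption $b\ge e^{-1/2}$ is not needed for this lemma: it only ensures nonnegativity of the integrand, which is irrelevant to the shape of $\delta$.
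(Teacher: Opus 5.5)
Your proposal is correct and follows essentially the same route as the paper: write $q_s(a,b)$ as a constant minus a positive multiple of $1/(1-s)$, deduce that $\rho-q_s(a,b)$ is convex and non-decreasing, and conclude via $\delta(s)=\max\{0,\rho-q_s(a,b)\}$. Your explicit check that $\ln(1/b)+b-1\ge 0$ supplies a sign fact the paper only asserts, and you are right that the assumption $b\ge e^{-1/2}$ plays no role in this lemma.
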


\begin{proof}
For fixed \(a,b\), we can write
\[
    q_s(a,b)
    =
    C-\frac{D}{1-s},
\]
where
\[
    C=\frac ab(1-b),
    \qquad
    D=\frac ab\left(\ln\frac1b+b-1\right)>0.
\]
Hence
\[
    \rho-q_s(a,b)
    =
    \rho-C+\frac{D}{1-s}.
\]
The function \(s\mapsto 1/(1-s)\) is convex and increasing on \([0,1/2]\). Therefore \(s\mapsto \rho-q_s(a,b)\) is convex and non-decreasing on
\([0,1/2]\). Since
\[
    \delta(s)=\max\{0,\rho-q_s(a,b)\},
\]
and the maximum of two convex non-decreasing functions is again convex and non-decreasing, \(\delta\) is convex and non-decreasing on \([0,1/2]\).
\end{proof}

\begin{lemma}
\label{lem:total-deficit-bound}
Fix a target value \(\rho>0\).  If \(q_0(a,b)\ge \rho\), then for any
\(s_1,\ldots,s_r\in[0,1/2]\) satisfying
\[
    \sum_{\ell=1}^r s_\ell\le 1,
\]
we have
\[
    \sum_{\ell=1}^r \delta(s_\ell)
    \le
    2\delta(1/2).
\]
Moreover, for any single item,
\[
    \delta(s_\ell)\le \delta(1/2).
\]
\end{lemma}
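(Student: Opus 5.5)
The plan is to exploit convexity of \(\delta\) (Lemma~\ref{lem:deficit-convex}) together with \(\delta(0)=0\). Since \(q_0(a,b)\ge\rho\), the definition gives \(\delta(0)=(\rho-q_0(a,b))_+=0\).

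\textbf{Step 1: chord bound.} For any \(s\in[0,1/2]\), write \(s=(1-2s)\cdot 0+2s\cdot\tfrac12\) as a convex combination of the endpoints. Convexity then gives
\[
    \delta(s)\le (1-2s)\,\delta(0)+2s\,\delta(1/2)=2s\,\delta(1/2).
\]
That is, \(\delta\) lies below the chord joining \((0,0)\) and \((1/2,\delta(1/2))\).

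\textbf{Step 2: sum over the items.} Summing the chord bound over \(\ell=1,\dots,r\) gives
\[
    \sum_{\ell}\delta(s_\ell)\le 2\delta(1/2)\sum_\ell s_\ell\le 2\delta(1/2),
\]
using the hypothesis \(\sum_\ell s_\ell\le 1\).

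\textbf{Step 3: single item.} The bound \(\delta(s_\ell)\le\delta(1/2)\) follows directly from monotonicity of \(\delta\) on \([0,1/2]\), which is part of Lemma~\ref{lem:deficit-convex}. It also follows from Step 1, since \(2s_\ell\le 1\).

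The only point needing care is that the hypothesis \(q_0(a,b)\ge\rho\) is exactly what makes \(\delta(0)=0\). Without it the chord would start at a positive value, and the sum would pick up an extra \(r\,\delta(0)\) term that cannot be controlled, because \(r\) is unbounded when items are tiny. Beyond that, the argument is routine.
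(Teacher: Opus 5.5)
Your proposal is correct and follows the paper's proof essentially step for step. Both arguments get \(\delta(0)=0\) from the hypothesis \(q_0(a,b)\ge\rho\), use convexity to obtain the chord bound \(\delta(s)\le 2s\,\delta(1/2)\), sum this bound using \(\sum_\ell s_\ell\le 1\), and derive the single-item bound from monotonicity of \(\delta\).
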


\begin{proof}
Since \(\rho\le q_0(a,b)\), we have \(\delta(0)=0\).  Moreover, \(q_s(a,b)\) is non-increasing in \(s\), and therefore \(\delta(s)\) is non-decreasing. In
particular, for every \(s\in[0,1/2]\),
\[
    \delta(s)\le \delta(1/2).
\]
This proves the single-item bound.

We now prove the total deficit bound. The function \(\delta\) is convex on \([0,1/2]\). Hence, for every \(s\in[0,1/2]\),
\[
\begin{aligned}
    \delta(s)
    &=
    \delta\!\left(2s\cdot \frac12 + (1-2s)\cdot 0\right)  \\
    &\le
    2s\,\delta(1/2) + (1-2s)\delta(0) \\
    &=
    2s\,\delta(1/2).
\end{aligned}
\]
Summing over \(\ell=1,\ldots,r\), we obtain
\[
    \sum_{\ell=1}^r \delta(s_\ell)
    \le
    2\delta(1/2)\sum_{\ell=1}^r s_\ell
    \le
    2\delta(1/2),
\]
as claimed.
\end{proof}

\begin{corollary}
\label{cor:worst-case-deficit}
Fix a target value \(\rho>0\), and assume \(\rho\le q_0(a,b)\). Among all possible size profiles of \(DG(1)\), subject only to
\[
    s_i\le \frac12
    \qquad\text{for all } i\in DG(1),
    \qquad\text{and}\qquad
    \sum_{i\in DG(1)}s_i\le 1,
\]
the total Phase-III deficit is at most the deficit of two items of size \(1/2\). That is,
\[
    \sum_{i\in DG(1)} \delta(s_i)
    \le
    2\delta(1/2).
\]
In this sense, the worst case for the refined analysis is when \(DG(1)\) contains two items of size \(1/2\).
\end{corollary}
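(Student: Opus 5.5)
The corollary follows directly from Lemma~\ref{lem:total-deficit-bound}; the plan is to verify its hypotheses for the size profile of $DG(1)$. The standing assumption $\rho\le q_0(a,b)$ is exactly the hypothesis of that lemma. So it remains to check that the sizes $\{s_i\}_{i\in DG(1)}$ form an admissible profile: finitely many numbers in $[0,1/2]$ with total at most $1$.

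\textbf{Step 1: the profile is admissible.} Every item in $DG(1)$ lies in $S(1)$, so it is small, and hence $s_i\le 1/2$. For the total size, I will use the feasibility of the density-greedy solution. Let $i^\ast$ be the lowest-density item of $DG(1)$. All other members of $DG(1)$ have higher density than $i^\ast$. The membership condition for $i^\ast$ therefore gives
\[
\sum_{i\in DG(1)} s_i \;\le\; s_{i^\ast}+\sum_{j:\,d_j>d_{i^\ast}} s_j \;\le\; 1 .
\]
This is the same fact, $s(DG(T))\le 1$, that was already used in the proof of Lemma~\ref{lem:expected-attempted-mass}.

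\textbf{Step 2: apply the lemma.} With $s_1,\dots,s_r$ the sizes of the items of $DG(1)$, Lemma~\ref{lem:total-deficit-bound} yields $\sum_{i\in DG(1)}\delta(s_i)\le 2\delta(1/2)$.

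\textbf{Step 3: tightness.} The profile consisting of two items of size exactly $1/2$ is admissible, since $1/2+1/2=1$. Its total deficit is exactly $2\delta(1/2)$, so the bound is attained. This justifies the remark that this profile is the worst case.

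There is no real obstacle. The only point needing care is the total-size bound in Step 1, and the argument via the lowest-density element settles it.
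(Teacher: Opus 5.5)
Your proposal is correct and follows the paper's proof: both apply Lemma~\ref{lem:total-deficit-bound} to the size profile of $DG(1)$, using that $DG(1)$ contains only small items and is feasible. Your lowest-density argument for $s(DG(1))\le 1$ and your tightness remark spell out steps the paper states in a single line.
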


\begin{proof}
Apply Lemma~\ref{lem:total-deficit-bound} to the size profile \(\{s_i:i\in DG(1)\}\). Since \(DG(1)\) is feasible and contains only small items, we have
\[
    \sum_{i\in DG(1)}s_i\le 1
    \qquad\text{and}\qquad
    s_i\le\frac12
    \quad\text{for every }i\in DG(1).
\]
Therefore
\[
    \sum_{i\in DG(1)}\delta(s_i)
    \le
    2\delta(1/2).
\]
The right-hand side is exactly the deficit incurred by two items of size \(1/2\), which proves the claim.
\end{proof}

We also need to quantify the contribution of Phase II beyond the maximum-value item.

\begin{definition}
For \(j\ge 1\), define
\[
    P_j(a,b)
    =
    a\int_a^b \frac{(1-\tau)^{j-1}}{\tau}\,d\tau .
\]
\end{definition}

\begin{lemma}
\label{lem:phase-two-rank-probabilities}
For every \(j\ge 1\), \(P_j(a,b)\) is a lower bound on the probability that Phase II selects the \(j\)-th largest-value item.
\end{lemma}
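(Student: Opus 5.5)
We work in the continuous-time model and label the items by decreasing value, so that item $1$ has the maximum value, item $2$ the second largest, and so on; fix $j\ge 1$. The plan is to find a clean sufficient event for Phase II selecting item $j$, conditioned on its arrival time $\tau$, and then integrate over $\tau$.

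\textbf{Sufficient event.} Condition on $t_j=\tau\in(a,b]$. Consider the event $E_\tau$ that:
\begin{itemize}
\item[(i)] the items $1,\dots,j-1$ all arrive after time $\tau$;
\item[(ii)] the maximum-value item among those arriving in $[0,\tau)$ arrives in $[0,a]$.
\end{itemize}

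\textbf{Why $E_\tau$ suffices.} Under (i), item $j$ has the largest value among all items arriving by time $\tau$. Under (ii), every item arriving in $(a,\tau)$ has value below the Phase-I maximum $M$. Hence no item is accepted in $(a,\tau)$, and when $j$ arrives we have $v_j>M$, so Phase II accepts $j$.

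Under (i) and (ii) we also have $v_j>M$ when the prefix $[0,\tau)$ contains no item in $[0,a]$, because then $M=-\infty$. The degenerate case of an empty set of items before $\tau$ must be handled separately: then $j$ is selected outright since $M=-\infty$. In that case the bound only improves, since $j$ is selected regardless of where the (non-existent) prefix maximum lies.

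\textbf{Probability of $E_\tau$.} The arrival times are independent, so:
\begin{itemize}
\item[(i)] has probability $(1-\tau)^{j-1}$;
\item[(ii)], conditioned on (i) and on the set of items arriving before $\tau$ being nonempty, has probability exactly $a/\tau$. This is because, given that set, the arrival time of its maximum is uniform on $[0,\tau)$, and the set consists only of items ranked below $j$, so it is independent of (i).
\end{itemize}
When the set is empty the event holds with probability $1\ge a/\tau$. Thus
\[
\Prob{j\text{ selected in Phase II}\mid t_j=\tau}\ge (1-\tau)^{j-1}\frac{a}{\tau},
\]
and integrating over $\tau\in(a,b]$ with density $1$ gives $P_j(a,b)$.

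\textbf{Main obstacle.} The one delicate point is the independence claim in the probability step. Conditioning on (i) alters the distribution of which items precede $\tau$ only among the items $1,\dots,j-1$, which are excluded from the prefix. The prefix is then an independent $\tau$-sample of the items ranked $j+1,\dots,n$, and the location of its maximum is uniform on $[0,\tau)$, exactly as in the proof of Lemma~\ref{lem:phase-three-conditioning}. For $j=1$ this recovers Lemma~\ref{lem:second-phase}, which serves as a consistency check.
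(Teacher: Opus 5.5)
Your proposal is correct and follows essentially the same argument as the paper. You condition on $t_{x_j}=\tau\in(a,b]$, require the $j-1$ higher-value items to arrive after $\tau$ (probability $(1-\tau)^{j-1}$), and use the fact that the maximum of the nonempty prefix arrives at a uniform time in $[0,\tau)$, so it lies in $[0,a]$ with probability $a/\tau$ and the empty-prefix case only helps; integrating over $(a,b]$ then gives $P_j(a,b)$. (Your sentence about a prefix with no item in $[0,a]$ is redundant: under (ii) that can only occur when the prefix is empty, which you already handle separately.)
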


\begin{proof}
Let \(x_j\) be the \(j\)-th largest-value item, and condition on \(t_{x_j}=\tau\in(a,b]\). We describe a sufficient event for \(x_j\) to be selected in Phase II.

First, require that all \(j-1\) items with value larger than \(v_{x_j}\) arrive after time \(\tau\). This has probability \((1-\tau)^{j-1}\). On this event, \(x_j\) has larger value than every item that arrived before time \(\tau\).

It remains to ensure that no lower-value item is selected in Phase II before time \(\tau\). Let \(m\) be the number of items with value smaller than \(v_{x_j}\). Among the lower-value items, the event that no such item arrives before \(\tau\) has probability \((1-\tau)^m\).  In that case \(x_j\) is selected when it arrives. Otherwise, consider the highest-value lower item that arrives before time \(\tau\). Conditional on the set of lower-value items that arrive before \(\tau\) being nonempty, the arrival time of this highest-value lower item is uniform on \([0,\tau]\). Hence it lies in the
initial sample with probability \(a/\tau\). If this happens, then no item arriving in \((a,\tau)\) has value larger than the Phase-I sample maximum, while \(x_j\) does have value larger than the Phase-I sample maximum.  Thus no earlier Phase-II item is selected, and \(x_j\) is selected at time \(\tau\).

Therefore, conditional on the higher-value items all arriving after \(\tau\), the probability that the lower-value items do not cause an earlier Phase-II selection is at least
\[
    (1-\tau)^m+\frac a\tau\left(1-(1-\tau)^m\right)
    \ge
    \frac a\tau .
\]
Consequently,
\[
    \Prob{\textnormal{Phase II selects }x_j\mid t_{x_j}=\tau}
    \ge
    (1-\tau)^{j-1}\cdot\frac a\tau .
\]
Integrating over \(\tau\in(a,b]\), we obtain
\[
    \Prob{x_j\textnormal{ is selected in Phase II}}
    \ge
    \int_a^b (1-\tau)^{j-1}\frac a\tau\,d\tau
    =
    P_j(a,b).
\]
\end{proof}

We are now ready to combine the two sources of value.  Let \(z\in\arg\max_{i\notin DG(1)}v_i\). By Lemma~\ref{lem:decomp}, the benchmark \(v(DG(1))+v_z\) upper-bounds \(\opt\). Phase III almost pays for the \(DG(1)\) part: it gives each item \(i\in DG(1)\) selection probability \(\rho\), except for its deficit \(\delta(s_i)\). Phase II will pay for the benchmark item \(z\), with coefficient \(\rho\), and for the Phase-III deficits of the items in \(DG(1)\). The total-deficit bound says that these deficits are no worse than two deficits of size-\(1/2\) items. The proof of the following is deferred to~\cref{app:lem:charging-total-size-deficits}.

\begin{lemma}
\label{lem:charging-total-size-deficits}
Fix a target value \(\rho>0\), and assume \(b\ge e^{-1/2}\) and
\(\rho\le q_0(a,b)\).  Let
\[
    \delta_{1/2}
    =
    \left(\rho-q_{1/2}(a,b)\right)_+ .
\]
Suppose that
\[
    P_1(a,b)\ge \rho,
\]
\[
    P_1(a,b)+P_2(a,b)\ge \rho+\delta_{1/2},
\]
and
\[
    P_1(a,b)+P_2(a,b)+P_3(a,b)\ge \rho+2\delta_{1/2}.
\]
Then the expected contribution of Phase II and Phase III is at least
\[
    \rho\left(v(DG(1))+v_z\right),
\]
where \(z\in\arg\max_{i\notin DG(1)}v_i\), with the convention that \(v_z=0\) if \(DG(1)=[n]\).
\end{lemma}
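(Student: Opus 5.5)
The argument is a charging scheme that adds the Phase II and Phase III contributions by linearity of expectation. Phase II and Phase III are mutually exclusive events of a single run, so the expected total value is
\[
\sum_{i\in[n]} v_i\Prob{i\textnormal{ selected in Phase II}}+\sum_{i\in DG(1)} v_i\Prob{i\textnormal{ selected in Phase III}}.
\]
Lemma~\ref{lem:size-sensitive-selectability} applied with $s=s_i$ bounds each Phase III term below by $v_i q_{s_i}(a,b)\ge v_i(\rho-\delta(s_i))$. It therefore suffices to show
\[
\sum_{i} v_i\Prob{i\textnormal{ selected in Phase II}}\ \ge\ \rho v_z+\sum_{i\in DG(1)} v_i\delta(s_i).
\]

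Let $x_1,x_2,\dots$ denote the items in decreasing order of value. Lemma~\ref{lem:phase-two-rank-probabilities} gives that Phase II selects $x_j$ with probability at least $P_j(a,b)$. Write $c_i$ for the demand placed on item $i$: this is $\rho$ for $i=z$, $\delta(s_i)$ for $i\in DG(1)$, and $0$ otherwise. Each $c_i\le\rho$ in every case, because $\delta(s)\le\rho$. Every item of $DG(1)$ has positive deficit only if its size is positive, and we assume $\delta(s_i)\le\delta_{1/2}$ (single-item bound of Lemma~\ref{lem:total-deficit-bound}).

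The total deficit is at most $2\delta_{1/2}$ by Corollary~\ref{cor:worst-case-deficit}. We want $\sum_j v_{x_j}P_j\ge \sum_j v_{x_j}c_{x_j}$. By Abel summation with values decreasing, it suffices to prove the prefix-dominance inequality $\sum_{j\le k}P_j\ge \sum_{j\le k}c_{x_j}$ for every $k$. Concretely, write $v_{x_j}=\sum_{k\ge j}(v_{x_k}-v_{x_{k+1}})$ with nonnegative differences, and swap the order of summation.

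The prefix inequalities split into cases by $k$.
\begin{itemize}
\item For $k=1$, the demand $c_{x_1}\le\rho\le P_1$.
\item For $k=2$, at most one of $x_1,x_2$ is $z$. Since $z$ is the most valuable item outside $DG(1)$, the first two ranks contain at most $\rho+\delta_{1/2}$ of demand. The bound is $\rho+\delta_{1/2}$ if one of them is $z$, and at most $2\delta_{1/2}\le\rho+\delta_{1/2}$ otherwise, using $\delta_{1/2}\le\rho$. The second hypothesis then applies.
\item For $k\ge3$, the total demand is at most $\rho+2\delta_{1/2}$. The third hypothesis, together with $P_j\ge0$, handles every such $k$.
\end{itemize}

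The main obstacle is the $k=2$ case. There I need the per-item bound $\delta\le\delta_{1/2}$ and the uniqueness of $z$ to be used together.

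A subtle point is that Lemma~\ref{lem:decomp} and this charging use $v_z$ where $z$ may not be $x_1$. That case is harmless, since prefix dominance only needs demand to sit at ranks no higher than where the supply sits. Finally, adding the two contributions gives at least $\rho(v(DG(1))+v_z)$.
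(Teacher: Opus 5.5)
Your proof is correct and follows the paper's argument: Phase III gives each $i\in DG(1)$ selection probability at least $\rho-\delta_i$, and Phase II pays for $\rho v_z+\sum_{i\in DG(1)}\delta_i v_i$ via prefix domination of the $P_j$ plus summation by parts. The only differences are bookkeeping and two small points. You check prefix sums of the demands directly along the global value order, which replaces the paper's step of sorting the coefficient vector and then invoking rearrangement and coordinatewise domination of the subset's sorted values. Your displayed expression for the total value should be an inequality (``at least''), since Phase III may also accept items outside $DG(1)$. Finally, $\delta(s)\le\rho$ should be justified by $q_s(a,b)\ge 0$ for $s\le 1/2$, which is exactly where the hypothesis $b\ge e^{-1/2}$ enters.
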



\paragraph{Putting Everything Together.}

We now instantiate the refined analysis. Let $\Gamma_b=\ln\frac1b+b-1.$ Recall that
\[
    q_s(a,b)
    =
    \frac ab
    \left[
        (1-b)-\frac{\Gamma_b}{1-s}
    \right].
\]
In particular,
\[
    q_0(a,b)
    =
    \frac ab\left((1-b)-\Gamma_b\right),
\]
and
\[
    q_{1/2}(a,b)
    =
    \frac ab\left((1-b)-2\Gamma_b\right).
\]

We set the target competitive ratio to be $\rho=q_0(a,b)$. Then the condition \(\rho\le q_0(a,b)\) required by Lemma~\ref{lem:total-deficit-bound} holds with equality. The deficit of a size-\(1/2\) item is
\[
\begin{aligned}
    \delta_{1/2} =    \left(\rho-q_{1/2}(a,b)\right)_+  = q_0(a,b)-q_{1/2}(a,b) =
    \frac ab\,\Gamma_b .
\end{aligned}
\]
By Lemma~\ref{lem:total-deficit-bound}, the total Phase-III deficit over all items of \(DG(1)\) is at most \(2\delta_{1/2}\), and the deficit of any single
item is at most \(\delta_{1/2}\). Therefore, by Lemma~\ref{lem:charging-total-size-deficits}, it suffices to verify
\[
    P_1(a,b)\ge \rho,
\]
\[
    P_1(a,b)+P_2(a,b)\ge \rho+\delta_{1/2},
\]
and
\[
    P_1(a,b)+P_2(a,b)+P_3(a,b)
    \ge
    \rho+2\delta_{1/2}.
\]

We choose
\[
    b=0.61867,
    \qquad
    a=0.39189.
\]
For these parameters,
\[
    \rho
    =
    q_0(a,b)
    \approx
    0.17893191,
\]
and
\[
    \delta_{1/2}
    =
    \frac ab\,\Gamma_b
    \approx
    0.06261756.
\]
Moreover,
\[
    q_{1/2}(a,b)
    =
    \rho-\delta_{1/2}
    \approx
    0.11631435.
\]

It remains to verify the three charging inequalities.  Using the formulas
\[
    P_1(a,b)=a\ln\frac ba,
\]
\[
    P_2(a,b)
    =
    a\left(\ln\frac ba-(b-a)\right),
\]
and
\[
    P_3(a,b)
    =
    a\left(\ln\frac ba-2(b-a)+\frac{b^2-a^2}{2}\right),
\]
we obtain
\[
    P_1(a,b)
    \approx
    0.17893338
    >
    \rho
    \approx
    0.17893191,
\]
\[
    P_1(a,b)+P_2(a,b)
    \approx
    0.26899394
    >
    \rho+\delta_{1/2}
    \approx
    0.24154948,
\]
and
\[
    P_1(a,b)+P_2(a,b)+P_3(a,b)
    \approx
    0.31508735
    >
    \rho+2\delta_{1/2}
    \approx
    0.30416704.
\]
Thus the conditions of Lemma~\ref{lem:charging-total-size-deficits} are satisfied. Therefore
\[
    \E{}{\textnormal{Phase II}+\textnormal{Phase III}}
    \ge
    \rho\left(v(DG(1))+v_z\right).
\]
By Lemma~\ref{lem:decomp},
\[
    v(\opt)\le v(DG(1))+v_z.
\]
Hence
\[
    \E{}{\alg}
    \ge
    \rho\,v(\opt)
    \ge
    0.1789\,v(\opt).
\]
Thus Algorithm~\ref{alg:positive} is \(0.1789\)-competitive in expectation.

%% file: A-AbelsImpossibility.tex
\section{Discussion of  Impossibility Result for Ordinal Algorithms from Prior Work}
\label{app:previouswork}

An algorithm is value-ordinal if it does not observe the cardinal values of the items but instead only observes the relative order of item values.  \citet{abels2022knapsack} show that value-ordinal algorithms for knapsack secretary cannot achieve competitive ratio better than $1/(e+1)+o(1),$ even for $1$--$B$ instances,  which are instances where all the items either have size $1/B$ or $1$, for some large $B$.  We note that the family of hard instances we construct are also $1$--$B$ instances.
Their proof also uses linear programming to find the optimal ordinal algorithm on a different family of  hard $1$--$B$ instances.

If one could show that there is a sufficiently small gap between the performance of an optimal cardinal algorithm and the optimal value-ordinal algorithm for their family of $1$--$B$ instances, then, combined with their $1/(e+1)+o(1)$ impossibility for value-ordinal algorithms, it would provide a proof of our main impossibility result. However, as shown below, there exists a cardinal $1/2$-competitive algorithm for their family of hard instances. In contrast, we show that there is an asymptotically negligible gap between the optimal cardinal algorithm and the optimal almost-ordinal algorithm (see Definition~\ref{def: almost ordinal algorithm} for the definition of almost-ordinal) for our  family of hard instances.

The $1$--$B$ instances  used in \citet{abels2022knapsack} have  $n = 2B$ items.
They call items $\{1, \dots, B\}$ \textit{large} (size $1$ items) and items $\{B+1, \dots, 2B\}$ \textit{small} (size $1/B$ items).
They consider two instances, that differ only in the value of item $1$:
\begin{align*}
    I_1: v_i =
    \begin{cases}
        1 + (B - i) \cdot \epsilon, &\text{ if } i \leq B,\\
        1 - i \cdot \epsilon, &\text{ otherwise}
    \end{cases}
    \qquad \qquad
    I_2: v_i = 
    \begin{cases}
        B^2, &\text{ if } i = 1, \\
        1 + (B - i) \cdot \epsilon, &\text{ if } 2 \leq i \leq B, \\
        1 - i \cdot \epsilon, &\text{ otherwise}
    \end{cases}
\end{align*}
These two instances do not suffice for extending the impossibility result to cardinal algorithms. 
Indeed, there is a cardinal algorithm with ratio $1/2$ for these instances that works in two phases: 
\begin{itemize}
    \item \textbf{Phase I:} For the first $B$ arrivals, only accept an item with value $B^2$ and reject all others,
    \item \textbf{Phase II:} For the last $B$ arrivals, if no item was accepted in the first phase, accept all items with value less than $1$, otherwise, reject all items.
\end{itemize}
In instance $I_1$, the optimal solution is $\{B+1, \dots, 2B\}$, with value $\sum_{i=B+1}^{2B}v_i$. 
In this instance, the algorithm accepts any small items that arrive in the second half, therefore, it accepts each small item with probability $1/2$. By linearity of expectation, this gives value $1/2 \cdot \sum_{i=B+1}^{2B}v_i$, so the ratio is $1/2$.
It is also easy to see that this algorithm has ratio at least $1/2$ on instance $I_2$, where the optimal solution is $\{1\}$ with value $B^2$. 
Item $1$ arrives in the first half with probability $1/2$, in which case the algorithm always accepts it.

%% file: B-MissingProofsUpperBound.tex
\section{Omitted Proofs From the Upper Bound}
\label{sec: appendix B}

\subsection{The proof of \cref{lem: instances total variation triangle,lem: instances total variation mapping}}

\TVTriangleInequality*
\begin{proof}
    \begin{align*}
        d_{TV}(X, Y) 
        &= \frac{1}{2} \cdot \sum_{t \in \cT} | \Pr_{X}\left[ X = t \right] - \Pr_{Y}\left[ Y = t \right] | \\
        &= \frac{1}{2} \cdot \sum_{t \in \cT} | \left( \Pr_{X}\left[ X = t \right] - \Pr_{Z}\left[ Z = t \right] \right) + \left( \Pr_{Z}\left[ Z = t \right] - \Pr_{Y}\left[ Y = t \right] \right) | \\
        &\leq \frac{1}{2} \cdot \sum_{t \in \cT} | \Pr_{X}\left[ X = t \right] - \Pr_{Z}\left[ Z = t \right] | + \frac{1}{2} \sum_{t \in \cT} | \Pr_{Z}\left[ Z = t \right] - \Pr_{Y}\left[ Y = t \right] |\\
        &= d_{TV}(X, Z) + d_{TV}(Z, Y),
    \end{align*}
    where the first line is by definition, the second line is by adding and subtracting $\Pr_{Z}\left[ Z = t \right]$, the third line is by triangle inequality of the absolute value and the fourth line is by definition.
\end{proof}

\TVMapping*
\begin{proof}
    The marginal distribution of $f(X)$ is:
    \begin{align*}
        \Pr_{X, f}\left[ f(X) = r \right]
        = \sum_{t \in \cT} \Pr_{X, f}\left[X = t, f(t) = r \right] 
        = \sum_{t \in \cT} \Pr_{X}\left[X = t \right] \cdot \Pr_{f}\left[ f(t) = r \right], \tag{*}
    \end{align*}
    where the second equality is by independence of $f, X$.
    Similarly the marginal distribution of $f(Y)$ is $\Pr_{Y, f}\left[ f(Y) = r \right] = \sum_{t \in \cT} \Pr_{Y}\left[Y = t \right] \cdot \Pr_{f}\left[ f(t) = r \right]$.
    Then we have:
    \begin{align*}
        d_{TV}(f(X), f(Y)) 
        &= \frac{1}{2} \cdot \sum_{r \in \cR} | \Pr_{X, f}\left[ f(X) = r \right] - \Pr_{Y, f}\left[ f(Y) = r \right] | \\
        &= \frac{1}{2} \cdot \sum_{r \in \cR} | \sum_{t \in \cT} \Pr_{f}\left[ f(t) = r \right] \cdot \left( \Pr_{X}\left[ X = t \right] -  \Pr_{Y}\left[ Y = t \right] \right) | \\
        &\leq \frac{1}{2} \cdot \sum_{r \in \cR} \sum_{t \in \cT} \Pr_{f}\left[ f(t) = r \right] \cdot | \Pr_{X}\left[ X = t \right] -  \Pr_{Y}\left[ Y = t \right] | \\
        &\leq \frac{1}{2} \cdot \sum_{t \in \cT} | \Pr_{X}\left[ X = t \right] -  \Pr_{Y}\left[ Y = t \right] | \cdot \sum_{r \in \cR} \Pr_{f}\left[ f(t) = r \right] \\
        &\leq \frac{1}{2} \cdot \sum_{t \in \cT} | \Pr_{X}\left[ X = t \right] -  \Pr_{Y}\left[ Y = t \right] | \\
        &= d_{TV}(X, Y),
    \end{align*}
    where the first line is by definition, the second line is due to (*), the third line is by triangle inequality of the absolute value, the fourth line is by changing the order of summation, the fifth line is because $\sum_{r \in \cR}\Pr_{f}\left[ f(t) = r \right] = 1$ for all $t \in \cT$ and the sixth line is by definition.
\end{proof}

\subsection{The proof of \cref{lem: optimal cardinal algorithms lower bound prob}}

\LemCardLowerBoundProb*
\begin{proof}
    We have that for any $h \in [14]$:
    \begin{align*}
        \frac{\E{\pi}{v(\cardalg^*(W_{h} \cup \{1\}^{n-h}, \pi))}}{v(O( W_{h} \cup \{1\}^{n-h}))}
        &\leq \Pr_{\pi}\left[ h \in \cardalg^*(W_{h} \cup \{1\}^{n-h}, \pi) \right] + \frac{2}{n^2}  \tag{1}
    \end{align*}
    where the inequality follows by \cref{lem: instances reduction to ordinal objective} since $W_{h} \in \largebinom{\{n^{3k}: k \in \mathbbm{N}\}}{h}$.
    Notice that we can define a trivial algorithm $\first \in \cardfam$ that always accepts $\pi(1)$ and rejects everything else, and it is well-defined since $s_i \leq 1$ for all $i \in [n]$. 
    Then we have:
    \begin{align*}
        &\min_{h \in \{0, \dots, 14\}} \frac{\E{\pi}{\cardalg^*(W_h \cup \{1\}^{n-h})}}{v(O(W_h \cup \{1\}^{n-h}))} \\
        \geq\ &\min_{h \in \{0, \dots, 14\}} \frac{\E{\pi}{\first(W_{h} \cup \{1\}^{n-h})}}{v(O(W_{h} \cup \{1\}^{n-h}))}\\
        =\ &\min\left( \frac{1}{n}, \min_{h \in [14]} \frac{\E{\pi}{\first(W_{h} \cup \{1\}^{n-h}, \pi)}}{w_h} \right) \\
        =\ &\min\left( \frac{1}{n}, \min_{h \in [14]} \frac{1}{n} \cdot \frac{w_h + \sum_{i=1}^{h-1}w_i + n-h}{w_h} \right) \\
        =\ &\frac{1}{n}, \tag{2}
    \end{align*}
    where the first line is by optimality of $\cardalg^*$, the second line follows by splitting cases $h=0$, where $v(O(W_{h} \cup \{1\}^{n-h})) = n$ and $h \in [14]$ where $v(O(W_{h} \cup \{1\}^{n-h})) = w_h$, the third line follows because $\first$ accepts only $\pi(1)$ and gets expected value $(\sum_{i=1}^{h}w_h + n-h) / n$ and the fourth line because $(\sum_{i=1}^{h}w_h + n-h) / n \geq 1$.
    By substituting (2) in (1) we have:
    \begin{align*}
        \Pr_{\pi}\left[ h \in \cardalg^*(W_{h} \cup \{1\}^{n-h}, \pi) \right]
        \geq \frac{1}{n} - \frac{2}{n^2}
        = \frac{n - 2}{n^2}
        \geq \frac{1}{n^2},
    \end{align*}
    where the last equality is due to $n \geq 14$ in family $\cI$.
\end{proof}

\subsection{The proof of \cref{lem: upper bound any algorithm is a feasible solution}}

\FeasibleSolutionAlgorithm*
The proof follows by showing that this solution satisfies the four constraints of the Primal LP and we do that in \cref{lem: first constraint,lem: second constraint,lem: third constraint,lem: fourth constraint}.

\begin{restatable}{lemma}{LemFirstConstraint}
\label{lem: first constraint}
    $\sum_{t=1}^{n}\beta^o_t \leq 1$.
\end{restatable}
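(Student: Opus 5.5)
The goal is $\sum_{t=1}^{n}\beta^o_t \le 1$, and the plan is to show that the $\beta^o_t$ are probabilities of pairwise disjoint events. Recall that
\[
\beta^o_t=\Pr_{\pi}\left[ \{\pi(k): k \leq t, v_{\pi(k)} = 1\} \cap \ordalg(0, \pi) = \{\pi(t)\} \right].
\]
On the instance with $h=0$ every item is small, so $v_{\pi(k)}=1$ for all $k$. Define the event
\[
E_t=\{\pi[t]\cap\ordalg(0,\pi)=\{\pi(t)\}\}.
\]
This is the event that, among the first $t$ arrivals, the algorithm has accepted exactly the item $\pi(t)$. Equivalently, it has rejected $\pi(1),\dots,\pi(t-1)$ and accepted $\pi(t)$, so $t$ is the time of the first small-item acceptance. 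Here the probability is taken jointly over $\pi$ and the internal randomness of $\ordalg$, as the paper's convention of omitting $\alg$ from subscripts allows.

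The key step is to show that the events $E_t$ for $t\in[n]$ are pairwise disjoint. Suppose $t<t'$ and both $E_t$ and $E_{t'}$ occur. From $E_t$ we get $\pi(t)\in\ordalg(0,\pi)$. From $E_{t'}$ we get that the only accepted item among $\pi[t']\supseteq\pi[t]$ is $\pi(t')\neq\pi(t)$, which is a contradiction.

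By disjointness,
\[
\sum_{t=1}^n\beta^o_t=\Pr\left[\bigcup_{t} E_t\right]\le 1.
\]
This union is exactly the event that the algorithm accepts at least one small item on the all-small instance.

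There is no real obstacle here. The only care needed is interpreting the definition of $\beta^o_t$ on the instance with $h=0$, where the set $\{\pi(k): k\le t, v_{\pi(k)}=1\}$ is simply $\pi[t]$. One should also note that, for a randomized $\ordalg$, the events live in the joint probability space over $\pi$ and the algorithm's coins, and disjointness holds pointwise in that space.
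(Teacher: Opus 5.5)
Your proposal is correct and follows the same route as the paper: the $\beta^o_t$ are probabilities of the mutually exclusive events ``first small-item acceptance occurs at time $t$'', so their sum equals the probability that at least one small item is accepted, which is at most $1$. The only difference is that you prove the disjointness explicitly, whereas the paper folds it into a one-line appeal to the law of total probability.
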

\begin{proof}
    We have that:
    \begin{align*}
        \sum_{t=1}^{n} \beta^{o}_t 
        &= \sum_{t=1}^{n} \Pr_{\pi}\left[ \{\pi(k): k \leq t, v_{\pi(k)} = 1\} \cap \ordalg(0, \pi) = \{\pi(t)\} \right] \\
        &= \Pr_{\pi}\left[ \{t: v_t = 1\} \cap \ordalg(0, \pi) \neq \emptyset \right] \\
        &\leq 1,
    \end{align*}
    where the first equality is by definition of $\beta^{o}_t$, the second equality is by law of total probability, and the inequality by definition of probability.
\end{proof}

\begin{restatable}{lemma}{LemSecondConstraint}
\label{lem: second constraint}
    $\sum_{t=1}^{n}z_t \cdot \beta^o_t \geq r^o$.
\end{restatable}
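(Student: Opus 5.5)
We need to show $\sum_t z_t \beta^o_t \ge r^o$. The plan is to identify the left-hand side with the normalized expected number of small items accepted on the all-small instance ($h=0$), and then invoke the definition of $\rho$. Recall $r^o = \rho(\ordalg) \le \frac{1}{n}\E{\pi}{|\ordalg(0,\pi)|}$, since $\rho$ is a minimum that includes this term. It therefore suffices to prove
\[
\frac{1}{n}\E{\pi}{|\ordalg(0,\pi)|} = \sum_{t=1}^n z_t \beta^o_t .
\]

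On the instance with $h=0$, every item is small, so the event defining $\beta^o_t$ is exactly the event that the first item accepted by $\ordalg$ is $\pi(t)$. These events are disjoint over $t$. Their complement is the event that nothing is accepted, which contributes $0$ to $|\ordalg(0,\pi)|$.

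On the event that the first acceptance occurs at step $t$, property~2 of \cref{lem: instances properties of optimal algorithms} applies (the algorithm $\ordalg$ was chosen to satisfy these properties). It says the algorithm then accepts every subsequent small item. Since all items are small, $\ordalg$ accepts exactly the items $\pi(t),\dots,\pi(n)$, so $|\ordalg(0,\pi)| = n-t+1 = n z_t$. This count is feasible: the $n$ small items have total size exactly $1$, so accepting all of them never exceeds capacity.

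By the law of total expectation over these disjoint events,
\[
\E{\pi}{|\ordalg(0,\pi)|} = \sum_{t=1}^n (n-t+1)\,\beta^o_t = n\sum_{t=1}^n z_t\beta^o_t .
\]
Dividing by $n$ and combining with $r^o \le \frac1n\E{\pi}{|\ordalg(0,\pi)|}$ gives the claim.

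The only delicate point is the use of property~2 to pin down the count deterministically once the first acceptance happens. Without it we would only know $|\ordalg(0,\pi)| \le n-t+1$, which points the wrong way for this inequality.
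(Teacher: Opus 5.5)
Your proof is correct and is essentially the paper's argument: bound $r^o$ by $\frac1n\E{\pi}{|\ordalg(0,\pi)|}$, then use property~2 of \cref{lem: instances properties of optimal algorithms} to write this expectation as $\sum_t (n-t+1)\beta^o_t$. Your closing remark has the direction backwards, though: on the event that the first acceptance is at step $t$, the bound $|\ordalg(0,\pi)| \le n-t+1$ already gives $r^o \le \frac1n\E{\pi}{|\ordalg(0,\pi)|} \le \sum_t z_t\beta^o_t$, so property~2 is only needed for the equality and not for this inequality.
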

\begin{proof}
    Since $r^o$ is the auxiliary objective of $\ordalg$, we have:
    \begin{align*}
        r^o \leq \rho(\ordalg) \leq \frac{1}{n} \cdot \E{\pi}{|\ordalg(\{1\}^n, \pi)|}. \tag{1}
    \end{align*}
    We have:
    \begin{align*}
        \E{\pi}{ |\ordalg(0, \pi)| } 
        =\ &\sum_{t=1}^{n} (n-t+1) \cdot \Pr\left[ \{\pi(k): k \leq t, v_{\pi(k)} = 1\} \cap \ordalg(0, \pi) = \{\pi(t)\} \right] \\
        =\ &\sum_{t=1}^{n} (n-t+1) \cdot \beta^{o}_{t}, \tag{2}
    \end{align*}
    where the first line is because when $\ordalg$ starts accepting small items at timestep $t$, then $\ordalg$ must accept all the next $1$s as well by \cref{lem: instances properties of optimal algorithms}, and the second line is by definition of $\beta^{o}_t$. 
    By substituting (2) into (1) and dividing by $n$ we show that:
    \begin{align*}
        \sum_{t=1}^{n} \frac{n-t+1}{n} \cdot \beta^{o}_{t} \geq r^{o}.
    \end{align*}
\end{proof}
    
\begin{restatable}{lemma}{LemThirdConstraint}
\label{lem: third constraint}
    $\sum_{t=1}^{n}\sum_{j=0}^{h-1} w^h_{t, j} \cdot \alpha^o_{t, j} \geq r^o$ for all $h \in [14]$.
\end{restatable}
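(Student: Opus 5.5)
The plan mirrors the proof of \cref{lem: second constraint}. Fix $h \in [14]$. Since $r^o = \rho(\ordalg)$ and $\rho$ is a minimum over $h$, we get $r^o \le \Pr_\pi[h \in \ordalg(h,\pi)]$. It therefore suffices to show the identity
\begin{align*}
\Pr_\pi\left[h \in \ordalg(h,\pi)\right] = \sum_{t=1}^n \sum_{j=0}^{h-1} w^h_{t,j}\,\alpha^o_{t,j}.
\end{align*}

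\textbf{Step 1: decompose by arrival position.} Partition the event $\{h \in \ordalg(h,\pi)\}$ according to the position $t$ with $\pi(t)=h$ and the number $j\in\{0,\dots,h-1\}$ of large items arriving before $t$. When $\pi(t)=h$, item $h$ is automatically large and best so far, so $\{\pi(t)=h\}$ is contained in $S^h_{t,j}$ for the appropriate $j$. This gives
\begin{align*}
\Pr_\pi\left[h \in \ordalg(h,\pi)\right] = \sum_{t,j} \Pr\left[\pi(t)=h,\, S^h_{t,j}\right]\cdot \Pr\left[\pi(t)\in \ordalg(h,\pi) \,\middle|\, \pi(t)=h,\, S^h_{t,j}\right].
\end{align*}
The first factor is $w^h_{t,j}$ by definition.

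\textbf{Step 2: identify the conditional acceptance probability with $\alpha^o_{t,j}$.} This is the main obstacle, because $\alpha^o_{t,j}$ is defined in an instance with $j+1$ large items and conditioned only on $S^{j+1}_{t,j}$. I would argue it in two parts.

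First, the almost-ordinal algorithm's information at time $t$ consists of the sizes of the first $t$ items, the values $1$ of the small items among them, and the relative ranks of the large items among them. Conditioned on $S^h_{t,j}$, this prefix information has the same distribution as conditioned on $S^{j+1}_{t,j}$ in the $(j+1)$-large instance. In both cases the $j$ earlier large items occupy a uniformly random $j$-subset of positions in $[t-1]$, their relative order is a uniform permutation, and $\pi(t)$ is a best-so-far large item. The algorithm's decision at time $t$ depends only on this prefix and on its internal randomness, which includes whether it has already committed earlier. Hence the conditional acceptance probabilities agree.

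Second, under $S^h_{t,j}$, additionally conditioning on $\pi(t)=h$ (that $\pi(t)$ is the overall best) only constrains items arriving after $t$, namely that none of the remaining $h-j-1$ large items beats it. Since the future is independent of the prefix given $S^h_{t,j}$, this extra conditioning does not change the distribution of the prefix view.

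For the first part, the cleanest route is a counting argument. Given $S^h_{t,j}$ (respectively $S^{j+1}_{t,j}$), the positions and relative ranks of the prefix large items are uniform in both settings. The later large items never affect the decision at time $t$. The algorithm cannot use its knowledge of $h$, since by \cref{def: almost ordinal algorithm} the instances with the same observed prefix are indistinguishable to it.

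\textbf{Step 3: conclude.} Combining the two steps gives the identity above, and with $r^o \le \Pr_\pi[h\in\ordalg(h,\pi)]$ this yields the constraint $\sum_{t=1}^n\sum_{j=0}^{h-1} w^h_{t,j}\,\alpha^o_{t,j} \ge r^o$ for every $h\in[14]$. Property 3 of \cref{lem: instances properties of optimal algorithms} is not needed for this inequality, since we only count acceptances of $h$ itself, which is always best so far.
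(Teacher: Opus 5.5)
Your proposal is correct and follows the same route as the paper: you bound $r^o$ by $\Pr_\pi[h\in\ordalg(h,\pi)]$, decompose by the arrival position $t$ of item $h$ and the number $j$ of earlier large items to get the weights $w^h_{t,j}$, and identify the conditional acceptance probability with $\alpha^o_{t,j}$ because the algorithm's decision depends only on the observed prefix. The paper handles that last identification in a single sentence, whereas your Step 2 spells out the distributional-equivalence argument: the prefix large items have uniform positions and uniform relative ranks, and the event that $\pi(t)$ is \emph{best overall} is independent of the prefix view given $S^h_{t,j}$.
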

\begin{proof}
    Let $h \in [14]$.
    Achieving auxiliary objective $r^{o}$ in this instance is equivalent to:
    \begin{align*}
        r^o \leq \rho(\ordalg) \leq \Pr_{\pi}\left[ h \in \ordalg(h, \pi) \right]. \tag{1}
    \end{align*}
    We have that:
    \begin{align*}
        \Pr_{\pi}\left[ h \in \ordalg(h, \pi) \right]
        =\ &\sum_{t=1}^{n} \sum_{j=0}^{h-1} \Pr_{\pi}\left[ \pi(t) \in \ordalg(h, \pi), \pi(t) = h, S^h_{t,j} \right] \\
        =\ &\sum_{t=1}^{n} \sum_{j=0}^{h-1} \Pr_{\pi}\left[ \pi(t) = h, S^h_{t,j} \right] \cdot \Pr_{\pi}\left[ \pi(t) \in \ordalg(h, \pi) \middle| \pi(t) = h, S^h_{t,j} \right] \\
        =\ &\sum_{t=1}^{n} \sum_{j=0}^{h-1} \Pr_{\pi}\left[ \pi(t) = h, S^h_{t,j} \right] \cdot \Pr_{\pi}\left[ \pi(t) \in \ordalg(j+1, \pi) \middle| S^{j+1}_{t,j} \right] \\
        =\ &\sum_{t=1}^{n} \sum_{j=0}^{h-1} w^h_{t, j} \cdot \alpha^{o}_{t, j}, \tag{2}
    \end{align*}
    where the first line is by law of total probability, the second line is by Bayes' rule, the third line is because $\ordalg$ accepts $\pi(t)$ only based on information it has observed, so it cannot differentiate between $\pi(t)$ being best so far and best overall or whether the total number of large items in this instance is $j+1$ or $h$, and the fourth line is by definition of $w^h_{t,j}, \alpha^o_{t,j}$.
    By substituting (2) into (1) we show that:
    \begin{align*}
        \sum_{t=1}^{n} \sum_{j=0}^{h-1} w^h_{t, j} \cdot \alpha^{o}_{t, j} \geq r^{o} \quad \forall h \in [14].
    \end{align*}
\end{proof}

\begin{restatable}{lemma}{LemFourthConstraint}
\label{lem: fourth constraint}
    $\alpha^o_{t,j}  + \sum_{t' = 1}^{t-1} D_{t' \rightarrow t, j} \cdot \beta^{o}_{t'} + \sum_{t' = 1}^{t-1} \sum_{j' = 0}^{j-1} A_{t', j' \rightarrow t, j} \cdot \alpha^{o}_{t', j'}  \leq 1$ for all $t \in [n]$, $j \in \{0, \ldots, 13\}$.
\end{restatable}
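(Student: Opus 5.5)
We work in the instance with $h=j+1$ large items and condition on the event $S^{j+1}_{t,j}$: $\pi(t)$ is large, it is best so far, and exactly $j$ large items arrived before it. Since the $j+1$ large items of this instance are then all accounted for, $\pi(t)$ is the best overall. The plan is to show that, conditional on $S^{j+1}_{t,j}$, the three terms on the left-hand side are the probabilities of three \emph{disjoint} events. Their sum is then at most $1$.

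\textbf{The three events.} Define the following, all conditional on $S^{j+1}_{t,j}$.
\begin{itemize}
\item $E_0$: $\ordalg$ accepts $\pi(t)$. By definition, $\alpha^o_{t,j}=\Pr[E_0\mid S^{j+1}_{t,j}]$.
\item $E_1(t',j')$, for $t'<t$ and $j'<j$: $\ordalg$ accepts $\pi(t')$ and $S^{j+1}_{t',j'}$ holds, i.e., $\pi(t')$ was a best-so-far large item with $j'$ large predecessors.
\item $E_2(t')$, for $t'<t$: $\ordalg$ accepts its first small item at time $t'$ and no large item arrived in the first $t'$ positions, i.e., $B^{j+1}_{t'}$ holds.
\end{itemize}

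\textbf{Disjointness.} By parts 1–3 of \cref{lem: instances properties of optimal algorithms}, the algorithm's acceptances are restricted as follows. Once it accepts any item, it either fills the capacity with a large item, or it has accepted a small item and can no longer fit a large one. Moreover, it accepts a first small item only while no large item has arrived. Hence $E_0$, the events $E_1(t',j')$, and the events $E_2(t')$ are pairwise disjoint: at most one "first acceptance" occurs, and it happens at one specific time in one specific state.

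\textbf{Computing the probabilities.} The key step is to show
\[
\Pr[E_1(t',j')\mid S^{j+1}_{t,j}] = A_{t',j'\to t,j}\,\alpha^o_{t',j'}
\quad\text{and}\quad
\Pr[E_2(t')\mid S^{j+1}_{t,j}] = D_{t'\to t,j}\,\beta^o_{t'} .
\]

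For $E_1$, write the probability as $\Pr[S^{j+1}_{t',j'}\mid S^{j+1}_{t,j}]\cdot \Pr[\text{accept }\pi(t')\mid S^{j+1}_{t',j'},S^{j+1}_{t,j}]$. The first factor is $A_{t',j'\to t,j}$ by definition. For the second factor, the algorithm's decision at $t'$ depends only on the observed history up to $t'$: sizes, small values, and relative ranks of large items. I must argue that, given $S^{j+1}_{t',j'}$, the distribution of this history is the same as under $S^{j'+1}_{t',j'}$ in the instance with $j'+1$ large items. This holds because, conditioning on $\pi(t')$ being best so far with $j'$ large predecessors, the positions of the earlier large items are uniform among the first $t'-1$ slots and their relative ranks are uniform. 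Neither distribution depends on the future, including the later event $S^{j+1}_{t,j}$. So the second factor equals $\alpha^o_{t',j'}$. This is the same indistinguishability argument used in \cref{lem: third constraint}.

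For $E_2$, the argument is analogous. The factor $\Pr[B^{j+1}_{t'}\mid S^{j+1}_{t,j}]$ is $D_{t'\to t,j}$. On $B^{j+1}_{t'}$, the history up to $t'$ consists of $t'$ small items, which is identical to the all-small instance. Therefore the probability that the first small acceptance happens at $t'$ is $\beta^o_{t'}/\Pr[B^0_{t'}]$ with $\Pr[B^0_{t'}]=1$, which gives $\beta^o_{t'}$.

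\textbf{Main obstacle.} The step I expect to be hardest is making the conditional-independence claim rigorous. The claim is that conditioning on the future event $S^{j+1}_{t,j}$ does not alter the distribution of the algorithm's observations up to time $t'$ beyond the event $S^{j+1}_{t',j'}$ (respectively $B^{j+1}_{t'}$), and that the algorithm's internal randomness is independent of the future. I would handle this by writing the observation history as a function of the positions of the large items in $[t']$ and of their relative ranks. Under the conditioning, these are uniform given the count $j'$ and the best-so-far condition, regardless of what happens after $t'$.

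Summing the probabilities of the disjoint events $E_0$, $E_1(t',j')$, and $E_2(t')$ and bounding the total by $1$ yields the stated constraint.
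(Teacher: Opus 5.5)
Your proposal is correct and is essentially the paper's argument. The paper bounds $\alpha^o_{t,j}$ by the conditional probability, given $S^{j+1}_{t,j}$, that nothing in $\pi[t-1]$ was accepted. It then splits the complementary event into exactly your disjoint pieces: the first small acceptance at some $t'<t$ together with $B^{j+1}_{t'}$, and the acceptance of a best-so-far large item at some $(t',j')$. These are factored as $D_{t'\to t,j}\,\beta^o_{t'}$ and $A_{t',j'\to t,j}\,\alpha^o_{t',j'}$ by the same history-indistinguishability reasoning you describe.

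One small remark: the pairwise disjointness you need already follows from the capacity constraint alone. A large item fills the knapsack, and a small one leaves only $1-1/n<1$. So this inequality does not actually rely on the properties in \cref{lem: instances properties of optimal algorithms}.
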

\begin{proof}
    We have that:
    \begin{align*}
        \alpha^o_{t, j}
        =\ &\Pr_{\pi}\left[ \pi(t) \in \ordalg(j+1, \pi) \middle| S^{j+1}_{t, j}   \right] \\
        =\ &\Pr_{\pi}\left[ \pi[t] \cap \ordalg(j+1, \pi) = \{\pi(t)\} \middle| S^{j+1}_{t, j}   \right] \\
        \leq\ &\Pr_{\pi}\left[ \pi[t-1] \cap \ordalg(j+1, \pi) = \emptyset \middle| S^{j+1}_{t,j}   \right] \\
        =\ &1 - \Pr_{\pi}\left[ \{k \in \pi[t-1]: v_k = 1\} \cap \ordalg(j+1, \pi) \neq \emptyset  \middle| S^{j+1}_{t,j}   \right] 
        \\ &\qquad \qquad - \Pr_{\pi}\left[ \{k \in \pi[t-1]: v_k \neq 1\} \cap \ordalg(j+1, \pi) \neq \emptyset \middle| S^{j+1}_{t, j}   \right] \\
        =\ &1 - \sum_{t' = 1}^{t-1} \Pr_{\pi}\left[ \min\{k \in [n]: v_{\pi(k)} = 1, \pi(k) \in \ordalg(j+1, \pi) \} = t' \middle| S^{j+1}_{t,j}   \right]  
        \\ &\qquad \qquad - \sum_{t' = 1}^{t-1} \sum_{j'=0}^{j-1} \Pr_{\pi}\left[ \pi(t') \in \ordalg(j+1, \pi), S^{j+1}_{t', j'} \middle| S^{j+1}_{t,j}   \right] \\
        =\ &1 - \sum_{t' = 1}^{t-1} \Pr_{\pi}\left[ \min\{k \in [n]: v_{\pi(k)} = 1, \pi(k) \in \ordalg(j+1, \pi) \} = t', B^{j+1}_{t'} \middle| S^{j+1}_{t,j}   \right]
        \\ &\qquad \qquad - \sum_{t' = 1}^{t-1} \sum_{j' = 0}^{j-1} \Pr_{\pi}\left[ \pi(t') \in \ordalg(j+1, \pi), S^{j+1}_{t', j'} \middle| S^{j+1}_{t,j}   \right]\\
        =\ &1 - \sum_{t' = 1}^{t-1} D_{t' \rightarrow t, j} \cdot \Pr_{\pi}\left[ \min\{k \in [n]: v_{\pi(k)} = 1, \pi(k) \in \ordalg(j+1, \pi) \} = t' \middle| B^{j+1}_{t'}   \right] 
        \\ &\qquad \qquad - \sum_{t' = 1}^{t-1} \sum_{j' = 0}^{j-1} A_{t', j' \rightarrow t, j} \cdot \Pr_{\pi}\left[ \pi(t') \in \ordalg(j+1, \pi)  \middle| S^{j+1}_{t', j'} \right]\\
        =\ &1 - \sum_{t' = 1}^{t-1} D_{t' \rightarrow t, j} \cdot \beta^{\pi}_{t'} - \sum_{t' = 1}^{t-1} \sum_{j' = 0}^{j-1} A_{t', j' \rightarrow t, j} \cdot \alpha^{\pi}_{t', j'},
    \end{align*}
    where the first line is by definition of $\alpha^{o}_{t,j}$, the second line is because $\ordalg$ can only accept large item $\pi(t)$ if it has rejected all previous arrivals by construction of the sizes, the third line follows by upper bounding the probability of $\ordalg$ accepting $\pi(t)$ with the probability that $\pi(t)$ is available, the fourth line follows because the event that we accept some small items and the event that we accept a large item are mutually exclusive by construction of the sizes, the fifth line is by law of total probability, the sixth line is because $\ordalg$ only accepts a small item if no  large item had arrived earlier by \cref{lem: instances properties of optimal algorithms}, the seventh line is because $D_{t' \rightarrow t, j}$ is the probability that there are no large items before $\pi(t')$ conditional on $\pi(t)$ being best so far and $j$ large items before $\pi(t)$, $A_{t', j' \rightarrow t, j}$ is the probability that $\pi(t')$ is best so far with $j'$ large items before $\pi(t')$ conditional on $\pi(t)$ being best so far and $j$ large items before $\pi(t)$ and the fact that $\ordalg$ only makes decisions based on the past so $\pi(t)$ being best so far is independent.
\end{proof}

%% file: C-MissingProofsAlgorithm.tex
\section{Omitted Proofs From Algorithm Analysis}

\subsection{The Proof of Lemma~\ref{lem:decomp}}
\label{app:lem:decomp}

Let \(O\) be an optimal solution. We split \(O\) into its small and large items:
\[
    O_S=\{i\in O:s_i\le1/2\},
    \qquad
    O_L=\{i\in O:s_i>1/2\}.
\]

We first consider the case \(O_L=\emptyset\). Then \(O\) consists only of small items. The fractional density-greedy solution on the small items upper-bounds the value of any feasible solution using only small items. The integral density-greedy solution \(DG(1)\) differs from this fractional solution by at most one boundary item. This boundary item, if it exists, is not in
\(DG(1)\), and therefore has value at most \(v_z\).  Hence
\[
    v(O)\le v(DG(1))+v_z.
\]

It remains to consider the case \(O_L\neq\emptyset\).  Since every large item has size strictly larger than \(1/2\), the set \(O_L\) contains exactly one item. This item is not in \(DG(1)\), because \(DG(1)\) contains only small items. Therefore $v(O_L)\le v_z$. Moreover, the remaining small items in \(O\) have total size strictly less than \(1/2\), i.e., $s(O_S)<\frac12$.

We claim that \(v(O_S)\le v(DG(1))\). If \(s(DG(1))<1/2\), then \(DG(1)\) contains all small items: otherwise, since every remaining small item has size at most \(1/2\), the next small item in density order would still fit. The claim is then immediate.

Otherwise, \(s(DG(1))\ge1/2\). Since \(DG(1)\) consists of the highest-density small items up to total size at least \(1/2\), its value is at least the value of the fractional density-greedy solution of capacity \(1/2\) on the small items. This fractional solution upper-bounds every small-item solution of total size at most \(1/2\), and in particular upper-bounds \(O_S\). Thus $v(O_S)\le v(DG(1)).$

Combining the two bounds gives
\[
    v(O)
    =
    v(O_S)+v(O_L)
    \le
    v(DG(1))+v_z.
\]

\subsection{The Proof of Lemma~\ref{lem:second-phase}}
\label{app:lem:second-phase}

Let \(x\) be the maximum-value item in the whole instance, and condition on \(t_x=\tau\in(a,b]\).  Since \(x\) is the maximum-value item overall, Phase II selects \(x\) if no earlier Phase-II item is selected before \(x\) arrives.

A sufficient condition for this is that the maximum-value item among those arriving before time \(\tau\), if such an item exists, lies in the sample interval \([0,a]\).  Conditional on \(t_x=\tau\), the relative order of the items arriving before \(\tau\) is uniform, and therefore this maximum-value earlier item lies in \([0,a]\) with probability \(a/\tau\).  If no item arrives before \(\tau\), then \(x\) is selected as well, so this only increases the probability. Hence
\[
    \Prob{\textnormal{Phase II selects }x\mid t_x=\tau}
    \ge
    \frac a\tau .
\]

Since \(t_x\) is uniform on \([0,1]\), we integrate over
\(\tau\in(a,b]\):
\[
\begin{aligned}
    \Prob{\textnormal{Phase II selects }x}
    \ge
    \int_a^b \frac a\tau\,d\tau  =
    a\ln\frac ba .
\end{aligned}
\]
This proves the lemma.

\subsection{The Proof of Lemma~\ref{lem:phase-three-probability}}
\label{app:lem:phase-three-probability}

Let $B=\{i:t_i\le b\}$ be the set of items revealed by time \(b\).

First suppose \(B\neq\varnothing\), and let \(y\) be the maximum-value item in \(B\). In this case, the algorithm reaches Phase III if and only if Phase II selects no item. We claim that this happens if and only if \(y\) arrives in Phase I, i.e., at time at most \(a\).  Indeed, if \(y\) arrives by time \(a\), then no item arriving in \((a,b]\) can beat the sample maximum, and so Phase II selects no item. Conversely, suppose that \(y\) arrives in \((a,b]\). If Phase II has not selected any item before time \(t_y\), then \(y\) beats the Phase-I sample maximum and is selected when it arrives.  If Phase II has already selected an item before time \(t_y\), then Phase II selects an item anyway. Thus whenever \(t_y\in(a,b]\), some item is selected in Phase II, and the algorithm does not reach Phase III.

Conditioned on the set \(B\), the arrival time of \(y\) is uniform on \([0,b]\). Hence
\[
    \Prob{\textnormal{the algorithm reaches Phase III}\mid B}
    =
    \Prob{t_y\le a\mid B}
    =
    \frac ab .
\]

If \(B=\emptyset\), then by the additional randomization step the algorithm reaches Phase III with probability \(a/b\). Therefore, for every realization of \(B\),
\[
    \Prob{\textnormal{the algorithm reaches Phase III}\mid B}
    =
    \frac ab .
\]
Taking expectations over \(B\), the algorithm reaches Phase III with probability exactly \(a/b\).

\subsection{The Proof of Lemma~\ref{lem:charging-total-size-deficits}}
\label{app:lem:charging-total-size-deficits}

For every item \(i\in DG(1)\), define its Phase-III deficit by
\[
    \delta_i
    =
    \left(\rho-q_{s_i}(a,b)\right)_+ .
\]
By Lemma~\ref{lem:size-sensitive-selectability}, Phase III selects item \(i\) with probability at least \(q_{s_i}(a,b)\). Equivalently,
\[
    q_{s_i}(a,b)\ge \rho-\delta_i.
\]
Therefore the expected value obtained in Phase III is at least
\[
    \sum_{i\in DG(1)}(\rho-\delta_i)v_i
    =
    \rho\,v(DG(1))-\sum_{i\in DG(1)}\delta_i v_i .
\]

It remains to show that Phase II contributes enough value to pay for
\[
    \rho v_z+\sum_{i\in DG(1)}\delta_i v_i .
\]
By Lemma~\ref{lem:total-deficit-bound}, applied to the size profile of \(DG(1)\), we have
\[
    \sum_{i\in DG(1)}\delta_i\le 2\delta_{1/2},
\]
and, for every \(i\in DG(1)\),
\[
    \delta_i\le \delta_{1/2}.
\]
Moreover, since \(b\ge e^{-1/2}\), the quantities \(q_s(a,b)\) are nonnegative for all \(s\in[0,1/2]\), and hence \(\delta_i\le\rho\). Thus the coefficient vector consisting of the coefficient \(\rho\) for \(z\) and the coefficients \(\{\delta_i:i\in DG(1)\}\), when sorted in non-increasing order, has first prefix sum at most \(\rho\), second prefix sum at most \(\rho+\delta_{1/2}\), and every larger prefix sum at most \(\rho+2\delta_{1/2}\).

Let $V_1\ge V_2\ge \cdots \ge V_n$ be the item values in non-increasing order. Since \(z\), together with the
items of \(DG(1)\), forms a subset of the items of the instance, the sorted value sequence of this subset is coordinate-wise dominated by \((V_1,V_2,\ldots,V_n)\).  Let \(c_1\ge c_2\ge \cdots \ge c_n\) be the required
coefficient vector for this subset, padded with zeros. That is, the nonzero coefficients are \(\rho\) for \(z\) and \(\delta_i\) for each \(i\in DG(1)\). By the bounds above and by the assumptions of the lemma,
\[
    \sum_{j=1}^\ell c_j
    \le
    \sum_{j=1}^\ell P_j(a,b)
    \qquad\text{for every }\ell\ge 1.
\]
Indeed, the first prefix is bounded by \(\rho\), the second by \(\rho+\delta_{1/2}\), and every larger prefix by \(\rho+2\delta_{1/2}\), while the assumptions give the corresponding lower bounds for the first three prefixes of the \(P_j\)'s and all later \(P_j\)'s
are nonnegative.

By the summation-by-parts form of majorization, since
\(V_1\ge V_2\ge \cdots \ge V_n\ge 0\), this prefix domination implies
\[
    \sum_{j=1}^n P_j(a,b)V_j
    \ge
    \sum_{j=1}^n c_jV_j .
\]
By rearrangement, the right-hand side is at least the value of assigning the coefficient \(\rho\) to \(z\) and coefficient \(\delta_i\) to each \(i\in DG(1)\). Therefore
\[
    \sum_{j=1}^n P_j(a,b)V_j
    \ge
    \rho v_z+\sum_{i\in DG(1)}\delta_i v_i .
\]

Now let \(p_j\) denote the actual probability that Phase II selects the \(j\)-th largest-value item. By Lemma~\ref{lem:phase-two-rank-probabilities},
\[
    p_j\ge P_j(a,b)
    \qquad\text{for every }j\ge1.
\]
Since Phase II selects at most one item, its expected contribution is
\[
    \E{}{\textnormal{Phase II}}
    =
    \sum_{j=1}^n p_jV_j
    \ge
    \sum_{j=1}^n P_j(a,b)V_j .
\]
Combining this with the previous display gives
\[
    \E{}{\textnormal{Phase II}}
    \ge
    \rho v_z+\sum_{i\in DG(1)}\delta_i v_i .
\]

Combining the bounds for Phase II and Phase III, we obtain
\[
\begin{aligned}
    \E{}{\textnormal{Phase II}+\textnormal{Phase III}}
    &\ge
    \rho v_z+\sum_{i\in DG(1)}\delta_i v_i
    +
    \rho v(DG(1))-\sum_{i\in DG(1)}\delta_i v_i  \\
    &=
    \rho\left(v(DG(1))+v_z\right).
\end{aligned}
\]